\definecolor{darkblue}{rgb}{0.,0.,0.4}
\definecolor{darkred}{rgb}{0.5,0.,0.}
\DeclareMathOperator{\coker}{\mathrm{coker}}
\DeclareMathOperator{\im}{\mathrm{im}}
\DeclareMathOperator{\ann}{\mathrm{ann}}
\newcommand{\CC}{\mathbb{C}}
\newcommand{\ZZ}{\mathbb{Z}}
\newcommand{\id}{\mathrm{id}}
\newcommand{\FF}{\mathbb{F}} 
\newcommand{\mm}{\mathfrak{m}} 
\newcommand{\ket}[1]{\left| {#1} \right\rangle}
\newcommand{\braket}[2]{\langle {#1} | {#2} \rangle}
\newtheorem{lem}{Lemma}[section]
\newtheorem{thm}[lem]{Theorem}
\newtheorem{cor}[lem]{Corollary}
\newtheorem{prop}[lem]{Proposition}
\theoremstyle{definition}
\newtheorem{rem}[lem]{Remark}
\newtheorem{exc}[lem]{Excercise}
\theoremstyle{plain}
\begin{document}

\title{Algebraic Methods for Quantum Codes on Lattices}
\author{Jeongwan Haah}
\date{17 October 2016}
\affiliation{
Department of Physics, Massachusetts Institute of Technology, Cambridge, Massachusetts, USA}
\affiliation{
Station Q Quantum Architectures and Computation Group, Microsoft Research, Redmond, Washington, USA}
\email{jwhaah@microsoft.com}

\begin{abstract}
This is a note from a series of lectures
at Encuentro Colombiano de Computaci\'on Cu\'antica, 
Universidad de los Andes, Bogot\'a, Colombia, 2015.
The purpose is to introduce additive quantum error correcting codes,
with emphasis on the use of binary representation of Pauli matrices
and modules over a translation group algebra.
The topics include symplectic vector spaces,
Clifford group, cleaning lemma,
an error correcting criterion,
entanglement spectrum,
implications of the locality of stabilizer group generators,
and 
the classification of translation-invariant one-dimensional additive codes
and two-dimensional CSS codes with large code distances.
In particular,
we describe an algorithm to find a Clifford quantum circuit (CNOTs)
to transform any two-dimensional translation-invariant
CSS code on qudits of a prime dimension with code distance being the linear system size,
into a tensor product of finitely many copies of the qudit toric code and a product state.
Thus, the number of embedded toric codes is the complete invariant of these CSS codes
under local Clifford circuits.
\end{abstract}

\maketitle
\tableofcontents

\section{Introduction}

Quantum error correcting codes were invented by Shor who showed
that a quantum computer can in principle be built
out of faulty components~\cite{Shor1995nine, Shor1996Fault-tolerant}.
The basic idea is that despite 
quantum mechanical state vectors and operators form
continuous spaces,
errors can be effectively treated as if they were discrete.
Before long,
a simple and systematic method of slicing the state vector space
into discretely labeled subspaces with help of classical error correcting codes
was discovered by Calderbank and Shor~\cite{CalderbankShor1996Good} 
and Steane~\cite{Steane1996Multiple},
and generalized to what is now known as 
stabilizer codes by Gottesman~\cite{Gottesman1996Saturating}
or symplectic/additive codes by Calderbank, Rains, Shor and 
Sloane~\cite{CalderbankRainsShorEtAl1997Quantum,CalderbankRainsShorSloane1998GF4}.

Another conceptual and practical use of quantum error correcting codes
is provided by Kitaev~\cite{Kitaev2003Fault-tolerant},
who presented a class of exactly solvable local Hamiltonians
associated with quantum error correcting codes,
exhibiting so-called topological order.
Although the phenomenology of a certain topological order
was known by Sachdev and Read~\cite{ReadSachdev1991LargeN},
Kitaev's models facilitated understanding significantly,
and demonstrated how a topologically ordered medium
can be used as a naturally fault-tolerant quantum information processing platform.

In this note,
we present a mathematically coherent and mostly self-contained treatment
of stabilizer/additive/symplectic codes
with applications to quantum spin systems
governed by a translation-invariant local Hamiltonian associated with codes.
We emphasize binary symplectic vector spaces over groups of Pauli matrices.
This makes the appearance of translation-group algebra with coefficients
in the binary field very natural,
which we will mainly study in the later half of the present note.

We discuss neither 
a particular way of designing a quantum error correcting code
nor decoding algorithms thereof.
Also, we do not attempt to develop insight about topological order in general.
Rather, this note is to introduce and review notions from commutative algebra
that the author has found useful and interesting
in the understanding of the cubic code model~\cite{Haah2011Local} and its 
cousins~\cite{
Chamon2005Quantum,
BravyiLeemhuisTerhal2011Topological,
Kim2012qupit,
Yoshida2013Fractal,
VijayHaahFu2015New,
VijayHaahFu2016Fracton}.
Most results in this note is hardly new,
but the derivation of the results
will be sometimes different from existing literature.
The classification theorem \ref{CSS2dClassification}
for two-dimensional translation-invariant CSS
codes has not previously appeared,
and is intimately related to a result of Bomb\'in~\cite{Bombin2011Structure}.
Detailed comparison is given in \ref{BombinResult}.

The author thanks
Cesar Galindo-Martinez and Julia Plavnik
for their hospitality during the workshop in Bogot\'a, Colombia,
and H\'ector Bomb\'in for guiding along his paper~\cite{Bombin2011Structure}.
The author was supported by Pappalardo Fellowship in Physics while at MIT.

\section{Additive/Stabilizer/Symplectic codes}

The set of all $2 \times 2$ matrices acting on $\CC^2 = \mathrm{span}~\{ \ket 0, \ket 1 \}$
has a linear basis consisting of
\begin{align}
 I = \begin{pmatrix} 1 & 0 \\ 0 & 1 \end{pmatrix}, \quad
 X = \begin{pmatrix} 0 & 1 \\ 1 & 0 \end{pmatrix}, \quad 
 Y = \begin{pmatrix} 0 & -i \\ i & 0 \end{pmatrix}, \quad
 Z = \begin{pmatrix} 1 & 0 \\ 0 & -1 \end{pmatrix},
\end{align}
called {\bf Pauli matrices}. They square to be the identity, are hermitian, and satisfy
\begin{align}
 XY = iZ,\quad YZ = iX,\quad ZX = iY.
\end{align}
Thus, any pair of non-identity Pauli matrices anti-commute.
As a complex algebra, we can further reduce the generating set to $\{X,Z\}$ since $Y = -iZX$.
The matrix algebra on $(\CC^2)^{\otimes n}$ is the $n$-fold tensor product of the
$2 \times 2$ matrix algebra, of which the generating set can be chosen as
the set of all $n$-fold tensor products of the Pauli matrices.
To avoid lengthy phrasing, let us say just {\bf Pauli operators} 
to mean $n$-fold tensor products of Pauli matrices.

An {\bf additive code}~\cite{CalderbankRainsShorSloane1998GF4,CalderbankRainsShorEtAl1997Quantum}
or {\bf stabilizer code}~\cite{Gottesman1996Saturating}
is a subspace of $n$-qubit Hilbert space%
\footnote{In functional analysis, 
a Hilbert space refers to an inner product space
that is complete in the induced metric topology.
For finite dimensional vector spaces,
the completeness, meaning that every Cauchy sequence converges,
follows trivially from the completeness of the real numbers.
So, the wording ``Hilbert space'' is superfluous in our setting
where we only consider finite dimensional spaces.
Nonetheless, we will keep using this terminology
whenever we are referring to the complex vector space of all state vectors.
}
that is defined as the common eigenspace of a set of commuting Pauli operators of eigenvalue $+1$.
The common eigenspace is referred to as {\bf code space} to distinguish it from other related entities.
Any vector in the code space is a {\bf code vector}.
The defining Pauli operators or any product of them are called {\bf stabilizers}.
The stabilizers form the multiplicative {\bf stabilizer group}.

\begin{exc}
Show that the stabilizer group of nonzero additive code space does not contain $-1$.
\label{NoMinusOne}
\hfill $\diamond$ \end{exc}

How do we test whether a pair of Pauli operators commute?
Since $X$ and $Z$ generate the matrix algebra,
we see that any Pauli operator $P$ can be written, for example, as
\begin{align}
 P = \eta 
 \underbrace{(X \otimes I \otimes X \otimes \cdots \otimes X )}_{x = 101\cdots 1}
 \underbrace{(Z \otimes Z \otimes I \otimes \cdots \otimes I )}_{z = 110\cdots 0}
 = \eta X(x) Z(z)
\end{align}
where $\eta = \pm 1 , \pm i$, a fourth root of unity.
We have associated bit strings $x$ and $z$ to keep record of the positions of nontrivial tensor factors.
It is now clear that any Pauli operator is uniquely 
specified by the overall phase factor $\eta$ and two bit strings of length $n$.
The commutation relation between a pair of Pauli matrices is then
calculated as
\begin{align}
&\eta_1 X(x_1) Z(z_1) \eta_2 X(x_2) Z(z_2) \nonumber \\ 
 &= \eta_1 \eta_2 X(x_1)\left[ (-1)^{ x_2 \cdot z_1 } X(x_2) Z(z_1) \right] Z(z_2) \nonumber \\
 &= \eta_1 \eta_2 (-1)^{ x_2 \cdot z_1 } X(x_1 + x_2 \mod 2) Z(z_1+ z_2 \mod 2) \label{eq:bitaddition}\\
 &= \eta_1 \eta_2 (-1)^{ x_2 \cdot z_1 }X(x_2)X(x_1) Z(z_2) Z(z_1)\nonumber \\
 &= \eta_1 \eta_2 (-1)^{ x_2 \cdot z_1 }X(x_2)\left[ (-1)^{x_1 \cdot z_2} Z(z_2) X(x_1) \right] Z(z_1) \nonumber \\
 &= (-1)^{ x_2 \cdot z_1 - x_1 \cdot z_2} \eta_2 X(x_2)Z(z_2) \eta_1 X(x_1)Z(z_1).\label{eq:commuted}
\end{align}
Hence, any pair of Pauli operators commute or anti-commute,
and the two cases are distinguished by 
\begin{align}
 -x_2 \cdot z_1 + x_1 \cdot z_2 \mod 2.
\end{align}
More generally,
we can consider $d \times d$ matrix algebra acting on $\CC^d$ with generators
\begin{align}
 X_d =
 \begin{pmatrix}
  0 & 1 &  &  & \\
    & 0 & 1&  & \\
  \vdots &   & 0&1 & \\
  0  &   &  &  & \ddots \\
  1 & 0 & \cdots & &
 \end{pmatrix}, \quad
Z_d =
\begin{pmatrix}
 1 & & & & \\
 & \omega & & & \\
 & & \omega^2 & &\\
 & & & \ddots & \\
 & & & & \omega^{d-1}
\end{pmatrix}
\label{eq:XZdMatrices}
\end{align}
where $\omega = \exp( 2\pi i / d)$.
Due to the commutation relation
\begin{align}
 X_d Z_d = \omega Z_d X_d ,
 \label{eq:XZdCommutation}
\end{align}
any pair of these generalized Pauli operators $X_d(x)Z_d(z)$ and $X_d(x') Z_d(z')$
commutes up to $\omega^m$ where
\begin{align}
 m = -x' \cdot z + x \cdot z'  \pmod d 
 = \begin{pmatrix} x & z \end{pmatrix}
 \begin{pmatrix} 0 & \id \\ -\id & 0 \end{pmatrix}
 \begin{pmatrix} x' \\ z' \end{pmatrix}
 \pmod d .
\label{eq:symp}
\end{align}
When $x=x'$ and $z=z'$, we trivially have $m = 0$.

\begin{exc}
Verify \eqref{eq:symp} by deriving the analogue of \eqref{eq:commuted}.
\hfill $\diamond$ \end{exc}

\begin{exc}
Let $G$ be the group of $3 \times 3$ matrices of form
\begin{align}
h_{a,b,c} = \begin{pmatrix}
1 & a & c \\ 0 & 1 & b \\ 0 & 0 & 1
\end{pmatrix}
\end{align}
where $a,b,c \in \ZZ/d\ZZ$ and the group operation is the matrix multiplication.
Show that the matrices in \eqref{eq:XZdMatrices} together with $\omega = \exp(2\pi i / d)$
form a representation of the group $G$.
(The group $G$ has a name, {\bf Heisenberg group} over $\ZZ/d\ZZ$.)
{\it Hint}: What is the commutator of $h_{1,0,0}$ and $h_{0,1,0}$?
What does $h_{0,0,1}$ correspond to?
\hfill $\diamond$ \end{exc}

We may regard the {\em dit} string $x,z$ 
as one dit string of length $2n$~\cite{Gottesman1999qudit,Rains1999Nonbinary}.
If $d$ is a prime number, then we may further regard the dit string of length $2n$ 
as a $2n$-dimensional vector over the finite field $\FF_d$.%
\footnote{At this stage, it is less clear why we need the field rather than just additive group $\ZZ / d\ZZ$.
See \eqref{eq:bitaddition}.
It is actually more of a technical convenience rather than an essential ingredient.
However, some of our claims we will make below depend on the fact that $\ZZ/d\ZZ = \FF_d$ is a field.
See Ref.~\cite{Farinholt2013} for discussions regarding composite numbers $d$.
}
Upon multiplication of two Pauli operators, the corresponding dit string is added modulo $d$,
which can be interpreted as the vector addition.
In addition, we see that Eq.~\eqref{eq:symp} introduces a symplectic form on this vector space.
This symplectic form will be central to further development,
and it is thus necessary to understand the symplectic structure thoroughly.

\subsection{Symplectic vector spaces}

Let $\FF$ be any field and $V$ denote a vector space over $\FF$.
A bilinear form $\lambda : V \times V \to \FF$ is called {\bf symplectic}
or {\bf alternating} if 
\begin{align}
\lambda( v, v ) = 0 \quad \forall v \in V .
\end{align}
If the association
\begin{align}
V \ni v \mapsto \lambda(v, \cdot ) \in V^*
\end{align}
from $V$ to its dual vector space $V^*$
is bijective,
then we say $\lambda$ is {\bf non-degenerate}.
When $V$ is finite dimensional, we can express the $\lambda$ as a matrix $\Lambda$ given a basis of $V$,
and the bilinear form $\lambda$ is non-degenerate if and only if the matrix $\Lambda$ has a nonzero
determinant.
Below we will not distinguish the form $\lambda$ from its matrix representation $\Lambda$
whenever the basis choice is clear.

\begin{exc}
Show that the matrix representation of the symplectic form is skew-symmetric ($\Lambda^T = - \Lambda$),
but a skew-symmetric matrix may not yield a symplectic form if the field is of characteristic 2.
\hfill $\diamond$ \end{exc}

In order to understand the symplectic space better,
we will find a canonical basis.
To this end, we consider a variant of Gram-Schmidt orthogonalization
for inner product spaces.
Let $V$ be a $n$-dimensional symplectic space, not necessarily non-degenerate.
With respect to $\lambda$, we can consider the symplectic complement 
\begin{align}
W^\perp = \{ v \in V ~:~ \lambda(v,w) = 0 \quad \forall w \in W \}
\end{align}
of any subspace $W$.
(Some authors call it as the ``orthogonal'' complement.)
\begin{lem}
If $v,w \in V$ satisfy $\lambda(v,w) \neq 0$,
then 
\begin{align}
V = \mathrm{span}\{ v, w\} \oplus (\mathrm{span}\{v,w\})^\perp .
\end{align}
\end{lem}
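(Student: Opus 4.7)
The plan is to verify the two conditions that define a direct sum decomposition: that $U := \mathrm{span}\{v,w\}$ and $U^\perp$ intersect trivially, and that they together span $V$. A preliminary remark is that $v$ and $w$ must be linearly independent, since if $w = cv$ then $\lambda(v,w) = c\lambda(v,v) = 0$, contradicting the hypothesis; hence $\dim U = 2$.

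For the trivial-intersection condition, I would take an arbitrary element $u = av + bw \in U$ and impose $u \in U^\perp$, i.e.\ $\lambda(u,v) = \lambda(u,w) = 0$. Using $\lambda(v,v) = \lambda(w,w) = 0$ (alternating property) and $\lambda(w,v) = -\lambda(v,w)$, these two equations reduce to $-b\,\lambda(v,w) = 0$ and $a\,\lambda(v,w) = 0$. Since $\lambda(v,w) \neq 0$ is invertible in $\FF$, this forces $a = b = 0$, so $U \cap U^\perp = \{0\}$.

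For the spanning condition $V = U + U^\perp$, I would give an explicit projection onto $U$. Set $\alpha = \lambda(v,w)$ and, for any $x \in V$, define
\begin{align}
\pi(x) = \frac{\lambda(x,w)}{\alpha}\, v \;-\; \frac{\lambda(x,v)}{\alpha}\, w \;\in\; U.
\end{align}
A direct bilinearity calculation, again using $\lambda(v,v) = \lambda(w,w) = 0$ and $\lambda(w,v) = -\alpha$, gives $\lambda(\pi(x), v) = \lambda(x,v)$ and $\lambda(\pi(x), w) = \lambda(x,w)$, so $x - \pi(x) \in U^\perp$, yielding the decomposition $x = \pi(x) + (x - \pi(x))$.

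There is essentially no obstacle in this argument; the only point to watch is that the construction uses the inverse of $\alpha$ in $\FF$, which is legitimate precisely because $\lambda(v,w) \neq 0$, and that the alternating identity $\lambda(w,v) = -\lambda(v,w)$ continues to produce the desired cancellations even in characteristic $2$ (where it simply reads $\lambda(w,v) = \lambda(v,w)$). So the proof works uniformly over any field.
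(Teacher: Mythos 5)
Your proof is correct and is essentially the same argument as the paper's: the paper normalizes $w' = w/\lambda(v,w)$ and writes the same explicit decomposition $x = \lambda(x,w')v' - \lambda(x,v')w' + y$, which after unwinding the normalization is exactly your projection $\pi(x)$, and proves $W \cap W^\perp = 0$ by the same direct computation. Your preliminary remark on linear independence is a minor addition (implicit in the paper's phrase ``form a basis''), but otherwise the two proofs coincide.
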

\begin{proof}
``$\supseteq$'' is trivial by definition.
Let $v' = v$ and $w' = w/ \lambda(v,w)$ form a basis for $W =\mathrm{span} \{v,w\}$
so that $\lambda(v',w') = 1$.
The decomposition
\begin{align}
x = -\lambda(x,v')w' + \lambda(x,w')v' + \underbrace{x + \lambda(x,v')w' - \lambda(x,w')v'}_{y}
\end{align}
for an arbitrary $x \in V$ proves ``$\subseteq$'' since
\begin{align}
\lambda(y,v') &= \lambda(x,v') + \lambda(x,v')\lambda(w',v') - \lambda(x,w')\lambda(v',v') = 0,\\
\lambda(y,w') &= \lambda(x,w') - \lambda(x,v')\lambda(w',w') - \lambda(x,w')\lambda(v',w') = 0.
\end{align}
If $z \in W \cap W^\perp$, then $z = a v' + b w'$ for some $a,b \in \FF$ 
and $\lambda(z,v') = \lambda(z,w') = 0$,
which imply $a = 0$ and $b = 0$.
Therefore, the sum $W + W^\perp$ is direct.
\end{proof}
A two-dimensional subspace on which the symplectic form is non-degenerate,
as in the lemma,
is called a {\bf hyperbolic plane}.
Note that the matrix representation of the symplectic form for the hyperbolic plane is
\begin{align}
\begin{pmatrix}
0 & 1 \\ -1 & 0 
\end{pmatrix} .
\end{align}

Suppose we have an unstructured basis $\{v_1, \ldots, v_n \}$ where $\lambda$ is not always zero.
By examining all values $\lambda(v_i,v_j)$,
we can find a hyperbolic plane.
By the lemma, the symplectic complement of the span of $\{ v_i , v_j \}$ has smaller dimension,
and we can inductively proceed to decompose the space.
At some point the decomposition may encounter a
subspace on which the symplectic form vanishes,
which is called an {\bf isotropic} subspace.
The proof of the lemma gives an algorithm to find a canonical basis,
which is essentially the same as the Gram-Schmidt
orthogonalization for inner product spaces.
We arrive at a structure theorem of finite dimensional symplectic spaces.
\begin{prop}
Any finite dimensional symplectic vector space is a direct sum of hyperbolic planes
and an isotropic subspace.
In particular, a non-degenerate symplectic vector space is even dimensional.
\label{SymplecticSpaceStructure}
\end{prop}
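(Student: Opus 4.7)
The plan is to prove this by induction on $\dim V$, using the hyperbolic-plane splitting lemma repeatedly.

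First, I would handle the trivial case: if $\lambda$ vanishes identically on $V$, then $V$ is isotropic by definition, and the claim holds with no hyperbolic summands. Otherwise, there exist $v, w \in V$ with $\lambda(v, w) \neq 0$, so the preceding lemma gives a direct sum decomposition $V = H \oplus H^\perp$ where $H = \mathrm{span}\{v,w\}$ is a hyperbolic plane. The key observation is that the restriction of $\lambda$ to $H^\perp$ is again an alternating bilinear form (alternation is inherited by restriction to any subspace), so $H^\perp$ is itself a symplectic vector space in the sense of the preceding definitions, and $\dim H^\perp = \dim V - 2 < \dim V$. By the inductive hypothesis, $H^\perp$ decomposes as a direct sum of hyperbolic planes and an isotropic subspace, and glueing this with $H$ yields the desired decomposition of $V$.

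For the ``in particular'' clause, suppose $V$ is non-degenerate and write $V = H_1 \oplus \cdots \oplus H_k \oplus I$ with the $H_i$ hyperbolic and $I$ isotropic. I would argue that $I$ must be zero: any $u \in I$ satisfies $\lambda(u, u') = 0$ for all $u' \in I$ by isotropy, and $\lambda(u, h) = 0$ for all $h \in H_i$ because $I \subseteq H_i^\perp$ in the decomposition (this is the content of the direct-sum splitting, since each $H_i$ was peeled off by the lemma as a complement of its symplectic complement, which contains the rest). Hence $u$ pairs trivially with every element of $V$, so non-degeneracy forces $u = 0$. Then $\dim V = 2k$ is even.

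I do not expect any real obstacle; the only thing to be a little careful about is justifying that the decomposition produced by iterating the lemma is genuinely a direct sum respecting $\lambda$ orthogonality across the summands, which follows because at each stage we split off a hyperbolic plane as the lemma dictates and continue inside its symplectic complement. The induction is finite because dimension drops by $2$ at each step.
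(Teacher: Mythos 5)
Your proof is correct and matches the paper's own argument: the paper likewise iterates the hyperbolic-plane splitting lemma, passing to the symplectic complement at each step, until what remains is isotropic. Your justification of the ``in particular'' clause (that non-degeneracy forces the isotropic summand to vanish, since it pairs trivially with every summand) is a detail the paper leaves implicit, and you handle it correctly.
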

\begin{exc}
Show that the dimension of an isotropic subspace of a non-degenerate symplectic space of dimension $2n$
is at most $n$.
\hfill $\diamond$ \end{exc}

\subsection{Automorphisms of symplectic spaces}

By definition, an automorphism $A$ of a symplectic vector space
$V$ is an invertible linear map from $V$ to itself such that
\begin{align}
\lambda( Av, Aw ) = \lambda( v,w) \quad \forall v,w \in V .
\label{PreservingSymplecticForm}
\end{align}
With respect to a basis of $V$
 the condition for the matrix $A$ to be an automorphism is 
\begin{align}
A^T \lambda A = \lambda, \quad \det A \neq 0 
\end{align}
where the superscript $T$ denotes the transpose.
If $\lambda$ is non-degenerate, the second condition $\det A \neq 0$ is redundant.

Let us choose a canonical basis of $2n$-dimensional non-degenerate symplectic vectors space where
the matrix representation of the symplectic form is
\begin{align}
\lambda_n =
\begin{pmatrix}
0 & \id \\ -\id & 0
\end{pmatrix}
\label{eq:symplecticMatrix}
\end{align}
where $\id$ stands for the $n \times n$-identity matrix.
Thus, for the unit column vectors $e_i$, where the sole nonzero 1 appears at $i$-th component $(i = 1,\ldots, n)$,
we have $\lambda(e_i, e_{i+n}) = 1 = - \lambda(e_{i+n},e_i)$,
and all the other symplectic pairings vanish.

A few elementary automorphisms can be found by solving an equation,
\begin{align}
&\begin{pmatrix}
a & b \\ c & d
\end{pmatrix}^T
\begin{pmatrix}
0 & 1 \\ -1 & 0
\end{pmatrix}
\begin{pmatrix}
a & b \\ c & d
\end{pmatrix}
=
\begin{pmatrix}
0 & 1 \\ -1 & 0
\end{pmatrix} \Leftrightarrow 
ad-bc = 1.
\end{align}
Thus, the automorphism group for a hyperbolic plane is $SL(2,\FF)$,
which is generated by three types of elements
\begin{align}
S = \begin{pmatrix}
 1 & 0 \\ a & 1
\end{pmatrix}
\text{ where } a \in \FF, \quad
H = \begin{pmatrix}
 0 & 1 \\ -1 & 0
\end{pmatrix},\quad \text{ and }
R = \begin{pmatrix}
 a & 0 \\ 0 & a^{-1}
\end{pmatrix}
\text{ where } a \in \FF^\times .
\label{ElementaryZeroDSymplecticTransformation}
\end{align}

If $n \ge 2$, we see that there is another automorphism $C$
\begin{align}
 C|_{\langle e_i , e_j , e_{i+n} , e_{j+n} \rangle } = 
\begin{pmatrix}
 1 & 0 & & \\
 a & 1 & & \\
  &  & 1 & -a \\
  &  & 0 & 1
\end{pmatrix} \text{ where } a \in \FF, ~~1 \le i \neq j \le n
\label{CNOT}
\end{align}
where we only displayed the action of $C$ on the four dimensional subspace,
on the complement of which $C$ acts by the identity.
It is often useful to think of the four elementary {\bf symplectic transformations}
(automorphisms)
as row operations on a column vector.
$S$ adds $a$-times $i$-th component to $(i+n)$-th component.
$H$ interchanges $i$-th and $(i+n)$-th with an extra sign.
$C$ adds $a$-times $i$-th component to $j$-th component ($j \neq i \le n$)
while $(-a)$-times $(n+j)$-th to $(n+i)$-th.

\begin{prop}
Let $\Sigma$ be an $s$-dimensional isotropic subspace 
of a $2n$-dimensional non-degenerate symplectic space $\FF^{2n}$.
There exists a symplectic transformation that maps $\Sigma$ 
onto the span of $\{e_1,\ldots, e_s\}$, a subset of the canonical basis vectors of $\FF^{2n}$.
Moreover, the symplectic complement $\Sigma^\perp$ is isomorphic to
$\Sigma \oplus W$ for some non-degenerate symplectic subspace $W$.
\label{IsotropicSubspaceComplement}
\end{prop}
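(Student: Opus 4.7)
The plan is to prove the first assertion by producing a symplectic basis of $\FF^{2n}$ whose first $s$ vectors span $\Sigma$; the linear map sending such a basis to the canonical basis $(e_1,\ldots,e_{2n})$ is then a symplectic transformation by construction, and it carries $\Sigma$ onto $\mathrm{span}\{e_1,\ldots,e_s\}$. So the whole content reduces to building this symplectic basis, and I would do so by induction on $s$.

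For $s=0$ there is nothing to do: Proposition \ref{SymplecticSpaceStructure} directly furnishes a canonical basis of $\FF^{2n}$. For $s \geq 1$, pick any nonzero $v_1 \in \Sigma$. Non-degeneracy of $\lambda$ yields some $w_1 \in \FF^{2n}$ with $\lambda(v_1,w_1)=1$ after rescaling, and the hyperbolic-plane lemma gives $\FF^{2n} = H_1 \oplus H_1^\perp$ with $H_1 = \mathrm{span}\{v_1,w_1\}$ and $H_1^\perp$ a $(2n-2)$-dimensional non-degenerate symplectic subspace.

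The key step is to verify that $\Sigma = \langle v_1 \rangle \oplus \Sigma'$ where $\Sigma' := \Sigma \cap H_1^\perp$ is an $(s-1)$-dimensional isotropic subspace of $H_1^\perp$. Decomposing $x \in \Sigma$ as $\alpha v_1 + \beta w_1 + y$ with $y \in H_1^\perp$, the isotropy of $\Sigma$ gives $0 = \lambda(x,v_1) = -\beta$, so $x = \alpha v_1 + y$ with $y \in \Sigma \cap H_1^\perp$; the directness of the sum follows because any nonzero multiple of $v_1$ pairs nontrivially with $w_1$ and therefore lies outside $H_1^\perp$, which in turn forces $\dim \Sigma' = s-1$. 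The inductive hypothesis applied to $\Sigma' \subseteq H_1^\perp$ produces a symplectic basis $(v_2,\ldots,v_n,w_2,\ldots,w_n)$ of $H_1^\perp$ with $\mathrm{span}\{v_2,\ldots,v_s\} = \Sigma'$, and then $(v_1,\ldots,v_n,w_1,\ldots,w_n)$ is the desired symplectic basis of $\FF^{2n}$.

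For the moreover part, apply the transformation just constructed to reduce to the case $\Sigma = \mathrm{span}\{e_1,\ldots,e_s\}$. A block computation using $\lambda_n$ from \eqref{eq:symplecticMatrix} gives $\lambda(e_i,v) = v_{n+i}$ for $i \leq n$, so $\Sigma^\perp = \mathrm{span}\{e_1,\ldots,e_n,e_{n+s+1},\ldots,e_{2n}\}$; setting $W = \mathrm{span}\{e_{s+1},\ldots,e_n,e_{n+s+1},\ldots,e_{2n}\}$ yields $\Sigma^\perp = \Sigma \oplus W$, and $W$ is a direct sum of the hyperbolic planes $\langle e_{s+i},e_{n+s+i}\rangle$, hence non-degenerate. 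I expect the main obstacle to be precisely the dimension count $\dim \Sigma' = s-1$: it hinges on $v_1 \notin H_1^\perp$, which guarantees that the single constraint ``pair trivially with $w_1$'' cuts $\Sigma$ down by exactly one dimension rather than collapsing it further, and without it the induction would not close.
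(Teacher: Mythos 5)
Your proof is correct, but it takes a structurally different route from the paper's. You build a symplectic basis of $\FF^{2n}$ extending $\Sigma$ by induction on $s$: peel off one hyperbolic plane $H_1 = \langle v_1, w_1\rangle$ with $v_1 \in \Sigma$ using the hyperbolic-plane lemma, verify $\Sigma = \langle v_1\rangle \oplus (\Sigma\cap H_1^\perp)$ via the isotropy constraint, and recurse on $H_1^\perp$. The paper instead runs a concrete matrix algorithm: it writes a basis of $\Sigma$ as columns of a $2n\times s$ matrix and reduces it to $\left(\begin{smallmatrix}\id & 0 \\ 0 & 0\end{smallmatrix}\right)$ (upper/lower blocks) by interleaving column operations (change of basis of $\Sigma$) with the named elementary symplectic row operations $S$, $H$, $R$, $C$, using the isotropy relation $(\Sigma'')^T\lambda_n\Sigma''=0$ to kill the offending submatrix. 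What your approach buys is a shorter, coordinate-free argument that leans on the already-proved hyperbolic-plane lemma and structure theorem \ref{SymplecticSpaceStructure}. What the paper's approach buys is an explicit algorithm expressed entirely in the generators $S,H,R,C$, which is exactly what the subsequent text needs: the exercise proving \ref{SymplecticTransformationGenerators} is told to reuse this strategy, and the same Gaussian-elimination template is adapted later in \ref{1Dpreclassification} and \ref{CSS2dClassification} to produce locality-preserving Clifford circuits. Your proof establishes existence but would leave those downstream algorithmic claims to be re-derived separately.
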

\begin{proof}
We provide an algorithm to find the symplectic transformation (automorphism)
using $S$, $H$, $R$, and $C$.
Let us identify the subspace $\Sigma$ with a matrix of basis vectors written in the columns.
This matrix is $2n \times s$.
Any column operation on the matrix $\Sigma$ is just a different choice of basis vectors
for the space $\Sigma$.
If we could find a combination of the four elementary symplectic transformations
and column operations
such that the matrix $\Sigma$ is transformed to 
\begin{align*}
\begin{pmatrix}
\id & 0 \\
0 & 0 \\
\hline
0 & 0 \\
0 & 0
\end{pmatrix},
\end{align*}
then the composition of the elementary symplectic transformations
is the desired transformation.
We inserted a horizontal line to distinguish upper and lower half blocks.

Using $C$, we see that any row operation on the upper-half block can be made
symplectic by a suitable row operation on the bottom-half block.
Hence, for the upper half-block, we can freely employ row and column operations
to the matrix $\Sigma$ to obtain
\begin{align*}
\Sigma' = \begin{pmatrix}
1 & 0_{1 \times (s-1)} \\
0_{(n-1) \times 1} & \star_{(n-1) \times (s-1)} \\
\hline
a & \star \\
b_{(n-1)\times 1} & \star
\end{pmatrix}.
\end{align*}
Then, $S$ on the 1st and $(n+1)$st row can make the entry $a$ to zero.
$H$'s on the all but 1st and $(n+1)$st row bring the submatrix $b$
to the upper half block, and a subsequent row operation can annihilate it.
As a result, we obtain
\begin{align*}
\Sigma'' = \begin{pmatrix}
1 & 0_{1 \times (s-1)} \\
0_{(n-1) \times 1} & \star_{(n-1) \times (s-1)} \\
\hline
0 & c_{1 \times (s-1)} \\
0_{(n-1)\times 1} & \star
\end{pmatrix}.
\end{align*}
Being isotropic, it satisfies $(\Sigma'')^T \lambda_n \Sigma'' = 0$
where $\lambda_n$ is in \eqref{eq:symplecticMatrix}.
This equation tells us that the submatrix $c$ has to be zero.
We have reduced the dimension in the problem: $s \to s-1, n \to n-1$.
The desired transformation is found by recursion.

The second statement is a corollary of the first
since the symplectic complement of $\{e_1,\ldots,e_s\}$
is the span of $\{e_1, \ldots, e_s\}$ plus the span of $\{e_{s+1},\ldots, e_{n}, e_{n+s+1},\ldots,e_{2n}\}$,
where the latter is non-degenerate.
\end{proof}

\begin{prop}
$S$, $H$, $R$, and $C$ generate the full automorphism group of a finite dimensional 
non-degenerate symplectic space.
\label{SymplecticTransformationGenerators}
\end{prop}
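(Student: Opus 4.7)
The plan is to induct on $n$, using Proposition \ref{IsotropicSubspaceComplement} as the main reduction tool. The base case $n = 1$ is the statement that $S$, $H$, and $R$ generate $SL(2, \FF) = Sp(2, \FF)$, a standard Gauss-elimination exercise on $2 \times 2$ matrices of determinant one.

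For the inductive step, given $A \in Sp(2n, \FF)$, I left-multiply $A$ by a word in the generators in order to reduce it to the identity. Stage one: apply Proposition \ref{IsotropicSubspaceComplement} to the $1$-dimensional (hence isotropic) subspace $\mathrm{span}\{A e_1\}$ to obtain a word $B_1$ with $B_1 A e_1 \in \mathrm{span}\{e_1\}$; a subsequent $R$ on the first coordinate normalizes this to $B_1 A e_1 = e_1$. Stage two: set $v := B_1 A e_{n+1}$. Symplecticity gives $\lambda(e_1, v) = \lambda(e_1, e_{n+1}) = 1$, and the pairing identity $\lambda(e_1, \cdot) = (\cdot)_{n+1}$ forces the $(n+1)$-th component of $v$ to equal $1$. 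I then build a word $B_2$ in the generators that fixes $e_1$ and sends $v$ to $e_{n+1}$.

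The tools for stage two are the generators $S$, $H$, $R$ applied to coordinates $i \ne 1$ and the $C$-type transformations with first index different from $1$, all of which manifestly fix $e_1$, together with the conjugate $\widetilde{S}_1 := H_1^{-1} S_1 H_1$; this conjugate is a word in the generators and acts on the first hyperbolic plane as an upper-triangular shear, hence also fixes $e_1$. Using one $C$ for each $j = 2, \dots, n$ to zero $v_{n+j}$, then for each $i = 2, \dots, n$ an $H_i$ followed by a single $C$ to clean both $v_i$ and the $v_{n+i}$ created by $H_i$ without disturbing $v_1$, and finally $\widetilde{S}_1$ to kill $v_1$, one arrives at $B_2 B_1 A e_{n+1} = e_{n+1}$.

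Stage three: for any remaining column $c$ of $B_2 B_1 A$, symplecticity gives $\lambda(e_1, c) = \lambda(e_{n+1}, c) = 0$, forcing $c_1 = c_{n+1} = 0$. The matrix is therefore block-diagonal, consisting of the identity on $\mathrm{span}\{e_1, e_{n+1}\}$ and a $2(n-1) \times 2(n-1)$ symplectic matrix on the complement $\mathrm{span}\{e_2, \dots, e_n, e_{n+2}, \dots, e_{2n}\}$. The inductive hypothesis reduces the latter to the identity, and the generators of $Sp(2(n-1), \FF)$ used there lift to generators of $Sp(2n, \FF)$ by acting trivially on the first hyperbolic plane. The main obstacle in the whole argument is stage two: one must exhibit a generator sequence that simultaneously fixes $e_1$ and kills every component of $v$ except the $(n+1)$-th, which is why $\widetilde{S}_1$ is essential---$S_1$ itself does not fix $e_1$.
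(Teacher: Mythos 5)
Your proof is correct and follows the hint the paper gives for this statement, which is left as an exercise: reduce the automorphism matrix to the identity using $S$, $H$, $R$, $C$ in the spirit of the proof of \ref{IsotropicSubspaceComplement}. The key device of replacing $S_1$ (a lower shear on the first hyperbolic plane, which moves $e_1$) with the conjugate $\widetilde{S}_1 = H_1^{-1}S_1H_1$ (the corresponding upper shear, which fixes $e_1$) is exactly right, and the block-diagonal reduction via $\lambda(e_1,c)=\lambda(e_{n+1},c)=0$ correctly closes the induction.
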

\begin{exc}
Prove \ref{SymplecticTransformationGenerators}. 
{\em Hint}:
Transform the matrix of an automorphism
into the identity matrix by $S$, $H$, $R$, and $C$,
using the similar strategy as in the proof of \ref{IsotropicSubspaceComplement}.
\hfill $\diamond$ \end{exc}

\subsection{Logical operators}

We return to the discussion of Pauli operators.
We have seen that the multiplicative group of Pauli operators
is rather simple, since, if we ignore the overall phase factors $\pm 1, \pm i$,
the group is actually abelian. More precisely,
\begin{prop}
If $\mathcal P$ denotes the multiplicative group of all Pauli operators including $\pm 1, \pm i$
acting on $n$-qubit Hilbert space $(\mathbb C^2)^{\otimes n}$,
then $\mathcal P / \langle i \rangle$ is an abelian group, which is isomorphic to the additive group 
$(\ZZ/2\ZZ)^{2n}$.
\end{prop}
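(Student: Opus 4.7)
The plan is to build the isomorphism explicitly using the canonical form $P = \eta X(x) Z(z)$ with $\eta \in \{\pm 1, \pm i\}$ and $x,z \in \{0,1\}^n$ that was established just above. The calculation \eqref{eq:commuted} already shows that for any two Pauli operators $P_1,P_2$, the commutator $P_1 P_2 P_1^{-1} P_2^{-1}$ equals $(-1)^{x_2 \cdot z_1 - x_1 \cdot z_2}$, a scalar lying in $\langle i \rangle$. So the commutator subgroup of $\mathcal{P}$ is contained in $\langle i \rangle$, which in particular tells me $\langle i \rangle$ is normal and that the quotient $\mathcal{P}/\langle i \rangle$ is abelian. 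That handles the first half of the statement almost for free.

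For the isomorphism, I would define
\begin{align}
\varphi : \mathcal{P} \to (\ZZ/2\ZZ)^{2n}, \qquad \eta X(x) Z(z) \mapsto (x,z).
\end{align}
First I would argue that this is well-defined on $\mathcal{P}$: the earlier paragraph established that every Pauli operator is \emph{uniquely} specified by the triple $(\eta, x, z)$, so assigning $(x,z)$ is unambiguous. Next I would check $\varphi$ is a group homomorphism. Using \eqref{eq:bitaddition}, the product of $\eta_1 X(x_1) Z(z_1)$ and $\eta_2 X(x_2) Z(z_2)$ has $X$-part $x_1 + x_2 \bmod 2$ and $Z$-part $z_1 + z_2 \bmod 2$, regardless of the phase prefactor, so $\varphi(P_1 P_2) = \varphi(P_1) + \varphi(P_2)$ in $(\ZZ/2\ZZ)^{2n}$.

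Surjectivity is immediate: for any $(x,z) \in (\ZZ/2\ZZ)^{2n}$ the element $X(x)Z(z) \in \mathcal{P}$ maps to $(x,z)$. To compute the kernel, suppose $\varphi(\eta X(x)Z(z)) = 0$, i.e.\ $x = z = 0$. Then $P = \eta I$ with $\eta \in \langle i \rangle$, so $\ker \varphi = \langle i \rangle$. The first isomorphism theorem then yields
\begin{align}
\mathcal{P}/\langle i \rangle \cong (\ZZ/2\ZZ)^{2n},
\end{align}
which finishes the proof.

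The only place that needs real care is the uniqueness of the form $P = \eta X(x) Z(z)$; without it, well-definedness of $\varphi$ would fail. This uniqueness was asserted in the expository text but really relies on the Pauli matrices $\{X(x)Z(z)\}_{x,z}$ being linearly independent as operators on $(\CC^2)^{\otimes n}$, which is a standard fact (they form an operator basis). I would either cite this or verify it quickly by observing that $X(x)Z(z)$ maps the computational basis element $\ket{v}$ to a phase times $\ket{v+x}$, so distinct $(x,z)$ yield matrices with distinct supports or distinct phase patterns.
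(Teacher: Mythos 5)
Your proof is correct and takes the same route the paper leaves implicit: the proposition is stated without a formal proof, as a summary of the preceding paragraphs establishing the unique $\eta X(x)Z(z)$ form and the multiplication rule \eqref{eq:bitaddition}, and your homomorphism $\varphi$ with kernel $\langle i \rangle$ is exactly the formalization of that discussion. Your closing remark about why the $(\eta,x,z)$ decomposition is unique (linear independence of the $X(x)Z(z)$ as operators) correctly flags and closes the one step the paper's text glosses over.
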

\noindent
It is a useful coincidence that this additive group can be regarded as a vector space $\FF_2^{2n}$ over $\FF_2$.
Moreover, the commutation relation naturally endows this vector space
with a symplectic form.
The form is non-degenerate because $X$ and $Z$ on $i$-th qubit defines a hyperbolic plane,
the direct sum of which is the whole space.
We use by convention an ordered canonical basis $e_1,\ldots, e_{2n}$ on the symplectic space $\FF_2^{2n}$
in which the symplectic matrix is as in \eqref{eq:symplecticMatrix}.
So, the Pauli operator $P(e_i)$ corresponding to $e_i$ is equal up to a phase factor to
\begin{align}
 P(e_i) = \begin{cases} X_i & \text{ if } 1 \le i \le n, \\ Z_i &\text{ if } n+1 \le i \le 2n \end{cases}
\end{align}
where the subscript $i$ denote the sole qubit that is acted on nontrivially by the designated Pauli matrix.

We now introduce the simplest additive code on $n$ qubits.
Let $\{ Z_i : i = 1,\ldots, n-k \}$ be a commuting set of Pauli operators.
What is the code space, the common ($+1$)-eigenspace of the stabilizers?
It is obvious that any such eigenvector (code vector) must be of form
\begin{align}
\ket 0 \otimes  \cdots \otimes \ket 0 \otimes \ket \psi
\label{eq:trivialCodeVectors}
\end{align}
for some vector $\ket \psi \in (\CC^2)^{\otimes k}$,
and we can identify the code space with this $(\CC^2)^{\otimes k}$.
There exist Pauli operators $X_{n-k+1}, Z_{n-k+1},\ldots, X_n, Z_n$
that generate the operator algebra on the code space.
These Pauli operators are called logical operators.%
\footnote{
The term ``logical'' comes from the intended use of the code in an error correction scheme,
where information is redundantly encoded into a physical system
shielding the ``logical level'' from errors, and the logical operators are those
that transform the encoded information.
Of course, in this oversimplified example, there is no protection.
}
The logical operators may appear in various guises.
For example, $Z_1 Z_n$ has the same action on the code space as $Z_n$ since
$(Z_1 Z_n) (Z_n^{-1}) = Z_1$ acts by the scalar $+1$ on the code space
by construction.
Formally, any Pauli operator that maps the code vector into the code space 
is called a {\bf logical operator}.
The logical operator $P$ is said to be {\bf equivalent} to another logical operator $P'$
if $P'P^{-1}$ is a stabilizer up to a phase factor.
Sometimes, the logical operator is termed as a {\em nontrivial logical operator}
in order to distinguish it from a {\em trivial logical operator},
where the latter is nothing but a stabilizer.
\begin{exc}
Show that a Pauli operator is a (trivial or nontrivial) logical operator
if and only if it commutes with every stabilizer.
\label{LogicalOperatorsCommutant}
\hfill $\diamond$ \end{exc}

Let us translate the discussion around the simplest example into the language of symplectic spaces.
The stabilizers generate an abelian multiplicative group, called {\bf stabilizer group} $\mathcal S$.
Since it is abelian, the corresponding symplectic space is isotropic.
By \ref{LogicalOperatorsCommutant}, we see that the set $\mathcal L$
of all trivial and nontrivial logical operators
corresponds to the symplectic complement of this isotropic subspace: $\mathcal L = \mathcal S^\perp$.
By \ref{IsotropicSubspaceComplement}, $\mathcal S^\perp$
contains $\mathcal S$ and a non-degenerate subspace which is isomorphic to $\mathcal S^\perp / \mathcal S$.
Any nonzero element in this quotient space corresponds to a nontrivial logical operator.
Two different logical operators may have the same action on the code space,
which is precisely captured by the quotient space $\mathcal S^\perp / \mathcal S$.
All these statements on the symplectic space does not rely on a specific basis.
What does the basis change (automorphism) in the symplectic space over $\FF_2$ corresponds to
in the space of operators on the Hilbert space over $\CC$?

\subsection{Clifford group}

A unitary operator on $n$-qubit Hilbert space (over $\CC$) is called {\bf Clifford} 
if it maps any Pauli operator to a Pauli operator.
Any Clifford operator $U$ induces a linear map $A$ in the corresponding symplectic space.
This is easily proved as
\begin{align*}
U P( v + w) U^\dagger &\propto UP(v)U^\dagger UP(w)U^\dagger,\\
\pm P( A(v+w) ) &\propto P(A(v)) P(A(w)) \propto P(A(v) + A(w))
\end{align*}
where $P(v)$ denotes a Pauli operator on $n$ qubits
specified by the bit string of length $2n$.
Since $U$ is invertible, the induces linear map $A$ is also invertible.
Recall that the symplectic form $\lambda$ is defined by
\begin{align}
 P(v)P(w) = (-1)^{\lambda(v,w)} P(w) P(v).
\end{align}
Conjugating this equation by $U$,
we see that $A$ preserves the symplectic form in the sense of \eqref{PreservingSymplecticForm}.
Therefore, a Clifford unitary induces an automorphism of the symplectic space.

We have found a generating set of the symplectic group on $\FF_2^{2n}$.
Are the generators induced from Clifford unitary operators?
Consider the following unitaries.
\begin{align}
 U_H = \frac{1}{\sqrt{2}} \begin{pmatrix} 1 & 1 \\ 1 & -1 \end{pmatrix}, \quad
 U_S = \begin{pmatrix} 1 & 0 \\ 0 & i \end{pmatrix}, \quad
 U_{CNOT} = \begin{pmatrix} 1 & 0 & & \\ 0 & 1 & & \\ & & 0 & 1 \\ & & 1 & 0 \end{pmatrix}.
\label{CliffordGenerators}
\end{align}
They are called Hadamard gate, phase gate, and controlled-not gate, respectively.
It is straightforward to verify that $U_H$ interchanges $X$ and $Z$ by conjugation.
Therefore, $U_H$ induces the elementary symplectic transformation $H$ of \ref{SymplecticTransformationGenerators}.
\begin{exc}
Show that $U_S$ and $U_{CNOT}$ are Clifford, 
and induce $S$ and $C$ of \ref{SymplecticTransformationGenerators}, respectively.
{\em Hint}: A linear operator is determined by the image of basis vectors.
\hfill $\diamond$ \end{exc}
By \ref{SymplecticTransformationGenerators}, it follows that any symplectic automorphism
is induced by some Clifford unitary operator. This is summarized by stating
\begin{prop}
There exists a surjective group homomorphism from the Clifford group on $n$ qubits 
to the symplectic automorphism group on $\FF_2^{2n}$.
\end{prop}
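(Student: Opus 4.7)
The plan is to reuse the map $\Phi\colon U \mapsto A_U$ that the preceding paragraphs already constructed, where $A_U$ is the unique linear endomorphism of $\FF_2^{2n}$ characterized by $U P(v) U^\dagger = \pm P(A_U v)$ for all $v$, and which is already shown to be a symplectic automorphism whenever $U$ is Clifford. Two things remain to verify: that $\Phi$ is a group homomorphism, and that every symplectic automorphism lies in its image.

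For the homomorphism property I would take two Clifford unitaries $U_1, U_2$ and compute
\[
(U_1 U_2) P(v) (U_1 U_2)^\dagger = U_1 \bigl(U_2 P(v) U_2^\dagger\bigr) U_1^\dagger \propto U_1 P(A_{U_2} v) U_1^\dagger \propto P(A_{U_1} A_{U_2} v),
\]
so the uniqueness of the induced bit string (the map $\mathcal P / \langle i \rangle \to \FF_2^{2n}$ is an isomorphism) forces $A_{U_1 U_2} = A_{U_1} A_{U_2}$. Since $\Phi(\id) = \id$ trivially, $\Phi$ is a group homomorphism.

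For surjectivity I would invoke \ref{SymplecticTransformationGenerators}, which states that $S, H, R, C$ generate the full symplectic group. Over $\FF_2$ the generator $R$ collapses to the identity because $\FF_2^\times = \{1\}$, so it suffices to exhibit Clifford unitaries that $\Phi$-maps to each of $S$, $H$, and $C$. The text has already verified $\Phi(U_H) = H$, and the preceding exercise records $\Phi(U_S) = S$ and $\Phi(U_{CNOT}) = C$. Since $\Phi$ is a homomorphism whose image contains a generating set of the target group, $\Phi$ is surjective.

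The main obstacle, such as it is, lies in justifying those three generator correspondences, but this reduces to the direct conjugation calculations $U_H X U_H^\dagger = Z$, $U_H Z U_H^\dagger = X$, $U_S X U_S^\dagger = iXZ$, $U_S Z U_S^\dagger = Z$, and the four conjugates of $X \otimes I, I \otimes X, Z \otimes I, I \otimes Z$ by $U_{CNOT}$, all routine once the basis convention is fixed. A mild conceptual point worth flagging, though not actually needed for the statement, is that $\Phi$ is far from injective: its kernel contains the full Pauli group (modulo phases, any Pauli conjugates another Pauli to a scalar multiple of itself) together with the overall phase factors, so the proposition is genuinely only a surjection, not an isomorphism.
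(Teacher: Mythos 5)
Your argument matches the paper's: the induced map on $\FF_2^{2n}$ preserves the symplectic form, and surjectivity follows because the generators of the symplectic group from Proposition \ref{SymplecticTransformationGenerators} are realized by $U_H$, $U_S$, $U_{CNOT}$. You make explicit two points the paper leaves implicit, namely the verification that $U \mapsto A_U$ is a homomorphism and the observation that $R$ collapses to the identity over $\FF_2$ so no fourth Clifford generator is required; both are correct and worth stating.
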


\begin{exc}
Generalize this to qudits of a prime dimension $d$.
{\em Hint}:
The symplectic transformation $R$ of \ref{SymplecticTransformationGenerators}
needs to be included.
Find the unitary that induces $R$.
The dimension $d$ being prime means that $\FF_d \xrightarrow{ \times a} \FF_d$
for $a \in \FF_d^\times$ is a permutation.
\hfill $\diamond$ \end{exc}

A natural question is then what the kernel of this homomorphism $\varphi$ is.
To find the kernel, suppose $\varphi(U) = \id$, which is to say that
\begin{align}
U P(v) U^\dagger = \eta P(v)
\end{align}
for some phase factor $\eta = \pm 1, \pm i$ that may depend on $v$.
If the action of $U$ on the generators of the Pauli group,
then $U$ is uniquely determined.
The generators of the Pauli group are the Pauli operators $X_i$ and $Z_i$ for qubit $i=1,\ldots,n$.
Since $X_i$ and $Z_i$ are hermitian, we must have $U X_i U^\dagger = \pm X_i$ and
$U Z_i U^\dagger = \pm Z_i$.
Suppose $n=1$, $U X U^\dagger = - X$, and $U Z U^\dagger = Z$.
One solution to these equations is $U = Z$, and we knew that there is a unique solution.
For general $n$, one can assume $U$ is a tensor product of single qubit operators,
and for each factor one finds a Pauli matrix component of $U$.
Thanks to the uniqueness, a solution is the answer.
Therefore, we conclude that the kernel of $\varphi$ is equal to the Pauli group 
(up to an arbitrary phase factor,
which does not alter the conjugation action and hence we ignore).
The Clifford group is generated by Pauli group and three types of elements of \eqref{CliffordGenerators}.
Actually, the Pauli group is generated by the elements of \eqref{CliffordGenerators}.
The phase gate $U_S$ squares to become $Z$.
The Hadamard conjugates it to $X$.
$X$ and $Z$ generates the Pauli group. Hence, we have
\begin{prop}
The three types of Clifford unitary operators of \eqref{CliffordGenerators} generates the full Clifford group.
\end{prop}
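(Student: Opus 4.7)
The plan is to assemble this proposition from two pieces already in place: the surjective homomorphism $\varphi$ from the Clifford group onto the symplectic automorphism group of $\FF_2^{2n}$, and the identification of $\ker\varphi$ with the Pauli group (modulo overall phase). Given an arbitrary Clifford unitary $U$, I first kill its symplectic image by multiplying on the right by a suitable word in $U_H, U_S, U_{CNOT}$; the leftover factor lies in the kernel, i.e.\ is a Pauli operator, which I then rewrite in the same three generators.

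Step one: let $A = \varphi(U) \in \mathrm{Sp}(\FF_2^{2n})$. By Proposition \ref{SymplecticTransformationGenerators}, $A$ factors as a finite product of elementary symplectic transformations of types $H$, $S$, $C$ acting on various (pairs of) coordinates; note the transformation $R$ of \eqref{ElementaryZeroDSymplecticTransformation} can be dropped over $\FF_2$ since $\FF_2^\times = \{1\}$. By the exercise preceding the proposition, $U_H$, $U_S$, $U_{CNOT}$ realize $H$, $S$, $C$ respectively under $\varphi$. Replacing each elementary symplectic factor of $A$ by the corresponding Clifford gate on the appropriate qubit(s) and taking the product in the same order produces a Clifford unitary $V$ with $\varphi(V) = A$. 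Hence $W := UV^{-1}$ lies in $\ker\varphi$, and by the previous paragraph of the excerpt $W$ equals a Pauli operator $P$ up to an overall phase, which we are ignoring.

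Step two: exhibit the single-qubit Paulis $X$ and $Z$, and therefore all of $\mathcal P$, in terms of $U_H, U_S, U_{CNOT}$. A one-line computation gives $U_S^2 = \mathrm{diag}(1,-1) = Z$ on a single qubit, and $U_H Z U_H = X$ because $U_H$ swaps the standard basis vectors up to sign. Tensor products of these single-qubit Paulis generate the entire Pauli group (again modulo phase), so $W$, and hence $U = WV$, admits an expression as a product of the three gates in \eqref{CliffordGenerators}.

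Main obstacle: there is essentially none of mathematical substance, as the two ingredients above do all the work. The only subtlety worth being explicit about is the bookkeeping of global phase factors; the statement of the proposition (like the preceding lines of the excerpt) implicitly quotients the Clifford group by arbitrary scalar phases, so we do not need to realize the phase subgroup $\langle i \rangle$ from the three generators — and indeed we cannot, since $U_H, U_S, U_{CNOT}$ generate only a discrete set of phases.
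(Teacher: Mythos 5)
Your argument is correct and follows essentially the same route as the paper: lift the factorization of $\varphi(U)$ into elementary symplectic generators up to a Clifford word via the realizations $U_H \mapsto H$, $U_S \mapsto S$, $U_{CNOT} \mapsto C$, observe that the residue lies in $\ker\varphi = \mathcal P$ (mod phase), and then generate the Paulis from $U_S^2 = Z$ and $U_H Z U_H = X$. Your remark that $R$ is unneeded over $\FF_2$ because $\FF_2^\times$ is trivial, and your explicit note about the implicit quotient by global phase, make the bookkeeping slightly more transparent than the paper's terse version, but the substance is identical.
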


Now, suppose we have an additive code defined by a set of stabilizers on $n$ qubits.
We have learned that this stabilizer group $\mathcal S$ 
corresponds to an isotropic subspace $\Sigma$ in the symplectic space $\FF_2^{2n}$.
Let $s = \dim_{\FF_2} \Sigma \le n$.
By \ref{IsotropicSubspaceComplement},
there exists a symplectic transformation that maps $\Sigma$ onto the span of $e_1, \ldots, e_s \in \FF_2^{2n}$.
Since $\varphi$ is surjective, there exists a Clifford unitary that turns the stabilizer group
to $\langle X_1, \ldots, X_s \rangle$ up to signs.
The set of all logical operators after this Clifford unitary is precisely $\langle X_{s+1}, Z_{s+1}, \ldots, X_n, Z_n \rangle$.
This leads to an important conclusion.
Let us say that a set of Pauli operators are {\bf independent} if
the corresponding binary vectors are linearly independent over $\FF_2$.
\begin{thm}
Any stabilizer code defined by $s$ independent stabilizers on $n$ qubits
has code space dimension $2^{n-s}$.
There exist Pauli logical operators that generate the full operator algebra acting on the code space.
\label{ExistenceCompleteLogicalPauli}
\end{thm}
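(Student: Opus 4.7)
The plan is to use the surjective homomorphism $\varphi$ from the Clifford group onto the symplectic automorphism group in order to conjugate the stabilizer group into a canonical form where both the code space and its operator algebra can be read off directly. First, I would identify the stabilizer group $\mathcal{S}$ (modulo overall phases) with an isotropic subspace $\Sigma \subset \FF_2^{2n}$; the hypothesis that the $s$ generating stabilizers are independent matches linear independence in $\FF_2^{2n}$, so $\dim_{\FF_2}\Sigma = s$. By \ref{IsotropicSubspaceComplement} there is a symplectic automorphism carrying $\Sigma$ onto $\mathrm{span}\{e_1,\ldots,e_s\}$, and by surjectivity of $\varphi$ this automorphism is induced by some Clifford unitary $U$. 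Thus $U \mathcal{S} U^\dagger$ is generated, up to overall phases, by $\epsilon_i X_i$ with $\epsilon_i \in \{\pm 1\}$ for $i=1,\ldots,s$.

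Next, since conjugation by $U$ is a Hilbert-space isomorphism, the original code space has the same dimension as the common $+1$ eigenspace of $\{\epsilon_i X_i\}_{i=1}^s$. Each $\epsilon_i X_i$ is a single-qubit Pauli with a one-dimensional $+1$ eigenspace (either $|+\rangle$ or $|-\rangle$), and these operators act on disjoint qubits, so the joint eigenspace factors as the tensor product of these one-dimensional lines on the first $s$ qubits with the full $(\CC^2)^{\otimes (n-s)}$ on the remaining qubits. This has dimension $2^{n-s}$, establishing the first statement.

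For the second statement, \ref{LogicalOperatorsCommutant} identifies the logical operators of the transformed code with the Pauli operators commuting with every $\epsilon_i X_i$. In particular, $X_{s+1}, Z_{s+1}, \ldots, X_n, Z_n$ are logical operators, and under the tensor-product decomposition above they restrict to the standard Pauli generators of the full operator algebra on the $(\CC^2)^{\otimes(n-s)}$ factor, which is exactly the operator algebra of the transformed code space. Conjugating back by $U^\dagger$ produces Pauli logical operators of the original code that generate its full code-space algebra.

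The only potentially delicate point is the sign ambiguity $\epsilon_i$ arising because $U$ is only a preimage of the symplectic automorphism under $\varphi$. I expected this to be the main obstacle, but it turns out to be harmless: the dimension count above is independent of the signs (only the one-dimensional $|+\rangle$ vs.\ $|-\rangle$ lines on the first $s$ qubits change), and the commutation argument for logical operators does not see the signs either. If a sign-free canonical form is desired, one can conjugate further by $\prod_{i:\,\epsilon_i=-1} Z_i$, which is Clifford and flips the offending signs via $Z_i X_i Z_i^{-1} = -X_i$ while fixing the other $X_j$; this is consistent with \ref{NoMinusOne}.
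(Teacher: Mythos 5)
Your proposal is correct and follows essentially the same route as the paper, which derives the theorem in the paragraph immediately preceding it by applying \ref{IsotropicSubspaceComplement} together with the surjectivity of $\varphi$ to reduce to the trivial code $\langle X_1,\ldots,X_s\rangle$ up to signs. You simply spell out the details the paper leaves implicit, including the observation that the sign ambiguity is harmless and can be removed by conjugating by a suitable Pauli $Z$.
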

\begin{exc}
Generalize \ref{ExistenceCompleteLogicalPauli}
to qudits $\CC^d$ of prime dimension $d$.
\hfill $\diamond$ \end{exc}
Since the code space dimension is always a power of $2$,
it is convenient to work with the exponent 
\begin{align}
k = n - s,
\end{align}
which is called {\bf the number of encoded or logical qubits}.

\subsection{Cleaning lemma}

We have studied the Pauli group $\mathcal P$,
focusing on its abelianization $\mathcal P / \langle i \rangle$,
which happens to be a vector space,
and the symplectic structure provided by commutation relations.
We converted the code space into the simplest one that we clearly understand,
like the trivial example of additive code defined in \eqref{eq:trivialCodeVectors}.
We did so by considering the largest set of transformations,
the Clifford group,
mapping Pauli operators to Pauli operators.

We are now going to discuss error correction.
For this purpose, the notion of locality is of central importance,
and we should not make a transformation that breaks the notion of locality.
In terms of the vector space associated with the Pauli group,
the locality demands us to use a particular basis.

We say a {\bf region} to mean a subset of qubits.
An operator is said to be {\bf supported} on a region $M$,
if it acts by identity on the complement of $M$.
The {\bf support} of an operator is the minimal region on which the operator is supported.
We learned that given an additive code, there exists a set of logical Pauli operators.
Where are they supported?
The logical operators are the interface of the code space to the external world,
so it is important to know locate them precisely.
For a region $M$, let $\ell_M$ be the largest number of independent logical operators supported on $M$.
Here, the notion of independence is more restrictive than 
what we used to define independent stabilizers in \ref{ExistenceCompleteLogicalPauli}.
Given a stabilizer group $\mathcal S$ and the corresponding isotropic subspace $\Sigma \subset \FF_2^{2n}$,
a set of {\bf independent logical operators} is one that maps to
a linearly independent set in the quotient vector space $\FF_2^{2n} / \Sigma$.

\begin{prop}
\begin{align*}
\ell_M + \ell_{M^c} = 2k.
\end{align*}
\label{LogicalOperatorCounting}
\end{prop}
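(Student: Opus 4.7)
The plan is to translate the statement into a linear-algebraic identity in the symplectic space $\FF_2^{2n}$ and then exploit the duality between $M$ and $M^c$. For any region $N$, let $V_N \subseteq \FF_2^{2n}$ be the subspace of vectors supported on $N$, i.e.\ the span of the canonical basis vectors $e_i, e_{i+n}$ with $i \in N$, so that a Pauli operator $P(v)$ is supported on $N$ precisely when $v \in V_N$. A Pauli operator supported on $N$ is a logical operator iff $v \in V_N \cap \Sigma^\perp$, and by the definition of independence such vectors are counted modulo $\Sigma$, whose intersection with $V_N$ is the subspace $V_N \cap \Sigma$ of stabilizers supported on $N$. Hence
\[
\ell_N = \dim (V_N \cap \Sigma^\perp) - \dim (V_N \cap \Sigma).
\]

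The geometric input I would invoke is that $V_{M^c} = V_M^\perp$: disjoint supports commute so $V_{M^c} \subseteq V_M^\perp$, and the dimensions match since the ambient symplectic form on $\FF_2^{2n}$ is non-degenerate. Once this is in hand, I would apply the standard dimension identity $\dim(A+B)+\dim(A\cap B)=\dim A + \dim B$ together with the duality $(A+B)^\perp = A^\perp \cap B^\perp$ and $\Sigma^{\perp\perp}=\Sigma$ (both valid because $\lambda$ is non-degenerate) to obtain
\[
\dim(V_{M^c}\cap\Sigma^\perp) = 2n - \dim(V_M + \Sigma), \qquad \dim(V_{M^c}\cap \Sigma) = 2n - \dim(V_M + \Sigma^\perp).
\]

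Expanding each right-hand side using $\dim(V_M+\Sigma) = \dim V_M + s - \dim(V_M\cap\Sigma)$ and $\dim(V_M+\Sigma^\perp) = \dim V_M + (2n-s) - \dim(V_M\cap\Sigma^\perp)$, the $\dim V_M$ and $2n$ terms cancel when I subtract to form $\ell_{M^c}$, leaving
\[
\ell_{M^c} = 2n - 2s - \bigl(\dim(V_M\cap\Sigma^\perp) - \dim(V_M\cap\Sigma)\bigr) = 2k - \ell_M,
\]
which is the claim.

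The only real obstacle is careful bookkeeping with these dimension formulas; no conceptual subtlety arises because non-degeneracy of $\lambda$ on all of $\FF_2^{2n}$ ensures that every application of $(\cdot)^\perp$ is involutive and dimension-reversing, so the asymmetric roles of $\Sigma$ and $\Sigma^\perp$ enter only through the single quantity $\dim\Sigma^\perp - \dim\Sigma = 2n - 2s = 2k$.
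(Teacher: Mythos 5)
Your proof is correct, and it takes a genuinely different and cleaner route than the paper's. The paper computes $\dim(\Sigma^\perp\cap\FF_2^{2m})$ by counting the rank of the submatrix of $\Sigma$ obtained by deleting columns indexed by $M^c$, which leads naturally to the (non-canonical) decomposition $\Sigma = \Sigma_M\oplus\Sigma_{M^c}\oplus\Sigma'$ and the observation that the restriction map $\pi_M$ is injective on $\Sigma'$; the count then works out after grinding through the bookkeeping. You instead isolate the single conceptual fact that $V_{M^c}=V_M^\perp$ and thereafter never leave the land of subspace duality: the identities $(A+B)^\perp = A^\perp\cap B^\perp$, $\Sigma^{\perp\perp}=\Sigma$, and $\dim W^\perp = 2n-\dim W$ (all consequences of non-degeneracy) immediately convert quantities over $M^c$ into quantities over $M$, and the inclusion-exclusion dimension formula does the rest. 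This avoids introducing the auxiliary space $\Sigma'$ entirely, makes it transparent that $\Sigma$ and $\Sigma^\perp$ enter only through the difference $\dim\Sigma^\perp - \dim\Sigma = 2k$, and generalizes more cleanly (e.g.\ to qudits over $\FF_p$ or to any symplectic space with a direct-sum decomposition into complementary non-degenerate pieces). What the paper's version buys instead is that it is constructive: the rank-of-submatrix description is exactly what one would implement, and the decomposition $\Sigma_M\oplus\Sigma_{M^c}\oplus\Sigma'$ is reused almost verbatim in the paper's proof of the entanglement-spectrum theorem \ref{EntanglementSpectrum}.
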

The case $\ell_M = 0$ is covered in \cite{BravyiTerhal2009no-go},
and more general case is shown in \cite{YoshidaChuang2010Framework}.
The proof below follows \cite{HaahPreskill2012tradeoff}.
\begin{proof}
The set of all Pauli operators supported on $M$
corresponds to a vector space $\FF_2^{2m}$
spanned by $\{ e_i : i \in M \}$, where $m$
is the number of qubits in $M$.
The set of all logical operators on $M$ corresponds to $\Sigma^\perp  \cap \FF_2^{2m}$.
By definition of independent logical operators,
\begin{align*}
\ell_M 
&= \dim_{\FF_2} (\Sigma^\perp \cap \FF_2^{2m}) / ( \Sigma \cap \FF_2^{2m} ) \\
&= \dim_{\FF_2} (\Sigma^\perp \cap \FF_2^{2m}) - \dim_{\FF_2} ( \Sigma \cap \FF_2^{2m} ).
\end{align*}
Let us consider calculating the first term algorithmically.
If we write down the basis vectors for $\Sigma$ in the rows of a matrix $A$,
then $\Sigma^\perp$ amounts to calculating the kernel of the matrix
and transform it with the symplectic matrix $\lambda^{-1}$.
Since we are only interested in the dimension, 
the invertible map $\lambda$ is immaterial.
The restriction ``$\cap \FF_2^{2m}$'' means that we have to find the kernel
with zero components in entries for $M^c$.
This is to say that we calculate the kernel of the submatrix of $A$
obtained by deleting all columns for $M^c$.
The dimension we seek for is then $2m$ minus the rank of this submatrix.
This is a straightforward algorithm, and we translate it back to linear algebra.
The rank of this submatrix is the dimension of $\pi_M(\Sigma)$
where $\pi_M$ is the linear map that sets the components of $M^c$ to zero.
The vectors in $\Sigma_M := \Sigma \cap \FF_2^{2m}$ will remain untouched by $\pi_M$,
those in $\Sigma_{M^c} := \Sigma \cap (\FF_2^{2m})^\perp$ will be annihilated,
and the other vectors will be somehow modified.
Consider the decomposition
\begin{align}
\Sigma = \Sigma_M \oplus \Sigma_{M^c} \oplus \Sigma'
\label{eq:SigmaDecomposition}
\end{align}
where $\Sigma'$ includes whatever remains beyond $\Sigma_M \oplus \Sigma_{M^c}$.
The choice of $\Sigma'$ is not canonical, 
but it is easy to check that $\pi_M |_{\Sigma'}$ is injective,
so $\dim_{\FF_2} \pi_M(\Sigma') = \dim_{\FF_2} \Sigma'$.
Thus,
\begin{align*}
\ell_M 
&= 2m - \dim_{\FF_2} \pi_M (\Sigma) - \dim_{\FF_2} \Sigma_M\\
&= 2m - 2 \dim_{\FF_2} \Sigma_M - \dim_{\FF_2} \Sigma'.
\end{align*}
By symmetric argument,
\begin{align*}
\ell_{M^c} + \ell_M 
&= 2m + 2(n-m) - 2 \dim_{\FF_2} \Sigma_{M} - 2 \dim_{\FF_2} \Sigma_{M^c} - 2 \dim_{\FF_2} \Sigma'\\
&= 2n - 2 \dim_{\FF_2} \Sigma \\
&= 2k
\end{align*}
where the last line is by \ref{ExistenceCompleteLogicalPauli}.
\end{proof}
The proposition has an important corollary:
\begin{thm}[Cleaning Lemma]
If a region $M$ does not support any nontrivial logical operator {\em \bf (correctable)},%
\footnote{
Although in this lecture note we treat the two notions, 
the correctability and the absence of logical operator,
equally, but they are not in general equivalent.
The existence of a correcting map is stronger than the absence of operators
that act nontrivially within the code space.
}
then for any logical operator there exists an equivalent logical operator
supported on the complement region $M^c$.
\label{CleaningLemma}
\end{thm}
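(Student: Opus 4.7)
The plan is to deduce the Cleaning Lemma as an almost immediate consequence of \ref{LogicalOperatorCounting} together with the dimension count of the logical quotient space $\Sigma^\perp/\Sigma$ established in \ref{ExistenceCompleteLogicalPauli}. So my first step would be to apply \ref{LogicalOperatorCounting} to the hypothesis: the assumption that $M$ supports no nontrivial logical operator means $\ell_M = 0$, and hence
\begin{align*}
\ell_{M^c} = 2k - \ell_M = 2k.
\end{align*}
Thus there exist $2k$ logical operators supported entirely on $M^c$ that are independent in the sense defined before \ref{LogicalOperatorCounting}, i.e.\ their images in $\Sigma^\perp / \Sigma$ are linearly independent over $\FF_2$.

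Next I would combine this with the fact that the quotient $\Sigma^\perp / \Sigma$ has dimension exactly $2k$. This follows from the structural observation already recorded in the discussion of logical operators: by \ref{IsotropicSubspaceComplement}, $\Sigma^\perp$ splits as $\Sigma \oplus W$ with $W$ non-degenerate, and one reads off $\dim_{\FF_2} W = 2n - 2\dim_{\FF_2} \Sigma = 2(n-s) = 2k$. Therefore any $2k$ independent vectors in $\Sigma^\perp$, in particular the ones produced above that come from operators supported on $M^c$, form a basis for $\Sigma^\perp / \Sigma$.

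Finally I would finish as follows. Take an arbitrary logical operator, i.e.\ an element $v \in \Sigma^\perp$. Expand its class $[v] \in \Sigma^\perp / \Sigma$ in the basis of classes represented by operators supported on $M^c$. Lifting this expansion to the Pauli level, we obtain a logical operator $P'$ whose binary vector lies in $\FF_2^{2(n-m)} \subset \FF_2^{2n}$ (the coordinates of $M^c$) and such that $P' P^{-1}$ represents the zero class, meaning $P' P^{-1}$ is a stabilizer up to a phase. By the definition given in the text, $P$ and $P'$ are equivalent logical operators, and $P'$ is supported on $M^c$, which is exactly the claim.

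There is no serious obstacle: the only subtlety is remembering that equivalence of logical operators is defined modulo $\Sigma$ and up to phase factors, so the argument really takes place in $\Sigma^\perp / \Sigma$ rather than in $\Sigma^\perp$ itself. Once this identification is made, the statement is just the observation that $2k$ independent vectors in a $2k$-dimensional space form a spanning set.
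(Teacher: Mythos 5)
Your argument is correct and is essentially the paper's own proof, just spelled out in more detail: the paper likewise sets $\ell_M = 0$, invokes \ref{LogicalOperatorCounting} to get $\ell_{M^c} = 2k$, and concludes that every logical action is realized on $M^c$. The additional steps you include (the dimension count of $\Sigma^\perp/\Sigma$ and the lifting from the quotient) are exactly the implicit content of the paper's one-line conclusion.
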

In other words, a correctable region can be {\em cleaned} of any logical operators.
\begin{proof}
The assumption is that $\ell_M = 0$. By \ref{LogicalOperatorCounting},
$\ell_{M^c} = 2k$,
which means that every logical operator's action 
can be achieved by a logical operator on $M^c$.
\end{proof}
This simple fact will incur a number of interesting applications below.

\subsection{Error correction criterion}

Imagine one has embedded a state vector into an additive code 
in order to send it over a noisy channel or to store it safely.
A physics qubit may be damaged or lost during the process.
When and how can we recover from the damage?
Let us examine the trivial example in \eqref{eq:trivialCodeVectors} first.
If the error occurs in one of the first $n-k$ qubits, and $\ket 0$ will be mapped to some other state.
Our message is not damaged at all, and the formal restoration is achieved simply by
replacing the first $n-k$ qubits with fresh qubits in the known $\ket 0$ state.
On the contrary, 
if error occurs in one of the last $k$ qubits, then there is no way to recover it;
the damage is permanent.

In fact, the error correction for general additive codes is not too different.
Suppose a physical qubit $i$ in a correctable region is damaged.
By the cleaning lemma \ref{CleaningLemma},
there exists a complete set $\mathcal L$ 
of logical operators supported outside of the damaged qubit $i$.
In particular, every member of $\mathcal L$
commutes with any operator on the damaged qubit $i$.
We learned from the discussion leading to \ref{ExistenceCompleteLogicalPauli}
that there exists a Clifford unitary $U$ that maps the given additive code space
into that of the trivial code of \eqref{eq:trivialCodeVectors}.
$U$ necessarily maps the complete set $\mathcal L$ of logical operators
to a complete set $U \mathcal L U^\dagger$ of logical operators
on the trivial code.
In the trivial code, given an operator $E$ that acts nontrivially on the last $k$ (logical) qubits,
any complete set of logical operators must have one member
that does not commute with $E$. 
Therefore, if the operator $D$ caused the damage on the qubit $i$ in the correctable region,
$U D U^\dagger$ must be supported on the first $n-k$ qubits.
The recovery operation $R$ is then easy,
and the composition $U^\dagger R U$
is our desired error recovery operation.%
\footnote{
The standard notion of the operation
is a quantum channel, a completely positive and trace preserving linear map
on the space of density operators.
Everything we said here can be phrased using channels.
}
Summarizing,
\begin{thm}
For any error that has occurred in a correctable region of an additive code,
there exists a recovery map that corrects it.%
\footnote{
This is not a tautology; we have defined the correctable region because of this result.
}
\end{thm}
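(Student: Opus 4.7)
The plan is to reduce to the trivial code of \eqref{eq:trivialCodeVectors} via a Clifford unitary and effect the recovery there explicitly. First I would apply the Cleaning Lemma \ref{CleaningLemma} to the correctable region $M$: it supplies a complete set $\mathcal L$ of $2k$ independent logical operators all supported on the complement $M^c$. By an enhancement of \ref{ExistenceCompleteLogicalPauli}---composing the stabilizer-reduction Clifford with a further stabilizer-preserving Clifford drawn from \ref{SymplecticTransformationGenerators}---I can produce a single Clifford unitary $U$ that simultaneously sends the stabilizer group to $\langle Z_1,\ldots,Z_s\rangle$ and the logical set $\mathcal L$ to the canonical set $\{X_{s+1}, Z_{s+1}, \ldots, X_n, Z_n\}$.

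Next I would show that any Pauli $D$ supported on $M$ satisfies that $UDU^\dagger$ is supported on the first $s$ qubits. Because the supports of $D$ and $\mathcal L$ are disjoint, $D$ commutes with every $L\in\mathcal L$; hence $UDU^\dagger$ commutes with each of $X_{s+j}$ and $Z_{s+j}$ for $j=1,\ldots,k$, and the symplectic pairing \eqref{eq:symp} then forces the $x$- and $z$-components of $UDU^\dagger$ to vanish on each qubit in the range $s+1,\ldots,n$. The recovery on the trivial code is the channel $R$ that measures each $Z_i$ for $i\le s$ and applies the Pauli correction dictated by the syndrome---equivalently, discards and re-initializes the first $s$ qubits to $\ket 0$. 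This corrects every Pauli error on the first $s$ qubits while leaving the logical factor untouched, so $U^\dagger R U$ is the desired recovery map for the original code. Non-Pauli errors are handled by linearity: expanding each Kraus operator in the Pauli basis on $M$, every summand conjugates into the first $s$ qubits and is removed by $R$.

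The main obstacle, I expect, is the simultaneous reduction to canonical form in the first step: \ref{ExistenceCompleteLogicalPauli} provides only the stabilizer side of this reduction, so one must augment the construction to also bring the cleaned logical set $\mathcal L$ into canonical form. This relies on the observation that the Clifford centralizer of $\langle Z_1,\ldots,Z_s\rangle$ acts transitively on the complete logical sets of the trivial code: arbitrary Cliffords on the last $k$ qubits realize every symplectic automorphism of the logical quotient via \ref{SymplecticTransformationGenerators}, and gates between the first $s$ and last $k$ qubits that commute with every $Z_i$ for $i\le s$ adjust each logical operator by an arbitrary stabilizer factor.
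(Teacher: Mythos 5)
Your proof is correct and, while following the same overall plan as the paper (clean the error region via \ref{CleaningLemma}, Clifford-conjugate to the trivial code, reduce to a trivial recovery on the first $s$ qubits), it sharpens one step in a way that actually repairs an imprecision in the paper's argument. The paper canonicalizes only the stabilizer group and reasons that, because $U\mathcal L U^\dagger$ is \emph{some} complete logical set for the trivial code, commuting with it forces $UDU^\dagger$ onto the first $s=n-k$ qubits. Read literally, that inference is too quick: for instance with $n=2$, $s=1$, the Pauli $X_1Y_2$ commutes with the complete logical set $\{X_2 Z_1, Z_2 Z_1\}$ and nonetheless acts on qubit $2$. Commuting with a possibly non-canonical complete logical set only shows that $UDU^\dagger$ has trivial \emph{logical} action modulo the stabilizer group, which suffices for a syndrome-measurement recovery but does not by itself deliver support on the first $s$ qubits. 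Your insistence that $U$ \emph{also} carry $\mathcal L$ to the canonical set $\{X_{s+1},Z_{s+1},\dots,X_n,Z_n\}$ — paying the price of the transitivity argument you correctly identify as the main obstacle, and sketch via \ref{SymplecticTransformationGenerators}-style lifts on the last $k$ qubits together with stabilizer-factor corrections — closes this gap cleanly: the commutant of the canonical logical algebra is exactly the operator algebra on the first $s$ qubits, so $UDU^\dagger$ lands there with no further input, and the re-initialization channel $R$ (equivalently the syndrome measurement followed by a Pauli correction) works exactly as you say. Your finish for non-Pauli errors by Pauli-expanding the Kraus operators supported on $M$ and invoking linearity of $R$ is also the right one. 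In short, you trade one extra Clifford reduction for a more robust commutant argument; the paper's shorter version implicitly relies on the syndrome measurement to carry what its stated support claim does not quite establish.
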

This idea goes back to the very first quantum code by Shor~\cite{Shor1996Fault-tolerant}.
A more general criterion was discovered shortly after by Knill and Laflamme~\cite{KnillLaflamme1997Theory}.

The above error correction motivates us to introduce a quantitative attribute to an additive code.
The size of the smallest correctable region is one less than that of
{\em the smallest support of any nontrivial
logical operator}, where the latter is called {\bf code distance} or {\bf minimal distance},%
\footnote{
It is a ``distance'' when we consider the Hamming distance
on the symplectic binary vector space.
This jargon is influenced by classical coding theory.
}
denoted usually by $d$.
So, whenever error occurs on $d-1$ or less qubits, there exists a recovery map.
Obviously, the large $d$ is preferred.

Note that in the above we assumed that we knew the region where the error had occurred.
This is not a very realistic assumption,
and it is necessary to devise a method to locate the error.
This task in general cannot be deterministic,
because 
one has to measure the system in such a way that it does not modify the encoded
quantum state where the measurement outcome is probabilistic in nature,
and two different errors might result in the identical measurement outcomes.
(Often the procedure of locating the error is the hardest step in an 
{\em error correcting algorithm}
since if the locations are known the recovery map is provided by the code itself.)
Therefore, it is important for an error correcting code not only to have large correctable region,
but also to admit a reliable (and efficient) error correcting algorithm.

\begin{exc}
Using the cleaning lemma \ref{CleaningLemma},
show that the code distance $d$ should obey 
$2(d-1) \le n-1$ if $k \ge 1$ for any additive code on $n$ qubits
with $k$ encoded qubits.
(The statement actually follows from {\em quantum Singleton bound}
$2(d-1) \le n-k$~\cite{KnillLaflamme1997Theory},
but the weaker version can be derived from the cleaning lemma.)
\hfill $\diamond$ \end{exc}

\subsection{Entanglement spectrum}

A bipartite state  vector $\ket{\psi_{AB}}$ can always be written as
\begin{align}
\ket{\psi_{AB}} = \sum_i \sqrt{p_i} \ket{\phi_A^{(i)}} \ket{\phi_B^{(i)}}
\end{align}
for positive numbers $p_i$ and some 
orthonormal vectors $\ket{\phi_A^{(i)}}$ and $\ket{\phi_B^{(i)}}$
where $\sum_i p_i = 1$ by the normalization $\braket{\psi_{AB}}{\psi_{AB}} = 1$.
The numbers $p_i$ are called the Schmidt coefficients of the state $\ket{\psi_{AB}}$,
and also called the {\bf entanglement spectrum}.
The entanglement spectrum is actually a complete set of {\em invariants}
under unitary transformations on either partition;
it is invariant under unitaries, and conversely 
the entanglement spectrum determines the pure state
up to unitary transformations on each partition.

A code state vector is defined on a Hilbert space consisting of $n$ tensor factors.
Every choice of a subset of qubits (region) defines a bipartition,
and one can ask what the entanglement spectrum of a code state is.
In general, the entanglement spectrum depends on the particular code vector,
but if the code vector is an eigenvector of a maximal set of commuting
logical Pauli operators, then the entanglement spectrum turns out to be very simple.
The commuting set of logical operators can be regarded as Pauli stabilizers,
so the state vector is uniquely determined by a set of $n$ commuting Pauli stabilizers.
The eigenvalues of the stabilizers need not be all $+1$.
\begin{prop}
Let $\ket \psi \in (\CC^2)^{\otimes n}$ be a nonzero common eigenvector of 
$n$ independent Pauli operators $P_i$.
With respect to any bipartition, the entanglement spectrum is independent of the eigenvalues.
In other words, the entanglement spectrum is determined by
the binary vectors of the Pauli operators $P_i$.
\label{EntSpecIndependentOfEigenvalues}
\end{prop}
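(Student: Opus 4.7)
The plan is to show that any two nonzero common eigenvectors $\ket\psi, \ket{\psi'}$ of the $P_i$ (with possibly different eigenvalue strings $\{\epsilon_i\}$ and $\{\epsilon_i'\}$) are related by a Pauli operator $Q$, i.e.\ $Q\ket\psi\propto\ket{\psi'}$. Since every Pauli operator on $n$ qubits is, up to an overall phase, a tensor product of single-qubit Pauli matrices, for any bipartition $A\sqcup B$ of the qubits it factors as $Q=\eta\,Q_A\otimes Q_B$ with $Q_A,Q_B$ unitary. A local unitary of this form maps any Schmidt decomposition $\ket\psi=\sum_i\sqrt{p_i}\ket{\phi_A^{(i)}}\ket{\phi_B^{(i)}}$ to $\eta\sum_i\sqrt{p_i}(Q_A\ket{\phi_A^{(i)}})(Q_B\ket{\phi_B^{(i)}})$, which is still a valid Schmidt decomposition with the same coefficients; so once the intertwiner $Q$ is produced the entanglement spectrum is necessarily the same for $\ket\psi$ and $\ket{\psi'}$.

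To construct $Q$, let $v_i\in\FF_2^{2n}$ denote the binary vector of $P_i$. Because $P_i^2$ is a scalar, the eigenvalues of $P_i$ come in $\pm$ pairs, and therefore $\epsilon_i'/\epsilon_i=(-1)^{s_i}$ for some $s_i\in\FF_2$. I want $Q=P(w)$ obeying $P_iQ=(-1)^{s_i}QP_i$, i.e.\ $\lambda(v_i,w)=s_i$ for all $i$. The linear map
\begin{align}
\FF_2^{2n}\longrightarrow\FF_2^{n},\qquad w\mapsto\bigl(\lambda(v_1,w),\ldots,\lambda(v_n,w)\bigr)
\end{align}
has kernel $\{v_1,\ldots,v_n\}^\perp$, which has dimension $2n-n=n$ by non-degeneracy of $\lambda$ together with the $\FF_2$-linear independence of the $v_i$; the map is therefore surjective and a suitable $w$, hence a suitable $Q$, exists.

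This $Q$ does the job: $P_i(Q\ket\psi)=(-1)^{s_i}\epsilon_i(Q\ket\psi)=\epsilon_i'(Q\ket\psi)$, so $Q\ket\psi$ is a nonzero vector in the joint $(\epsilon_1',\ldots,\epsilon_n')$-eigenspace of the $P_i$. The operators $(\epsilon_i')^{-1}P_i$ form $n$ independent commuting Paulis whose stabilizer group cannot contain $-I$ since $Q\ket\psi$ is a nonzero $+1$ common eigenvector; therefore \ref{ExistenceCompleteLogicalPauli} with $s=n$ forces that joint eigenspace to be one-dimensional. Hence $Q\ket\psi\propto\ket{\psi'}$, and combined with the local-unitary observation of the first paragraph this proves the proposition. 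The only nontrivial step is the symplectic construction of $Q$, which depends essentially on the independence of the $v_i$ together with non-degeneracy of $\lambda$; all remaining steps are mechanical.
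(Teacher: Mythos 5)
Your proposal is correct, and it takes a genuinely different route from the paper's. The paper invokes Proposition~\ref{IsotropicSubspaceComplement} to produce a Clifford unitary $U$ that conjugates the stabilizer group $\langle P_i\rangle$ to $\langle X_1,\ldots,X_n\rangle$; then any change of eigenvalue string is implemented on $U\ket\psi$ by a product of $Z_i$'s, and conjugating back through $U$ yields a Pauli (hence tensor-product) intertwiner. You instead construct the Pauli intertwiner $Q=P(w)$ directly by solving the linear system $\lambda(v_i,w)=s_i$ over $\FF_2$, using non-degeneracy of $\lambda$ plus linear independence of the $v_i$ to show the evaluation map $w\mapsto(\lambda(v_i,w))_i$ is surjective, and then invoking Theorem~\ref{ExistenceCompleteLogicalPauli} (with $s=n$, giving code dimension $2^{n-n}=1$) to conclude $Q\ket\psi\propto\ket{\psi'}$. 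Your version is more constructive and avoids exhibiting the full Clifford unitary; the paper's version leans on machinery it has already built and makes the reduction to the trivial product basis visually immediate. One small omission: before invoking Theorem~\ref{ExistenceCompleteLogicalPauli} you should note, as the paper does at the start of its proof, that the $P_i$ necessarily commute (any two anticommuting operators with a common eigenvector force that eigenvector to vanish); this is implicit in the hypothesis but worth a sentence, since commutativity is required both for the joint eigenspaces to make sense and for the theorem to apply.
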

\begin{proof}
First the $n$ independent Pauli operator must all commute with one another;
otherwise, if $P$ and $Q$ are any two anti-commuting Pauli operators
with eigenvalues $p,q$, respectively, then
\begin{align}
pq \ket \psi = p Q \ket \psi = Q p \ket \psi = Q P \ket \psi = - P Q \ket \psi = -q P \ket \psi = -pq \ket \psi
\label{eq:antiCommutingStabilizers}
\end{align}
so $\ket \psi = 0$.
By \ref{IsotropicSubspaceComplement},
there exists a Clifford unitary $U$ that conjugates the stabilizer group $\mathcal S$ of the given commuting Pauli operators
to that of $X_i$ ($i=1,\ldots,n$).
The state $U \ket \psi$ is a common eigenstate of $X_i$ with some eigenvalues.
Two states with different eigenvalues $\pm 1$ of $X_i$ can be mapped to each other by some $Z_i$.
Since $U$ is Clifford, $U Z_i U^\dagger$ is also a Pauli operator,
and maps $\ket \psi$ to another state that is a common eigenvector of $\mathcal S$.
Since a Pauli operator $U Z_i U^\dagger$ is a tensor product unitary operator,
it cannot change the entanglement spectrum with respect to any bipartition.
\end{proof}

\begin{thm}
Given any bipartition $M \sqcup M^c$ of $n$ qubits,
and a nonzero common eigenvector $\ket \psi$ of $n$
independent Pauli operators,
there exists a tensor product Clifford operator
that transforms $\ket \psi$ into Bell pairs.
In particular, the entanglement spectrum is flat.
The number of nonzero Schmidt coefficients is equal to $2^s$
where
\begin{align}
 s = |M| - \dim_{\FF_2} \Sigma_M .
\label{eq:entropy}
\end{align}
Here, $|M|$ is the number of qubits in $M$ and
$\dim_{\FF_2} \Sigma_M$ is the number of independent stabilizers supported on $M$.
\label{EntanglementSpectrum}
\end{thm}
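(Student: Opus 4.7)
The plan is to construct the desired Clifford in three stages. First, by \ref{EntSpecIndependentOfEigenvalues} I may assume all $n$ stabilizer eigenvalues are $+1$, so that $\ket\psi$ is the unique state stabilized by the isotropic subspace $\Sigma \subset \FF_2^{2n}$ of dimension $n$ (equivalently $k=0$). Following the proof of \ref{LogicalOperatorCounting}, decompose $\Sigma = \Sigma_M \oplus \Sigma_{M^c} \oplus \Sigma'$ with $a = \dim_{\FF_2}\Sigma_M$, $b = \dim_{\FF_2}\Sigma_{M^c}$, $c = \dim_{\FF_2}\Sigma'$.

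Stage one: by \ref{IsotropicSubspaceComplement} applied separately inside $\FF_2^{2m}$ (for $M$) and $\FF_2^{2(n-m)}$ (for $M^c$), there are symplectic automorphisms sending $\Sigma_M \mapsto \langle e_1,\dots,e_a\rangle$ and $\Sigma_{M^c} \mapsto \langle e_1,\dots,e_b\rangle$ on their respective halves. These are induced by Clifford unitaries $U_M$ on $M$ and $U_{M^c}$ on $M^c$ via the surjection of the Clifford group onto the symplectic group. After applying $U_M \otimes U_{M^c}$, $\Sigma_M$ becomes $\langle X_1,\dots,X_a\rangle$ on the first $a$ qubits of $M$, and likewise $\Sigma_{M^c}$ on the first $b$ qubits of $M^c$.

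Stage two: for any $g\in\Sigma'$, the $M$-part $\pi_M(g)$ commutes with $\Sigma_M$ and hence has no $Z$-component on the first $a$ qubits of $M$; multiplying $g$ by an appropriate element of $\Sigma_M$ kills the $X$-components on those qubits as well. The analogous cleaning on $M^c$ makes every generator of $\Sigma'$ factor as $g_M\otimes g_{M^c}$ supported only on the $m-a$ ``fresh'' qubits of $M$ and the $(n-m)-b$ fresh qubits of $M^c$. The state now reads $\ket{+}^{\otimes a}\otimes\ket{+}^{\otimes b}\otimes\ket\phi$, where $\ket\phi$ lives on the fresh qubits and is stabilized by the cleaned $\Sigma'$.

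Stage three, the crux: since $k=0$, \ref{LogicalOperatorCounting} forces $c = 2(m-a) = 2((n-m)-b)$, so the injective projections of the cleaned $\Sigma'$ onto the fresh $M$ and fresh $M^c$ sides are bijections onto symplectic spaces of the same dimension $c$. Commutation inside $\Sigma'$ forces $\lambda_M(\pi_M g,\pi_M g') = \lambda_{M^c}(\pi_{M^c} g,\pi_{M^c} g')$ in $\FF_2$, so the cleaned $\Sigma'$ is the graph of a symplectic isomorphism $\phi$ from the fresh $M$ side to the fresh $M^c$ side. Fix any canonical symplectic basis of fresh $M$; then $\phi$ sends it to a canonical symplectic basis of fresh $M^c$, and by \ref{SymplecticTransformationGenerators} a Clifford on fresh $M^c$ alone straightens $\phi$ to the identity. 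The cleaned $\Sigma'$ becomes the Bell-pair stabilizer group $\langle X_i\otimes X_i,\, Z_i\otimes Z_i\rangle_{i=1}^{m-a}$, so $\ket\phi$ is exactly $m-a$ Bell pairs, and the full state has flat Schmidt spectrum with $2^{m-a}$ equal coefficients, giving $s=|M|-\dim_{\FF_2}\Sigma_M$. The main thing to verify carefully is this last stage: that the cleaned $\Sigma'$ really is the graph of a symplectic bijection of full dimension, which is precisely where the $k=0$ assumption (one independent stabilizer per qubit) is needed.
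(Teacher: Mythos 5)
Your proof is correct and reaches the same conclusion, but the final stage is organized around a genuinely different structural insight than the paper's argument. The paper's proof (i) peels off single-side stabilizers one at a time, reducing to the case $\Sigma_M = \Sigma_{M^c} = 0$, (ii) deduces $n = 2m$ from injectivity of the restriction map on $\Sigma'$, and (iii) then runs an explicit Gaussian/symplectic elimination on the $2n\times n$ stabilizer matrix, column by column, checking at each step that the isotropy equation $\Sigma^T\lambda\Sigma = 0$ forces the right entries to vanish and hence isolates a Bell-pair block. Your proof instead does the decomposition $\Sigma = \Sigma_M \oplus \Sigma_{M^c}\oplus\Sigma'$ once, cleans $\Sigma'$ onto the fresh qubits, and then observes that the cleaned $\Sigma'$ is the graph of a symplectic isomorphism $\phi$ between the two fresh symplectic spaces: injectivity of both $\pi_M|_{\Sigma'}$ and $\pi_{M^c}|_{\Sigma'}$ plus the dimension count $c = 2(m-a) = 2((n-m)-b)$ (which follows from $\ell_M=\ell_{M^c}=0$ when $k=0$) makes both projections surjective, and isotropy of $\Sigma$ in $\FF_2$ characteristic makes $\phi$ symplectic. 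A single Clifford on the fresh $M^c$ side, inducing $\phi^{-1}$, then straightens $\Sigma'$ to the diagonal $\langle X_iX_{i'}, Z_iZ_{i'}\rangle$. This is a cleaner way to see \emph{why} Bell pairs appear, whereas the paper's proof is closer to an actual algorithm one could implement. Both approaches are fully constructive (yours via the surjection of the Clifford group onto the symplectic group, plus \ref{SymplecticTransformationGenerators} to realize $\phi^{-1}$). One small point worth making explicit in your write-up: after the cleaning in stage two, the modified $\Sigma'$ is still a direct-sum complement of $\Sigma_M\oplus\Sigma_{M^c}$ (since you only added elements of $\Sigma_M\oplus\Sigma_{M^c}$), so $\pi_M$ and $\pi_{M^c}$ remain injective on it; and the isotropy identity is $\lambda_M(\pi_M g,\pi_M g') + \lambda_{M^c}(\pi_{M^c}g,\pi_{M^c}g') = 0$, which in $\FF_2$ is the equality you wrote but over $\FF_p$ gives a minus sign, so $\phi$ is anti-symplectic in general — that sign is irrelevant here but worth flagging since the paper notes the qudit generalization.
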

The last formula goes back at least to \cite{FattalCubittYamamotoEtAl2004Entanglement}.
(See also \cite{HammaIonicioiuZanardi2005}.)
It can also be easily derived from \cite{KlappeneckerRoetteler2002stabilizer};
see \cite{LindenMatusRuskaiEtAl2013Quantum} and references therein.
\begin{proof}
From \ref{EntSpecIndependentOfEigenvalues},
we may assume that $n$ independent Pauli operators $P_i$ are commuting and
have eigenvalue $+1$.
Order the qubits so that those in $M$ are the first $m = |M|$ qubits.

Suppose some $P_i$ is supported on $M$.
The single operator $P_i$ defines an isotropic $\FF_2$-subspace of $\FF_2^M$,
and by \ref{IsotropicSubspaceComplement},
we see that there exists a Clifford unitary $U$ supported on $M$
such that $P_i$ is mapped to $X_1$.
The eigenstate of $X_1$ is always of form $\ket{+} \otimes \star $,
i.e., the first qubit becomes disentangled by $U$.
This Clifford does not affect the entanglement spectrum between $M$ and $M^c$,
and hence we can remove the first qubit and we are left with one less qubit
and one less stabilizer $X_1$.
The formula \eqref{eq:entropy} retains its form since $|M|$ is reduced by 1,
and simultaneously $\dim_{\FF_2} \Sigma_M$ is reduced by 1.
Therefore, without loss of generality we may assume no stabilizer $P_i$
is supported only on $M$ or $M^c$.

Consider the matrix $\Sigma$ that has the binary vectors for $P_i$ in the columns.
The rows $1,\ldots, m$ and $n+1, \ldots, n+m$ are for $M$.

We first claim that $n = 2m$.
To show this, consider the decomposition of $\Sigma$ in \eqref{eq:SigmaDecomposition}.
There we showed that the restriction map $\pi_M$ that sets the component of $M^c$ to zero
is injective on $\Sigma'$. By our assumption, $\Sigma = \Sigma'$,
and hence $\pi_M$ is injective.
This means that the $2m \times n$ submatrix $\Sigma^{(M)}$ of $\Sigma$ consisting of rows for $M$
has the same rank as the full matrix $\Sigma$.
The rank of $\Sigma$ is $n$ by assumption.
This demands that $2m \ge n$.
Repeating the argument for $M^c$ in place of $M$, we have $2(n-m) \ge n$.
This proves $n = 2m$.

The column operation of $\Sigma$ is nothing but a different choice of
independent stabilizers of $\ket \psi$.
Run column operations on $\Sigma$ such that the $2m \times 2m$ submatrix $\Sigma^{(M)}$
becomes the identity matrix.
\begin{align*}
 \Sigma' =
\begin{pmatrix}
1                  & 0_{1 \times (m-1)} &  &  \\
0_{(m-1) \times 1} & \id_{(m-1)}        &  &  \\
\star              & \star & \star & \star \\
\star & \star        & \star & \star \\
\hline
 &  & 1 & 0_{1 \times (m-1)} \\
 &  & 0_{(m-1) \times 1} & \id_{(m-1)}\\
 \star              & \star & \star & \star \\
\star & \star        & \star & \star
\end{pmatrix}
\end{align*}
Next, we apply symplectic transformation on $M$ or $M^c$ without changing the entanglement spectrum,
so that the transformed matrix of $\Sigma$ will be simple.
The $(4m-2) \times 1$ submatrix in the first column corresponding to $M^c$ entries
cannot be zero, since otherwise the first column will be supported on $M$.
As in the proof of \ref{IsotropicSubspaceComplement},
we use the elementary symplectic transformations on $M^c$ to obtain
\begin{align*}
 \Sigma'' =
\begin{pmatrix}
1                  & 0_{1 \times (m-1)} &  &  \\
0_{(m-1) \times 1} & \id_{(m-1)}        &  &  \\
1              & \star & \star & \star \\
0_{(m-1) \times 1} & \star        & \star & \star \\
\hline
 &  & 1 & 0_{1 \times (m-1)} \\
 &  & 0_{(m-1) \times 1} & \id_{(m-1)}\\
0              & E_{1 \times (m-1)} & f & G_{1 \times (m-1)} \\
0_{(m-1)\times 1} & \star        & \star & \star
\end{pmatrix}.
\end{align*}
We now employ the equation $(\Sigma'')^T \lambda (\Sigma'') = 0$.
Due to the first column, the equation enforces $E = 0$, $f=1$, $G = 0$.
\begin{align*}
 \Sigma''' =
\begin{pmatrix}
1                  & 0_{1 \times (m-1)} &  &  \\
0_{(m-1) \times 1} & \id_{(m-1)}        &  &  \\
1              & \star & \star & \star \\
0_{(m-1) \times 1} & \star        & \star & \star \\
\hline
 &  & 1 & 0_{1 \times (m-1)} \\
 &  & 0_{(m-1) \times 1} & \id_{(m-1)}\\
0              & 0_{1 \times (m-1)} & 1 & 0_{1 \times (m-1)} \\
0_{(m-1)\times 1} & \star        & \star & \star
\end{pmatrix}.
\end{align*}
Again by $C$ operations and $S$ operations on $M^c$, we obtain
\begin{align*}
 \Sigma'''' =
\begin{pmatrix}
1                  & 0_{1 \times (m-1)} &  &  \\
0_{(m-1) \times 1} & \id_{(m-1)}        &  &  \\
1              & \star & 0 & \star \\
0_{(m-1) \times 1} & \star        & 0_{(m-1) \times 1} & \star \\
\hline
 &  & 1 & 0_{1 \times (m-1)} \\
 &  & 0_{(m-1) \times 1} & \id_{(m-1)}\\
0              & 0_{1 \times (m-1)} & 1 & 0_{1 \times (m-1)} \\
0_{(m-1)\times 1} & \star        & 0_{(m-1) \times 1} & \star
\end{pmatrix}.
\end{align*}
We see that the first column and $(m+1)$-st column
is block diagonal with the rest.
These isolated columns represent the state that is stabilized by $X_1 X_m$ and $Z_1 Z_m$.
There is a unique such state $(\ket{00} + \ket{11})/\sqrt{2}$, the {\bf Bell pair}.
The entanglement spectrum is $\{1/2,1/2\}$, and there are two nonzero Schmidt coefficients.
The formula \eqref{eq:entropy} is clearly valid.
\end{proof}

For generalizations to prime $d$-dimensional qudits,
the entry $f$ in $\Sigma''$ should be $-1$.
The number of nonzero Schmidt coefficients is a power of $d$.

\begin{cor}
For any state in an additive code space,
the entanglement spectrum of any correctable region $M$ is flat.
The entanglement entropy of $M$ is given by \eqref{eq:entropy}.
\label{EntanglementSpectrumCorrectableRegion}
\end{cor}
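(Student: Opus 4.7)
The plan is to compute the reduced density matrix $\rho_M = \tr_{M^c}(|\psi\rangle\langle\psi|)$ directly, by expanding it in a Hermitian Pauli basis on $M$ and showing that correctability of $M$ forces $\rho_M$ to be a constant multiple of an orthogonal projector. Both the flatness of the spectrum and the entropy formula \eqref{eq:entropy} then fall out simultaneously, with no need to first reduce $|\psi\rangle$ to a stabilizer state (which in general it is not, since an arbitrary superposition of code basis vectors need not be an eigenstate of any set of $n$ commuting Paulis).

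Concretely, from Hilbert--Schmidt orthogonality one may write
\begin{align*}
\rho_M \;=\; \frac{1}{2^{|M|}} \sum_P \langle \psi | P \otimes I_{M^c} | \psi \rangle\, P,
\end{align*}
where $P$ runs over a fixed Hermitian representative of each projective class of Paulis supported on $M$. I would then split the coefficient $\langle\psi|P|\psi\rangle$ into two cases. (i) If $P$ anticommutes with some stabilizer $S$, then using $S|\psi\rangle = |\psi\rangle$ gives $\langle\psi|P|\psi\rangle = \langle\psi|SPS|\psi\rangle = -\langle\psi|P|\psi\rangle = 0$. (ii) If $P$ commutes with every stabilizer, then by Exercise \ref{LogicalOperatorsCommutant} $P$ is a logical operator supported on $M$; since $M$ is correctable, $\ell_M = 0$, so $P$ must be a trivial logical, i.e.\ $P = \epsilon_P S_P$ for some sign $\epsilon_P \in \{\pm 1\}$ and a unique stabilizer $S_P \in \mathcal{S}_M := \mathcal{S} \cap \{\text{Paulis supported on } M\}$, with coefficient $\epsilon_P$. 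Since $\epsilon_P P = S_P$, summing only over case (ii) yields
\begin{align*}
\rho_M \;=\; \frac{1}{2^{|M|}} \sum_{S \in \mathcal{S}_M} S \;=\; \frac{|\mathcal{S}_M|}{2^{|M|}}\,\Pi_M, \qquad \Pi_M \;:=\; \frac{1}{|\mathcal{S}_M|}\sum_{S \in \mathcal{S}_M} S,
\end{align*}
with $\Pi_M$ the projector onto the common $+1$-eigenspace of $\mathcal{S}_M$ regarded as a stabilizer code on the $|M|$ qubits of $M$.

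To finish, \ref{ExistenceCompleteLogicalPauli} applied to this sub-code gives that $\Pi_M$ has rank $2^{|M| - \dim_{\FF_2} \Sigma_M} = 2^s$, while Exercise \ref{NoMinusOne} makes the natural map $\mathcal{S}_M \to \Sigma_M$ a bijection so that $|\mathcal{S}_M| = 2^{\dim_{\FF_2} \Sigma_M}$. Hence $\rho_M = 2^{-s}\Pi_M$, which has exactly $2^s$ equal nonzero eigenvalues, each $2^{-s}$, proving flatness and giving entanglement entropy $s \log 2$ with $s$ as in \eqref{eq:entropy}. The main care point is the sign bookkeeping in case (ii): one needs that a commuting Pauli on $M$ and its stabilizer counterpart differ by a genuinely well-defined sign, and this rests precisely on the injectivity furnished by Exercise \ref{NoMinusOne}; without it one would risk double- or under-counting the Pauli classes in the final sum. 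As a parallel route one could instead use the cleaning lemma \ref{CleaningLemma} to place a maximal commuting set of logical operators entirely inside $M^c$, apply \ref{EntanglementSpectrum} to one particular codeword, and then verify by the same commutation argument that the off-diagonal pieces $\tr_{M^c}(|\overline a\rangle\langle\overline b|)$ with $a \ne b$ vanish; this recovers the same conclusion.
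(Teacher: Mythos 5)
Your main proof is correct and takes a genuinely different route from the paper's. The paper's proof invokes the cleaning lemma \ref{CleaningLemma} to argue that every observable on $M$ has a code-state-independent expectation value, so that $\rho_M$ is the same for all code states; it then picks a particularly convenient stabilizer code state and cites Theorem \ref{EntanglementSpectrum}. You instead compute $\rho_M$ directly from the Pauli expansion, using correctability only in the form $\ell_M=0$ (every logical on $M$ is a stabilizer up to sign) and using Exercise \ref{NoMinusOne} to see the sign bookkeeping closes up, arriving at the explicit identity $\rho_M = 2^{-s}\Pi_M$ with $\Pi_M$ the projector onto the local sub-code fixed by $\mathcal S_M$. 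What your approach buys: it is self-contained (it does not need the Gaussian-elimination machinery that underlies \ref{EntanglementSpectrum}), it applies uniformly to an arbitrary superposition without first fixing a special codeword, and it identifies the reduced state itself, not merely its spectrum. What the paper's approach buys: given \ref{EntanglementSpectrum} the corollary's proof is two sentences, and \ref{EntanglementSpectrum} is proved in a constructive, circuit-level way (explicit Bell-pair extraction) that is useful elsewhere, whereas your argument is purely spectral. Your reservation that \ref{EntanglementSpectrum} cannot be applied directly to a generic code state is accurate; the paper's state-independence step via cleaning is exactly what closes that gap, and your ``parallel route'' at the end is in substance the paper's proof together with an explicit verification that the off-diagonal cross terms $\tr_{M^c}(|\overline a\rangle\langle\overline b|)$ vanish, which the paper handles implicitly through the density-matrix-is-state-independent argument.
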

\begin{proof}
By the cleaning lemma \ref{CleaningLemma},
any logical operator can be pushed away from $M$.
This implies that any observable supported on $M$ has the same expectation value 
regardless of the actual encoded state.
In other words, the reduced density matrix for $M$ is independent of the code state.
Therefore, we can conveniently choose the code state to be an eigenstate of
a maximal set of commuting logical operators,
and we can use \ref{EntanglementSpectrum}.
The entanglement entropy is $- \sum_i p_i \log p_i$,
which is equal to $\log$ of the number of nonzero Schmidt coefficients
because they are the same.
\end{proof}

\section{Geometric locality}
\label{sec:GeometricLocality}

From now on, we consider additive codes whose stabilizer group is generated by
Pauli operators supported on small balls in some metric space.
We will call an upper bound $w$ on the ball's diameter 
as the the {\bf geometric locality} of the code.

The error correcting capability of the additive codes
is, crudely speaking, due to the entanglement structure of the code states.
If the code distance is larger than, say $3$,
any pair of qubits have the same reduced density matrix for any code state.
In other words, the complete data for all pairs of qubits cannot determine the code state at all;
the entanglement made the global information hidden from local degrees of freedom.
In \ref{OneDCodeDistanceBound} below we will see that this interesting property
is severely restricted by the geometric locality.

This signals implications in many-body physics.
Physical degrees of freedom, qubits, interact with each other
whose strength is only strong for those that are nearby.
We cannot in general fully determine the microscopic interaction,
and, even if we did, it is almost always hopeless to calculate
consequences of the interaction exactly.
Instead, physicists model the system and identify important aspects,
which can be compared with experiments.
The geometrically local additive codes provides a class of 
physically relevant (i.e., local) models which we can analyze relatively easily.
Since the additive codes is designed to produce highly entangled states,
the intuition gained from this class of models
will be valuable to enhance our understanding of physical systems
where entanglement is presumably essential.

Another practical motivation to study geometrically local additive codes
is to use them as ``firmware'' in quantum information processing architectures.
A raw physical qubit is likely to be noisy,
so it is expected that an error correction layer will be added on top of
a system of physical qubits.
The error correction scheme would be implemented easier
if the necessary operation is on a small local cluster of qubits.
One of the most important operations in any error correcting code implementation
is to check whether a state is in the code space.
In geometrically local codes,
this membership test can be done on local clusters of qubits.

The first result under the geometric locality is the following.
\begin{lem}[Union Lemma~\cite{BravyiTerhal2009no-go,HaahPreskill2012tradeoff}]
Let $M$ and $N$ be correctable regions for a geometrically local additive code of locality $w$.
If $M$ and $N$ are separated by distance $>w$,
then $M \cup N$ is also correctable.
\label{UnionLemma}
\end{lem}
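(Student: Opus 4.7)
The plan is to show that any Pauli logical operator supported on $M \cup N$ must in fact be a stabilizer; once this is established we have $\ell_{M\cup N} = 0$, which is the definition of correctability. The two ingredients are (i) the disjoint-support factorization of Pauli operators, and (ii) the geometric fact that no ball of diameter $\le w$ can meet both $M$ and $N$ when their separation exceeds $w$.

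First I would record the locality observation. Each generator of the stabilizer group $\mathcal S$ is supported on a ball of diameter $\le w$. If such a ball were to contain a qubit in $M$ and another in $N$, these two qubits would be at distance at most $w$, contradicting the assumption on the separation of $M$ and $N$. Hence every stabilizer generator is supported entirely in $M^c$ or entirely in $N^c$ (possibly both).

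Next, let $L$ be a Pauli logical operator supported on $M \cup N$. Since $M$ and $N$ are disjoint, we may factor $L = \eta \, L_M L_N$ where $L_M$ is a Pauli operator supported on $M$, $L_N$ is a Pauli operator supported on $N$, and $\eta$ is a phase. I claim $L_M$ is itself a logical operator, i.e., commutes with every generator $G$ of $\mathcal S$. If $G$ is supported in $M^c$, then $G$ and $L_M$ act on disjoint qubits and tautologically commute. If $G$ is supported in $N^c$, then $G$ and $L_N$ act on disjoint qubits and commute, so the identity $GL = LG$ reduces to $GL_M = L_M G$. Thus $L_M$ commutes with every generator, hence with every element of $\mathcal S$, and so by \ref{LogicalOperatorsCommutant} it is a logical operator supported on $M$. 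Correctability of $M$ then forces $L_M$ to be a trivial logical operator, i.e., a stabilizer up to phase. The same argument with the roles of $M$ and $N$ swapped shows $L_N$ is a stabilizer up to phase. Therefore $L$ is a stabilizer (up to phase), so no nontrivial logical operator is supported on $M \cup N$, proving $M \cup N$ is correctable.

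The argument is almost entirely bookkeeping; the only non-routine step is recognizing that the separation hypothesis forces each local generator to lie cleanly on one side of the partition, which is what allows the crossed commutator $[G, L_M]$ to be extracted from $[G, L]$. The appeal to correctability of $M$ and $N$ individually then collapses each half of the factorization to the stabilizer group, and no use of the cleaning lemma itself is required.
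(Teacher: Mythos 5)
Your proof is correct and follows essentially the same route as the paper: factor the operator supported on $M \cup N$ into tensor factors on $M$ and $N$, use the locality/separation hypothesis to conclude no generator straddles both regions, deduce that each factor is individually a logical operator, and then invoke correctability of $M$ and $N$ separately. You have in fact spelled out the commutation case analysis (generator in $M^c$ vs.\ generator in $N^c$) more explicitly than the paper does, which is a minor improvement in rigor but not a different argument.
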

\begin{proof}
We have to show that any logical operator $O$ supported on $M \cup N$ is trivial.
Since $O$ is a tensor product operator,
we can write $O = O_M \otimes O_N$ where $O_M$ is supported on $M$
and $O_N$ on $N$.
The locality implies that there is no stabilizer group generator
that acts nontrivially on both $M$ and $N$.
By \ref{LogicalOperatorsCommutant}, each stabilizer group generator must commute
with each tensor factor $O_M$ and $O_N$.
This means that each of $O_M$ and $O_N$ is a logical operator.
Since each region is correctable, $O_M$ is trivial, so is $O_N$.
The product $O$ is also trivial.
\end{proof}

Below we give important applications of the cleaning lemma \ref{CleaningLemma}
and the union lemma \ref{UnionLemma}.

\subsection{Code distance is bounded in one dimension}
Suppose we have an array of $n$ qubits along a line.
We consider an additive code whose stabilizer group is
generated by Pauli operators supported on intervals of length at most $w$.
How large can the code distance $d$ be?
\begin{thm}
$d \le 3w$ if the additive code encodes at least one qubit.
\label{OneDCodeDistanceBound}
\end{thm}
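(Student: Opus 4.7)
The plan is a proof by contradiction: assume $k\ge 1$ and $d\ge 3w+1$, and derive $k=0$. Under this hypothesis, every interval of length at most $3w$ is correctable, since any nontrivial logical operator has weight $\ge d>3w$. Since some nontrivial logical operator exists and has weight at most $n$, we also have $n\ge d\ge 3w+1$.

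The key construction is an interval partition of $[1,n]$ that two-colors into two correctable sets. Partition $[1,n]=I_1\sqcup\cdots\sqcup I_M$ into consecutive intervals with each $|I_i|\in[w+1,3w]$; a greedy choice sets $M=\lfloor n/(w+1)\rfloor$, makes the first $M-1$ intervals have size $w+1$, and lets the last absorb the remainder so that its size is $n-(M-1)(w+1)\in[w+1,2w+1]\subseteq[w+1,3w]$. Since $n\ge 3w+1\ge 2(w+1)$ for $w\ge 1$, also $M\ge 2$.

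Now two-color the intervals,
\begin{align*}
 A=I_1\cup I_3\cup I_5\cup\cdots,\qquad B=I_2\cup I_4\cup I_6\cup\cdots .
\end{align*}
Each $I_i$ is correctable because $|I_i|\le 3w<d$. Any two same-color intervals $I_{2j-1}$ and $I_{2j+1}$ are separated by $I_{2j}$ of size $\ge w+1$, so their qubit-distance exceeds $w$. Iterating the union lemma \ref{UnionLemma} along each color class yields that both $A$ and $B$ are correctable, i.e., $\ell_A=\ell_B=0$. Since $A\sqcup B=[1,n]$, we have $B=A^c$, so \ref{LogicalOperatorCounting} gives
\begin{align*}
 2k=\ell_A+\ell_{A^c}=\ell_A+\ell_B=0,
\end{align*}
contradicting $k\ge 1$. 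Hence $d\le 3w$.

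The only nontrivial point is designing the partition so that the internal separations in \emph{both} color classes exceed $w$; once this partition is in place, the union lemma and the counting formula \ref{LogicalOperatorCounting} finish the argument mechanically.
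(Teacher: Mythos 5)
Your proof is correct, and it is essentially the same as the paper's Proof 2 (attributed to Bravyi--Terhal): tile the line into bounded intervals, two-color them so that same-color pieces are separated by more than $w$, apply the union lemma to each color class, and derive a contradiction from $\ell_A + \ell_{A^c} = 2k$. The only differences are cosmetic --- you use tiles of length $w+1$ (with a flexible final tile) rather than $2w$, and you invoke \ref{LogicalOperatorCounting} directly where the paper invokes the cleaning lemma, but the cleaning lemma is itself a corollary of \ref{LogicalOperatorCounting}.
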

The result in Ref.~\cite{BravyiTerhal2009no-go} is essentially this,
though they have phrased it primarily with two-dimensional systems.
\begin{proof}[Proof 1~\cite{HaahPreskill2012tradeoff}]
Suppose on the contrary that $d > 3w$.
Then any interval of length $\le 3w$ is correctable.
For integers $r \ge 0$,
let $M_r = [-2r w, 3w+2rw]$ be intervals.
By the cleaning lemma,
$M_0$ is correctable.

We claim by induction that $M_r$ for any $r \ge 0$ is correctable.
$M_{r+1} \setminus M_{r}$ consists of two intervals of length $2w$
separated by distance $>w$.
By the union lemma, the union of the two intervals is correctable.
If $O$ is any logical operator,
by the cleaning lemma an equivalent logical operator $O'$ exists on the complement of 
$M_{r+1} \setminus M_{r}$, which is $M_r \sqcup M_{r+1}^c$.
Since $O'$ is a tensor product operator,
we may consider its tensor factor $O'_{M_r}$ on $M_r$,
which by itself must be logical due to the locality.
Since $M_r$ is correctable by induction hypothesis,
$O'_{M_r}$ is a stabilizer, and the tensor factor $O'_{M_{r+1}^c}$
is equivalent to $O$.
Since $O$ was arbitrary, this implies that on $M_{r+1}^c$
a complete set of logical operators can be found, i.e., $\ell_{M_{r+1}^c} = 2k$,
and by \ref{LogicalOperatorCounting} we conclude that $M_{r+1}$ is correctable.
This completes the induction.

For a sufficiently large $r$, $M_r$ includes all $n$ qubits,
and the whole system is correctable.
This contradicts the assumption that there exists a nontrivial logical operator.
\end{proof}
\begin{proof}[Proof 2~\cite{BravyiTerhal2009no-go}]
For integer $r \in \mathbb Z$, let $N_{r} = [2rw, 2(r+1)w]$.
Consider $N_\text{even} = \bigcup_{r \in 2\mathbb Z} N_r$ and $N_\text{odd} = \bigcup_{r \in 2 \mathbb Z +1} N_r$.
If $d > 3w$, then $N_\text{even}$ is correctable by the union lemma.
A nontrivial logical operator can then be found in $N_\text{odd}$ by the cleaning lemma,
which is a contradiction since $N_\text{odd}$ is correctable as well by the union lemma.
\end{proof}

\begin{exc}
Sharpen the statement of \ref{OneDCodeDistanceBound}
using the argument in Proof 2.
\hfill $\diamond$ \end{exc}

\begin{exc}
Apply the conclusion of \ref{OneDCodeDistanceBound}
for codes in higher dimensional lattices,
to show that there always exists a nontrivial logical operator
supported on a thin slab~\cite{BravyiTerhal2009no-go, HaahPreskill2012tradeoff}.
\hfill $\diamond$ \end{exc}

\section{Translation invariance}

We specialize to the additive codes that obeys translation invariance.
The material of this section and the next is 
mostly from Ref.~\cite{Haah2012PauliModule}.
An exception is the proof of \ref{CSS2dClassification}.
Specifically, we consider an array of qubits on a lattice.
The lattice is modeled by the additive group $\ZZ^D$,
where $D$ is called the {\bf spatial dimension}.%
\footnote{
This is yet another ``dimension,''
which is different from the Hilbert space dimension,
or the binary vector space dimension.
}
The lattice is a collection of {\bf sites}, the elements of $\ZZ^D$.
We assume that a finite number $q$ of qubits are located on each site.
A local operator on this infinite array of qubits
is an operator supported on a ball of finite size.
The set of all local operators naturally admits an action from $\ZZ^D$
by permuting (translating) the tensor factors.
A {\bf translation invariant code} is one for which the generating set
of the stabilizer group consists of local Pauli operators and 
is translation-invariant.

This generalizes our previous discussion,
since if we set $D=0$ the notion of translation becomes vacuous.
In other words, our study of finite dimensional symplectic vector spaces
and automorphism groups was zero-dimensional.

For a positive $D$, 
one might be worried about the infinitely large lattice and the corresponding
infinite dimensional Hilbert space.
We are not going to discuss about this infinity.
Instead, we contend ourselves by implicitly considering a family of codes
by factoring out sublattices (subgroup) of $\ZZ^D$ of finite subgroup indices.
This index, the order of the quotient group,
is more often called {\bf system size}.
The procedure of factoring out the subgroup of $\ZZ^D$
is to impose {\bf periodic boundary conditions} on the lattice.

The translation invariance allows us unambiguously define the stabilizer group
over the family of finite systems.
It should be noted that the absence of $-1$ from the stabilizer group (See \ref{NoMinusOne})
is not always guaranteed.
For instance, in $D=1$ consider the stabilizer group generated by $-Z_i Z_{i+1}$.
On the ring of odd length, a product of these generators is equal to $-1$.
We intentionally avoid such a situation, and focus on those where $-1$ does not appear.%
\footnote{
It can be shown that for any finite abelian group of Pauli operators,
there exists a finite abelian group of Pauli operators that does not contain $-1$,
but has the same binary vector representation.
}

Concretely, the translation-invariance helps us to deal with the formal infinite dimensionality.
The bit string that encodes Pauli operator up to a phase factor
is really a specification of the support of the Pauli operator.
For a Pauli operator on lattice, we have the coordinate system $\ZZ^D$,
so a list of integer $D$-tuples is all we need.
Furthermore,
a Pauli operator is a finite product of single qubit operator,
which we can write as (assuming $D=3$)
\begin{align*}
 P( a,b,c ; e_i )
\end{align*}
where $e_i$ denotes qubit $i$ within the site (a unit cell) $(a,b,c) \in \ZZ^3$.
By convention, $e_i$ for $1 \le i \le q$ means the Pauli $X$,
and $e_{q+i}$ for $1 \le i \le q$ means the Pauli $Z$.
The Pauli $Y$ is represented as $e_{i} + e_{i+q}$.
Let us employ formal variables $x,y,z$ associated with the generators of $\ZZ^3$,
and write
\begin{align*}
 (a,b,c) \Leftrightarrow x^a y^b z^c .
\end{align*}
In previous section, for multiplication of Pauli operators,
we added the corresponding bit strings over $\FF_2$.
Similarly we write, for example,
\begin{align}
 P(1,2,3; e_1 ) P( -1,2,4; e_{q+2} ) \Leftrightarrow x y^2 z^3 e_1 + x^{-1} y^2 z^4 e_{q+2} 
\end{align}
for a general Pauli operator.
These observations can be summarized as
\begin{prop}
The following two groups are isomorphic.
\begin{itemize}
\item The multiplicative group of {\em finitely supported Pauli operators} modulo phase factors
on the lattice of dimension $D$ with $q$ qubits per site
\item The additive group of all {\em Laurent polynomial column vectors}%
\footnote{
We have overloaded the term ``vectors.''
It is a $2q \times 1$ matrix over the Laurent polynomial ring, but this is too wordy.
As a vector over the base field $\FF_2$, the number of components is infinite.
Here, we just say column vectors to mean anything that has several components
arranged in a column.
}
 in $D$ formal variables of length $2q$
\end{itemize}
\end{prop}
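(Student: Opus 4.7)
The plan is to construct an explicit map between the two groups and verify it is an isomorphism, essentially generalizing the correspondence $\mathcal{P}/\langle i\rangle \cong (\ZZ/2\ZZ)^{2n}$ established earlier to the translation-invariant (and possibly infinite) setting. First I would fix a labeling convention: for a site $\mathbf{a}=(a_1,\ldots,a_D)\in\ZZ^D$ and a qubit index $i\in\{1,\ldots,q\}$, assign to the single-qubit Pauli $X$ acting on $(\mathbf{a},i)$ the Laurent monomial vector $x_1^{a_1}\cdots x_D^{a_D} e_i$, and assign to the corresponding $Z$ the vector $x_1^{a_1}\cdots x_D^{a_D} e_{q+i}$. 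Using the identity $Y = -iZX$, every finitely supported Pauli operator admits a canonical expression as a phase times a product of single-qubit $X$'s and single-qubit $Z$'s, hence a well-defined image in the Laurent polynomial vector space $\bigl(\FF_2[x_1^{\pm 1},\ldots,x_D^{\pm 1}]\bigr)^{2q}$.

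Next I would check that this assignment descends to a map $\varphi$ on the quotient by phase factors. The finite-$n$ calculation \eqref{eq:commuted} shows that two Pauli operators commute up to a sign, so any two representatives of a coset in $\mathcal{P}/\langle i\rangle$ differ by a phase, and reordering the single-qubit factors to put them into canonical form only introduces further phases that are quotiented out. This makes $\varphi$ a well-defined group homomorphism from the abelianized Pauli group into Laurent polynomial vectors, where multiplication of Paulis corresponds to addition of exponent vectors modulo $2$ on each tensor factor—exactly addition of Laurent polynomials over $\FF_2$.

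I would then verify the bijection. Surjectivity is immediate: any Laurent polynomial vector is a finite $\FF_2$-linear combination of monomials $x_1^{a_1}\cdots x_D^{a_D} e_j$, and each monomial is the image of the single-qubit Pauli ($X$ or $Z$, depending on whether $j\le q$) at the appropriate site; multiplying the corresponding single-qubit Paulis yields a preimage, and the ``finitely supported'' hypothesis on Pauli operators matches the fact that a Laurent polynomial has only finitely many nonzero coefficients. For injectivity, suppose a Pauli operator $P$ lies in the kernel, so its canonical form has no $X$ or $Z$ factors at any site; then $P$ is a pure phase, hence trivial in $\mathcal{P}/\langle i\rangle$ modulo the further identification with $\pm 1$ already absorbed (one should note that $-1 \in \langle i\rangle$, so the quotient is really by the full group of phase factors $\{\pm 1,\pm i\}$).

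The only mildly delicate step is the bookkeeping of phases when reordering tensor factors into canonical form, since single-qubit Pauli factors on \emph{different} sites commute but factors on the same site anticommute; this is just a finite number of sign corrections because the operator is finitely supported, and all such signs are killed in the quotient. Everything else is routine translation between the multiplicative and additive presentations. Note that the $\ZZ^D$-translation action on Pauli operators corresponds under $\varphi$ to multiplication by the monomial $x_1^{a_1}\cdots x_D^{a_D}$, so $\varphi$ is in fact an isomorphism of $\FF_2[\ZZ^D]$-modules, although only the group isomorphism is claimed in the proposition.
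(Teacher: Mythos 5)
Your proposal is correct and follows essentially the same route the paper takes: the paper presents this proposition as a summary of the preceding paragraphs (the $X \leftrightarrow e_i$, $Z \leftrightarrow e_{q+i}$ convention and the observation that Pauli multiplication corresponds to coefficient-wise addition mod $2$), and you simply make explicit the verification that this correspondence is a well-defined group isomorphism, including the point that finite support matches finitely many nonzero coefficients. One minor wording imprecision: you write that single-qubit factors ``on the same site anticommute,'' but in fact factors on the same site but acting on different qubit indices $i \neq j$ commute, and even $X$ with $X$ on the same qubit commutes; the only anticommuting pairs are distinct non-identity Paulis on the \emph{same qubit} $(\mathbf{a},i)$. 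This does not affect the argument, since the essential point — reordering introduces only phase factors, which die in the quotient — is stated correctly.
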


Recall that a vector space is an additive group with an action from a field.
Likewise, a {\bf module} $M$ is an additive group with an action (``$\cdot$'') 
from a more general ring $R$:
$r \cdot m \in M$ for any $r \in R$ and $m \in M$.
The distributive law is assumed by the same formula:
$(r_1 + r_2) \cdot m = r_1 \cdot m + r_2 \cdot m$
and
$r \cdot ( m_1 + m_2 ) = r \cdot m_1 + r \cdot m_2$.
We will generally omit the dot (``$\cdot$'') of the ring action.

The translation group acts on the set of Pauli operators,
and hence on the set of Laurent polynomial column vectors.
The translation along $x$-direction by a unit distance increases the $x$-coordinate by $1$,
which is equivalent to multiplication by $x$ on the Laurent polynomial column vector.
This action naturally defines an action from the group ring
$R = \FF_2[ \mathbb Z^D] \cong \FF_2[ x_1 ^{\pm 1} , \ldots, x_D^{\pm}]$.
Therefore, we speak of {\bf Pauli module} over the the translation group algebra $R$
for the additive group of all Laurent polynomial column vectors of length $2q$.

By the definition of the translation invariant codes,
the stabilizer group is closed under the translation group action.
This is to say that the set of Laurent polynomial column vectors corresponding
to the stabilizer group
is an $R$-submodule, which we call {\bf stabilizer module}, of the Pauli module.

In the zero-dimensional study, the symplectic form emerged from the commutation relation.
We interchanged the vector components corresponding to $X$ and $Z$ parts,
and using the dot product we counted the number of overlap.
The interchange of $X$ and $Z$ components is readily defined
for the Pauli module elements,
but dot product is not immediately applicable.
For Laurent polynomial column vectors, what we want is
to count the number of overlapping terms.
A trick is to consider the following antipode, denoted by the bar over the element.

\subsection{Symplectic form}

The {\bf antipode map}%
\footnote{
It is the antipode map of the group algebra $R = \FF[\ZZ^D]$
taken as a Hopf algebra.
}
is an involutive%
\footnote{
The inverse is itself.
}
$\FF$-linear map from $R$ to $R$
defined by
\begin{align}
  x_1^{a_1} \cdots x_D^{a_D} \mapsto \overline{x_1^{a_1} \cdots x_D^{a_D}} := x_1^{-a_1} \cdots x_D^{-a_D} .
\end{align}
It is then clear that for any $f, g \in R$,
\begin{align}
 \text{Sum of the coefficients of overlapping terms of $f$ and $g$} = \text{Coefficient of 1 in } \bar f g .
\end{align}
The latter quantity actually defines a $\FF$-bilinear form on $R$ (valued in $\FF$),
and naturally generalizes to $R^n$, which we call {\bf dot product}:
\begin{align}
\text{For any } v, w \in R^n, \text{ define } (v \cdot w) = 
\left(
\text{Coefficient of 1 in } \sum_{i=1}^n \bar v_i w_i  
\right)
\in \FF.
\end{align}
Clearly, this dot product generalizes the zero-dimensional ($D=0$) dot product.
Note that
\begin{align}
 (g v \cdot w ) = ( v \cdot \bar g w) \text{ for any } g \in R, ~v,w \in R^n .
 \label{eq:DotProduct2}
\end{align}
The generalization of \eqref{eq:symp}, the {\bf symplectic form}, on $R^{2n}$ is given by
\begin{align}
 \lambda(v,w) = (v \cdot \lambda_n w)
\end{align}
where $\lambda_n$ is the $2n \times 2n$ symplectic matrix \eqref{eq:symplecticMatrix}.
We used the same symbol $\lambda$ as it is a generalization of the zero-dimensional case.
With respect to this symplectic form,
we continue to say that an $R$-submodule $\Sigma$ of $R^n$ is {\bf isotropic} if
\begin{align}
 \forall v,w \in \Sigma, \quad \lambda( v,w) = (v \cdot \lambda_n w) = 0.
\end{align}
\begin{prop}
A submodule $\Sigma$ of $R^{2n}$ generated by the columns of a Laurent polynomial matrix $\sigma$
is isotropic if and only if
\begin{align}
 \bar \sigma^T \lambda_n \sigma = 0
\end{align}
as a matrix. We will denote $\bar \sigma^T$ by $\sigma^\dagger$.
\label{IsotropyCondition}
\end{prop}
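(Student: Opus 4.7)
The plan is to reformulate the isotropy of $\Sigma$ as the vanishing, entry by entry, of the matrix $\bar\sigma^T \lambda_n \sigma$ viewed as a Laurent polynomial matrix. The main tool will be the adjointness identity \eqref{eq:DotProduct2}, which, combined with $\FF$-bilinearity, converts the symplectic pairing between translates of the column generators of $\Sigma$ into coefficients of a single scalar Laurent polynomial.

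Concretely, I would write $\sigma_i$ for the $i$-th column of $\sigma$ and set $f_{ij} := \bar\sigma_i^T \lambda_n \sigma_j \in R$, which is the $(i,j)$ entry of $\bar\sigma^T \lambda_n \sigma$. A direct computation using \eqref{eq:DotProduct2} yields
\begin{align*}
\lambda(g\sigma_i,\, h\sigma_j) = (g\sigma_i \cdot \lambda_n h\sigma_j) = (\sigma_i \cdot \bar g\, \lambda_n h\sigma_j) = \text{coefficient of } 1 \text{ in } \bar g h f_{ij}
\end{align*}
for every $g,h \in R$ and all $i,j$. This identity is essentially the whole content of the proposition in disguise.

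For the forward direction, I would use that $\Sigma$ is an $R$-submodule, so each translate $g\sigma_i$ still lies in $\Sigma$ and hence $\lambda(g\sigma_i,\sigma_j) = 0$ for every $g \in R$. Taking $g = x_1^{-a_1}\cdots x_D^{-a_D}$ (so $\bar g = x_1^{a_1}\cdots x_D^{a_D}$) and $h = 1$ in the identity above shows that the coefficient of the monomial $x_1^{-a_1}\cdots x_D^{-a_D}$ in $f_{ij}$ vanishes; since this holds for every multi-index, $f_{ij} = 0$ in $R$, and hence $\bar\sigma^T \lambda_n \sigma = 0$. The converse is then routine: for arbitrary $v = \sum_i g_i \sigma_i$ and $w = \sum_j h_j \sigma_j$ in $\Sigma$, $\FF$-bilinearity reduces $\lambda(v,w)$ to a sum of terms $\lambda(g_i\sigma_i, h_j\sigma_j)$, each of which vanishes by the identity together with $f_{ij} = 0$.

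The obstacle worth flagging is conceptual rather than computational: evaluating the bilinear form only on the generators $\sigma_i$ themselves extracts a single coefficient (the constant term) of each $f_{ij}$, which is much weaker than $f_{ij} = 0$ in $R$. It is precisely the translation invariance, i.e., the $R$-module closure of $\Sigma$, that lets us promote ``coefficient of $1$ vanishes'' into ``every coefficient vanishes.''
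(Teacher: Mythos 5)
Your proof is correct and follows essentially the same route as the paper: both directions hinge on the identity that $\lambda(g\sigma_i, h\sigma_j)$ equals the coefficient of $1$ in $\bar g h\, (\bar\sigma_i^T \lambda_n \sigma_j)$, and the forward direction extracts every coefficient of the $(i,j)$ entry of $\bar\sigma^T \lambda_n \sigma$ by letting a monomial shift range over $\ZZ^D$, just as the paper does (you shift the first argument while the paper shifts the second, which is immaterial). Your closing remark explaining that $R$-module closure is exactly what promotes ``constant coefficient vanishes'' to ``all coefficients vanish'' is a nice articulation of what the paper leaves implicit.
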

\begin{proof}
Say $\sigma$ is $2n \times t$.
($\Leftarrow$)
An arbitrary element of $\Sigma$ is $\sigma h$ for some column vector $h \in R^t$.
Thus, 
\begin{align*}
 \lambda( \sigma h, \sigma h' ) = \text{Coeff. of 1 in }( h^\dagger \sigma^\dagger \lambda_n \sigma h' ) = 0 .
\end{align*}
($\Rightarrow$)
We must have
\begin{align*}
\text{Coeff. of 1 in }( e_i^\dagger \sigma^\dagger \lambda_n \sigma g^{-1} e_j ) = 0 
\end{align*}
for any basis unit vector $e_i$, $e_j$, and any monomial $g$.
$e_i$ and $e_j$ selects $(i,j)$-entry of the matrix $\sigma^\dagger \lambda_n \sigma$.
The coefficient of $1$ in a Laurent polynomial $f$ 
that is shifted by $g^{-1}$ is exactly the coefficient of $g$ in $f$.
Since $g$ addresses any term,
this implies that the $(i,j)$-entry is identically zero.
\end{proof}

\begin{rem}
The dot product is a non-degenerate symmetric bilinear form on $R$.
That is, if $(v \cdot w ) = 0$ for all $w \in R$, then we must have $v = 0$.
This means that the $\FF$-linear map $v \mapsto f_v \in R^*$, 
where $f_v: w \mapsto (v \cdot w)$ is a linear functional on $R$,
is injective.
This association is not surjective whenever $D > 0$,
but becomes surjective if we consider ``finite systems''
by imposing conditions such as $x_i^{L_i} = 1$
since $R$ becomes finite dimensional $\FF$-vector space.

The dual space $R^*$ can be endowed with an $R$-module structure
by defining $r f( \star ) = f( \bar r \star )$ for any $r \in R$ and $f \in R^*$.
If $f = (v \cdot \star)$, then $r f = f( \bar r \star ) = (v \cdot \bar r \star) = (rv \cdot \star)$ 
by \eqref{eq:DotProduct2}.

Given a $R$-module map $\varphi : R^n \to R^m$, i.e., an $m \times n$ matrix with entries in $R$,
we can consider its dual $\varphi^* : (R^m)^* \to (R^n)^*$ by the rule
$(\varphi^* f)(\star) = f( \varphi(\star))$.
If the association $v \mapsto (v \cdot \star)$ is bijective,
then we can consider 
\[
R^m \xrightarrow{\cong} (R^m)^* \xrightarrow{\varphi^*} (R^n)^* \xrightarrow{\cong} R^n
\]
and ask the matrix representation of this $R$-module map.
The answer is simply $\varphi^\dagger$.
This is in fact how the transpose for real matrices and the hermitian conjugate for complex matrices
are defined with respect to the usual inner product.
\hfill $\diamond$ \end{rem}

\begin{rem}
The base ring $R = \FF[\ZZ^D]$ can be obtained from a polynomial ring in $D$ variables
by inverting a single element $x_1 \cdots x_D$.
In particular, it is Noetherian.
Hence, every submodule of $R^n$ is finitely generated,
which means that there always exists a (rectangular) finite matrix $\sigma$
whose columns generate the given submodule over $R$.
Almost all statements below require that the stabilizer module
is finitely generated,
which is guaranteed by the finite dimensional lattice $\ZZ^D$.
For thorough treatment of Noetherian rings and modules,
consult Atiyah-MacDonald~\cite{AtiyahMacDonald}.
\hfill $\diamond$ \end{rem}

\subsection{Stabilizer and Excitation map}

When there are $q$ qubits per site on the lattice $\ZZ^D$,
the matrix equation of \ref{IsotropyCondition} reads
\begin{align}
\sigma^\dagger \lambda_q \sigma = 0.
\end{align}

Note that this is a genuine generalization of the zero-dimensional case $D=0$.
The number $q$ is the total number of qubits in the system,
and $\dagger$ reduces to the usual transpose.
The matrix $\sigma$ is usually called a {\bf generating matrix} of the code on $q$ qubits.
In our convention, the columns of $\sigma$ are the stabilizer group generators.%
\footnote{
Many other references make convention where rows of a binary matrix represent
stabilizer group generators.
}
We call $\sigma$ a {\bf stabilizer map}
in order to emphasize the importance of the image of this $R$-linear map 
rather than the particular matrix representation.

It will prove useful to think of the $t \times 2q$ matrix 
\begin{align}
\epsilon = \sigma^\dagger \lambda_q
\end{align}
separately from $\sigma$ and define a chain complex
\begin{align}
(E = R^t) \xleftarrow{\epsilon} (P=R^{2q}) \xleftarrow{\sigma} (G = R^t)
\end{align}
of length $2$.
(A {\bf chain complex} is an array of maps such that the composition of any consecutive maps is zero.
It has nothing to do with complex numbers.)

Previously in Section~\ref{sec:GeometricLocality},
we briefly noted that the geometrically local additive codes model
physical systems.
This is because local codes define a Hamiltonian
\begin{align}
H = - \sum_g P_g
\label{eq:Hamiltonian}
\end{align}
where $P_g$ are the hermitian local stabilizer generators.
The generators may be redundant.
The Hamiltonian actually depends on the particular choice
of generators, and is not uniquely determined by the code space.
The lowest energy eigenspace ({\em ground} space) of $H$, however, is by definition 
the code space regardless of the choice of $P_g$.

What about higher energy states?
These {\em excited} eigenspace decomposes into the eigenspaces of individual generators $P_g$;
the ground space has eigenvalue $P_g = +1$.
For each eigenspaces, 
we may visualize the distribution of the eigenvalues of $P_g$'s in the lattice,
as $P_g$ is supported on a small region of the lattice.
These eigenvalues are precisely what we can measure
without worrying about corrupting the encoded states in the ground (code!) space.
Any eigenvalue $-1$ of $P_g$ is called a {\bf defect} or {\bf excitation}.
\begin{exc}
Show that for any common eigenspace of $P_g$'s, 
there exists a Pauli operator that maps the ground space onto it.
\hfill $\diamond$ \end{exc}

In the translation invariant codes, the Hamiltonian can also be chosen to be translation invariant,
and the choice of the Hamiltonian is conveniently expressed by
a particular matrix $\sigma$ of the stabilizer map.
The columns represent Pauli operators supported near the origin,
and all the other terms in the Hamiltonian is obtained by translations.
In this case, the excitations are less sensitive to the choice of the matrix $\sigma$.
Consider a state $\ket{\psi_e}$ that has an excitation $P_i = -1$ around a site $i$ with respect to the $\sigma$.
If we choose another Hamiltonian, $- \sum_{g'} P'_{g'}$, for the same code space, represented by $\sigma'$,
then the operator $P_i$ as a Laurent polynomial column vector $v$ should be given by
some $R$-linear combination of the columns of $\sigma'$:
\begin{align}
v = \sum_j a_j \sigma'_j, \quad a_j \in R.
\end{align}
Since $a_j$ consists of finitely many terms,
this means that $P_i$ is a product of some finitely many local terms $P'_{g'}$ around $i$,
among which there must be an operator $P'_{g'}$ that has eigenvalue $-1$ on $\ket{ \psi_e}$.
The location of the excitation $P'_{g'} = -1$ is not too different from that of $P_i = -1$.

The matrix $\epsilon$ yields a convenient way to determine the locations of the defects (excitations)
when a Pauli operator acts on the ground state.
Let $\ket{\psi}$ be a ground state (code state), and let $P$ be arbitrary Pauli operator that is finitely supported.
To locate the excitations of $\ket{\psi'} = P \ket \psi$, we consider
\begin{align}
P_g \ket{\psi'} = P_g P \ket \psi = \pm P P_g \ket \psi = \pm P \ket \psi = \pm \ket{\psi'}
\end{align}
where the sign $\pm$ is determined by the commutation relation between $P_g$ and $P$.
We know how to express this sign by the symplectic form.
The Hamiltonian term $P_g$ is a translation by $g \in \ZZ^D$ 
of one of the columns of $\sigma$, say $i$-th column $\sigma_i$.
The Pauli operator $P$ is expressed by some column vector $v$.
The anti-commutation happens precisely when
\begin{align}
1 = (g\sigma_i \cdot \lambda_q v)  = ( g \cdot \sigma_i^\dagger \lambda_q v).
\end{align}
If we vary $g \in \ZZ^D$, 
then we will collect all positions of the excitation associated with $\sigma_i$,
and may convert these position data into a Laurent polynomial.
But, this Laurent polynomial is exactly $\sigma_i^\dagger \lambda_q v$.
If we collect these polynomials, one for each column $i$ of $\sigma$,
we obtain a column vector of Laurent polynomials of length $t$,
which is equal to $\sigma^\dagger \lambda_q v$.
This shows how $\epsilon = \sigma^\dagger \lambda_q$
defines a map from the Pauli operators to the excitations.
The map $\epsilon$ is the {\bf excitation map}.

\begin{exc}
Recall \ref{LogicalOperatorsCommutant}.
Assume periodic boundary conditions for the lattice
to work with finite systems,
and show that a Pauli operator is logical if and only if 
its Laurent polynomial representation $v$ satisfies $\epsilon(v) = 0$.
Show also that it is nontrivial if $v \notin \im \sigma$.
\label{LogicalOperatorExcitationMap}
\hfill $\diamond$ \end{exc}

\subsection{Symplectic/Locality-preserving Clifford transformations}
\label{sec:LatticeSympTrans}

In the zero-dimensional study above,
we identified symplectic transformations that preserves the symplectic form
in the abstract vector space, and found a generating set for the symplectic group.
We then showed that, on the $\FF_2$-symplectic space 
derived from the commutation relation among Pauli operators on qubits,
the generators of the symplectic group
are induced by Clifford unitary transformations on qubits.
Here we parallel the discussion with the translation invariance.
Recall that $q$ is the number of qubits per site, 
and $R = \FF_2 [x_1^\pm, \ldots, x_D^\pm]$ is our base ring, the translation group algebra.

A $2q \times 2q$ matrix $T$ on $R^{2q}$ is symplectic if it satisfies the matrix equation
\begin{align}
 T^\dagger \lambda_q T = \lambda_q .
\label{SymplecticCondition}
\end{align}
Restricting the entries of $T$ to be in $\FF_2$ we recover the symplectic group in the zero-dimensional case.
This subgroup of the symplectic group is induced by the application of
the Hadamard, the phase, and the controlled-not gate on every unit cell, uniformly over the lattice.

There are other symplectic transformations,
of which we enumerate a few.
When $q=1$,
for any monomial $g=x_1^{a_1} \cdots x_D^{a_D}$, the matrix $ \begin{pmatrix} g & 0 \\ 0 & g  \end{pmatrix}$
is symplectic. Interpreting in terms of action on qubits,
this amounts to translating qubits into $(a_1,\ldots,a_D)$-direction.
This certainly maps a Pauli operator to a Pauli operator and preserves the size of the support;
it is a locality-preserving Clifford transformation.

Assuming $q=1$ still,
we consider transformation of form $S_f = \begin{pmatrix} 1 & 0 \\ f & 1 \end{pmatrix}$.
Plugging it into \eqref{SymplecticCondition}, we obtain an equation $f = \bar f$.
The matrix $S$ of \eqref{ElementaryZeroDSymplecticTransformation} corresponds to 
$S_f$ for $f \in \FF_2 \subseteq R$.
$S_f$ with a general $f$ can be split into a product of 
finitely many $S_{m + \bar m}$ for a monomial $m$ 
and at most one $S_a$ for $a \in \FF_2$.
The transformation $S_{m + \bar m}$ is new arising from the translation structure,
and is induced by a {\em controlled}-$Z$ gate
\begin{align}
 U_{CZ} = \mathrm{diag}( 1,1,1,-1)
\end{align}
on a pair of qubits. Since it is diagonal, it commutes with any other $U_{CZ}$ acting on other qubits.
This commutes with Pauli $Z$, so only nontrivial action is on Pauli $X$.
It is simple calculation to verify that $U_{CZ} (X \otimes I) U_{CZ}^\dagger = X \otimes Z$ and
$U_{CZ} (I \otimes x) U_{CZ} = Z \otimes X$.
This implies that if $U_{CZ}$ acts on every pair of qubits separated by the displacement $m \in \ZZ^D$,
then the induced symplectic transformation is precisely $S_{m + \bar m}$.

When $q \ge 2$, we can generalize the controlled-NOT \eqref{CNOT}.
If we apply the controlled-NOT translation invariantly 
where the target qubit is at $m \in \ZZ^D$
{\em relative} to the control qubit,
then the presence of $X$ at the control qubit will bring a new $X$ to the target qubit,
and the presence of $Z$ at the target qubit will bring a new $Z$ to the control qubit.
Since the controlled-NOT's on two different but overlapping pairs in general do not commute,
the control and target qubit should be the different qubit within the unit cell,
in order to define the controlled-NOT uniformly over the lattice unambiguously.
Such a translation-invariant controlled-NOT induces a symplectic transformation
\begin{align}
 C|_{\langle e_i , e_j , e_{i+n} , e_{j+n} \rangle } = C(m) :=
\begin{pmatrix}
 1 & 0 & & \\
 m & 1 & & \\
  &  & 1 & -m \\
  &  & 0 & 1
\end{pmatrix} \text{ where }~~1 \le i \neq j \le n
\label{CNOT2}
\end{align}
where $m$ is a monomial of $R$.
Since $C(m)C(m') = C(m+m')$,
the entry of $m$ in the matrix \eqref{CNOT2} 
can be occupied by any Laurent polynomial of $R$.

\begin{rem}
A natural question in analogy with \ref{SymplecticTransformationGenerators}
is whether the symplectic transformations that are found so far
generate the full symplectic group.
The author does not know the answer when $D \ge 2$.
The case $D=0$ is covered in \ref{SymplecticTransformationGenerators},
and the case $D=1$ will be solved in the next section.
\hfill $\diamond$ \end{rem}

\section{Low dimensions}

\subsection{Smith normal form}

For a moment, we digress from the translation invariant additive codes,
and consider matrices over a ``nice'' ring.
The conclusion will have immediate applications in one-dimensional additive codes.

A ring is an abelian group where two elements may be multiplied.
Integers, complex numbers, polynomials, etc. form rings.
An {\bf ideal} $I$ of a ring $R$ is a subset of the ring
that is a subgroup of $R$ under addition
such that $r m \in I $ for all $r \in R$ and $m \in I$.
So, an ideal is a collection of multiples of its members,
and sums thereof.
Most of the time, we only care about the generators,
the multiples of which form the ideal.
When the generators $g_i$ are known, we write $I = (g_i,\ldots)$.
A commutative ring with 1 is a {\bf principal ideal domain}
if no nonzero elements multiply to become zero 
and every ideal is generated by a single element.
Important examples are the ring of integers $\ZZ$,
and the (Laurent) polynomial ring in one variable over a field $\FF[x]$ ($\FF[x,x^{-1}]$).
These are examples of {\bf Euclidean domains}.

Why is $\ZZ$ a principal ideal domain?
Suppose $I = (a,b)$ is an ideal of $\ZZ$.
If $b > a > 0$, then we can consider $(a, b-a)$,
and see that it is the same ideal as before 
because an ideal is closed under addition.
We can proceed similarly to find smaller and smaller generators,
but this must stop at some point because positive integers cannot decrease forever.
In the end, we must be left with a single number (principal generator),
which is actually the greatest common divisor of $a$ and $b$.
This is the Euclid's algorithm, and proves that $(a,b) = (\gcd(a,b))$.
This argument will always give the single generator of 
an ideal of $\ZZ$ generated by a finitely many elements.
For an arbitrary ideal $I$ where the number of generators is unknown,
one can consider the smallest positive member $d$ of the ideal,
and prove that every member of $I$ has to be a multiple of $d$.
Important is the division algorithm 
that allows us to find a {\em smaller} element 
as a linear combination of two given elements.
This argument applies with a slight modification
to the univariate polynomial and Laurent polynomial ring.
The ``size'' of a polynomial is measured by the degree of the polynomial,
and the size of a Laurent polynomial is measured by
the difference of the greatest exponent and the least exponent.
We can rephrase the implication of the algorithm as follows.
\begin{prop}
For any column vector $v$ over an Euclidean domain,
there exists a finite product $M$ of elementary row operation matrices
such that the column vector $Mv$ consists of a single nonzero entry.
\end{prop}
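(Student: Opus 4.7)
The plan is to imitate the Euclidean algorithm on the entries of $v$, using elementary row operations to drive all entries but one to zero. Let $\delta : R \setminus \{0\} \to \mathbb{N}$ be the Euclidean norm, so that for any $a,b \in R$ with $b \neq 0$ there exist $q,r \in R$ with $a = qb + r$ and either $r = 0$ or $\delta(r) < \delta(b)$. The strategy is to iterate a single reduction step whose effect is to strictly decrease the minimum of $\delta$ over the nonzero entries of $v$; well-ordering of $\mathbb{N}$ then guarantees termination.

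Concretely, I proceed as follows. If $v = 0$, take $M = \id$ and we are done. Otherwise, let $i$ be an index minimizing $\delta(v_i)$ over nonzero entries. First I swap rows $1$ and $i$ (a row swap is a composition of add-a-multiple operations, so it is an admissible elementary row operation) to place the minimum-norm entry in position $1$. Then, for each $j \neq 1$, I use the division algorithm to write $v_j = q_j v_1 + r_j$ with $r_j = 0$ or $\delta(r_j) < \delta(v_1)$, and I subtract $q_j$ times row $1$ from row $j$. After this sweep, position $1$ holds the old $v_1$, and every other entry $v_j$ has been replaced by $r_j$, which is either zero or of norm strictly less than $\delta(v_1)$.

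If every $r_j$ vanishes, only position $1$ is nonzero and the proof is complete. Otherwise, some $r_j \neq 0$ satisfies $\delta(r_j) < \delta(v_1)$, so the minimum of $\delta$ over nonzero entries has strictly dropped. I now repeat the entire reduction step on the updated vector. Because $\delta$ takes values in $\mathbb{N}$, this minimum cannot decrease indefinitely, so after finitely many iterations no nonzero entry remains outside position $1$. The product $M$ of all the elementary row operation matrices used along the way is the desired matrix.

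The argument has no real obstacle: the only points to watch are (i) checking that row swaps and ring-multiple additions are both captured by the notion of elementary row operation (they are, by the usual identity expressing a swap as a product of three shears), and (ii) confirming that the termination invariant (minimum $\delta$ over nonzero entries) is strictly monotone. Both are built into the Euclidean algorithm. Note also that the argument is insensitive to signs or units: if one prefers the final nonzero entry to be a distinguished associate (e.g., monic polynomial), a single extra unit-scaling row operation takes care of it.
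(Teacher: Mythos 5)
Your proof is correct and takes essentially the same route the paper intends: the paper presents this proposition as a direct rephrasing of the Euclidean algorithm sketched in the preceding paragraphs, and your argument is precisely that algorithm carried out on the entries of $v$, with a clean termination invariant (strict decrease of the minimum Euclidean norm over nonzero entries). The only nitpick is the trivial $v=0$ case, where $Mv=0$ has no nonzero entry at all; the proposition should be read as assuming $v\neq 0$, so this does not affect the substance.
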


\subsubsection{Classification of finitely generated abelian groups}

A group $G$ is said to be {\bf finitely generated}
if there is a finite set of members (generators) $g_1, \ldots, g_n$
of the group such that all other members
can be written as a finite product of the generators and their inverses.
Following the convention for abelian groups,
we should denote the group operation as a sum instead of a product.
So, any group element can be written as
\begin{align}
 \sum_{i=1}^n c_i g_i
 \label{eq:LinearCombination}
\end{align}
where $c_i \in \ZZ$. This means that the $n$-tuple of integers $(c_1,\ldots, c_n)$
can express any element of the group, and we may say that
the map
\begin{align}
 \ZZ^n \to G
\end{align}
is surjective.
This map is a group homomorphism as one can easily check from \eqref{eq:LinearCombination}.
Whenever we see a surjective homomorphism we should consider the kernel $K$.
The kernel itself is finitely generated%
\footnote{
This is not too trivial.
One of the best ways to show this is through the notion of {\bf Noetherian} rings and modules.
Here is a sketch.
A Noetherian module is one of which any submodule is finitely generated.
An equivalent definition is that every increasing chain of submodules
saturates.
Similarly, a Noetherian ring is Noetherian if it is Noetherian module over itself.
After showing the equivalence of the definitions,
one can further show that finitely generated modules over a Noetherian ring is Noetherian.
Since $\ZZ$ is a principal ideal domain, it is Noetherian.
The kernel is a submodule of $\ZZ^n$ and therefore is finitely generated.
}
by $k_1,\ldots, k_m \in \ZZ^n$.
We again think of the kernel as the image of the map
\begin{align}
 \varphi : \ZZ^m \to \ZZ^n
\end{align}
sending the unit vectors of $\ZZ^m$ to $k_i$.
In this way we \emph{present} the group $G$ as the cokernel of $\varphi$:
\begin{align}
\ZZ^m \xrightarrow{\varphi} \ZZ^n \to G \to 0.
\end{align}
If we express the map $\varphi$ as a matrix, it will contain the generators $k_1, \ldots, k_m$
in the columns of the $n \times m$ integer matrix $M$.

Now, what do the row and column operations on $M$ correspond to?
Instead of choosing the generating set of $G$ as $g_1,\ldots, g_n$,
we could choose $g_1+ g_2, g_2, g_3, \ldots, g_n$.
In the latter case, we would write \eqref{eq:LinearCombination} as
\begin{align}
 g = c_1(g_1+g_2) + c_2 g_2 + \cdots c_n g_n = c_1 g_1 + (c_1 + c_2) g_2 + \cdots c_n g_n 
\end{align}
We see that different choices of the generators lead to row operations on $M$.
Similarly, the choice of generators of $K$ is of course arbitrary,
and since the columns of $M$ are the generators of $K$,
this corresponds to column operations on $M$.

Therefore, any row and column operations on $M$ does not change 
the isomorphism class of $G$;
they are just differences how we describe the group $G$.
Let us use the Euclid's algorithm in order to simplify $M$.
Pick any nonzero column and run the Euclid's algorithm
to eliminate all but one entries in the upper-left corner.
Run the algorithm on the first row, to single out a nonzero entry on the upper-left corner.
If any nonzero element appears, repeat.
Since a positive integer cannot decrease forever,
this procedure must end after finitely many iterations.
A nonzero element will reside at the upper-left corner
and all other entries in the first column and first row will be zero.
\begin{align}
 \begin{pmatrix}
  \star & \star & \star \\ \star & \star & \star \\ \star & \star & \star 
 \end{pmatrix} \mapsto
 \begin{pmatrix}
  d_1 & 0 & 0 \\ 0 & \star & \star \\ 0 & \star & \star 
 \end{pmatrix}
\end{align}
If the bottom-right block contains an integer that is not divisible by $d_1$,
then we bring that integer to the first row, and repeat the above.
This will decrease the number in the upper-left corner,
and after finitely many iterations,
$d_1$ will divide all the other entries.

We inductively proceed to obtain
\begin{align}
 \begin{pmatrix}
  d_1 & 0 & 0 \\ 0 & \star & \star \\ 0 & \star & \star 
 \end{pmatrix} \mapsto
  \begin{pmatrix}
  d_1 & 0 & 0 \\ 0 & d_2 & 0 \\ 0 & 0 & \ddots 
 \end{pmatrix}
\end{align}
where
\begin{align}
 d_1 ~|~ d_2 ~|~ \cdots ~|~ d_{r}.
\end{align}
where $d_r > 0$ and $r$ is the rank of the matrix $M$,
i.e., $r$ is maximal such that some $r \times r$ submatrix of $M$ has nonzero determinant.
This diagonal form is called the {\bf Smith normal form} of $M$.
The diagonal elements are called {\bf elementary divisors} of $M$.
We have defined the elementary divisors as a result of the Smith algorithm.
There is some arbitrariness in the details of the algorithm.
So, a priori, we do not know whether the elementary divisors
are unique regardless how we obtain them.
However, there is another characterization of elementary divisors,
which will prove their uniqueness.
For any rectangular matrix $M$, let $I_s(M)$ be 
the ideal generated by the determinants of $s \times s$ submatrices of $M$,
called {\bf $s$-th determinantal ideal} of $M$.
\begin{prop}
It holds that $I_s(M) = I_s(AM) = I_s(MB)$ for any invertible matrices $A$ and $B$.
The elementary divisors of $M$ are determined by the determinantal ideals of $M$,
and hence are uniquely determined by $M$.
\end{prop}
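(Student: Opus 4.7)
The plan is to prove the two assertions in sequence, with the second built on the first. For invariance under row and column multiplication, I would invoke the Cauchy--Binet formula: every $s \times s$ minor of a product $PQ$ is a sum of products of $s \times s$ minors of $P$ with $s \times s$ minors of $Q$. Applied with $PQ = AM$, this gives $I_s(AM) \subseteq I_s(M)$. Since $A$ is invertible over the ring, the same inclusion applied to $A^{-1}$ and $AM$ yields $I_s(M) = I_s(A^{-1}(AM)) \subseteq I_s(AM)$, so we have equality. The case $I_s(MB) = I_s(M)$ follows by the symmetric argument on the right (or by applying the first case to the transposed matrices).

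For the second assertion, I would compute $I_s$ directly on the Smith normal form $D = \mathrm{diag}(d_1, \ldots, d_r, 0, \ldots, 0)$. An $s \times s$ submatrix of a diagonal matrix has nonzero determinant only when its row and column index sets coincide and lie in $\{1, \ldots, r\}$; its determinant is then $d_{i_1} \cdots d_{i_s}$ for some $1 \le i_1 < \cdots < i_s \le r$. The divisibility chain $d_1 \mid d_2 \mid \cdots \mid d_r$ forces every such product to be a multiple of $d_1 d_2 \cdots d_s$, so
\begin{align*}
I_s(D) = (d_1 d_2 \cdots d_s),
\end{align*}
a principal ideal.

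Combining the two parts, $I_s(M) = I_s(D) = (d_1 d_2 \cdots d_s)$ for every $s$. Writing $I_s(M) = (\delta_s)$ with the convention $\delta_0 = 1$, one has $\delta_{s-1} \mid \delta_s$, and the quotient $d_s = \delta_s / \delta_{s-1}$ is well-defined up to multiplication by a unit of the ring. Thus each elementary divisor is recovered intrinsically from the determinantal ideals of $M$, which by the first part depend only on $M$ itself and not on the sequence of elementary operations used to reduce it.

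The main potential obstacle is keeping the bookkeeping straight: one must be careful that Cauchy--Binet is valid over an arbitrary commutative ring (it is, since the proof is purely combinatorial in the Leibniz expansion) and that "invertible" here means invertible over the ring (so $\det A$ is a unit, not just nonzero). Neither step is deep; once these points are in place, the rest is a direct calculation on the diagonal form plus the divisibility chain.
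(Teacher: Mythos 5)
Your proof is correct and follows essentially the same route as the paper's: both establish that minors of $AM$ lie in the ideal generated by minors of $M$ (you cite Cauchy--Binet, the paper unfolds the same computation via multilinearity of the determinant in the rows), then use invertibility of $A$ to get the reverse inclusion, and finally read off $I_s = (d_1 \cdots d_s)$ from the Smith normal form. The extra bookkeeping you supply on the divisibility chain and the recovery $d_s = \delta_s / \delta_{s-1}$ (valid for $s$ at most the rank) is left implicit in the paper but is the same content.
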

\begin{proof}
The second claim follows from the first because
$d_1 \cdots d_s$ is the (principal) generator of $I_s(M)$.

To show the first claim, suppose three matrices satisfy $AM=C$.
An ($s \times s$)-submatrix $C'$ of $C$ is the product of some $s \times s'$ submatrix $A'$ of $A$ 
and some $s' \times s$ submatrix $M'$ of $M$.
Each row $v_i$ is a linear combination of rows $b_j$ of $M'$ as $v_i = \sum_{j_i} A'_{i j_i} b_{j_i}$.
The minor $\det C'$ is a multilinear function of rows $v_i$ of $C'$;
\begin{align}
\det C' 
&= \det( v_1, \ldots, v_s ) \\
&= \sum_{j_1, j_2,\ldots, j_s} A'_{1 j_1} A'_{2 j_2} \cdots A'_{s j_s} \det( b_{j_1}, \ldots, b_{j_s} ) .
\end{align}
This implies that $\det C'$ is a linear combination of minors of $M'$.
Since $C'$ was arbitrary, we see that $I_s(C) \subseteq I_s(M)$.
If $A$ is invertible, then the opposite inclusion holds, implying an equality.
This proves $I_s(AM) = I_s(M)$.
To show $I_s(MB) = I_s(B)$, transpose everything above.
\end{proof}
Since $G$ is the cokernel of $M$,
we see that
\begin{align}
 G \cong \ZZ/d_{t+1} \ZZ \oplus \cdots \oplus \ZZ / d_r\ZZ \oplus \ZZ^{n-r} .
\end{align}
where $d_t = 1 < d_{t+1}$.%
\footnote{
To show that this is a unique expression,
observe the following.
The number of the direct summands $\ZZ$ is the vector space dimension
upon tensoring $\mathbb Q$ over $\ZZ$.
In addition, for the minimal $n$, either $d_1 = 0$ or $d_1 > 1$.
}
We have decomposed the group $G$ into familiar abelian groups,
effectively by row and column operations on the presentation of $G$.

\begin{rem}
Over any principal ideal domain the Smith normal form
is defined: For any matrix $M$ there exists invertible matrices $A$ and $B$
such that $A M B$ is diagonal such that upper-left elements divide
lower-right elements.
The difference is that over a non-Euclidean domain,
the algorithm to find $A$ and $B$ 
may need to do more than just adding one row (column) to another row (column).
The conclusion that the Smith normal form is unique (i.e., the elementary divisors are well-defined)
remains true.
\hfill $\diamond$ \end{rem}

\subsection{Classification in one dimension}

We turn back to the lattice codes with translation structure,
here with the simplest possible translation.
Consider qubits arranged on a straight line
where $q$ qubits are clustered at each site $i \in \ZZ$.
The translation group is $\ZZ$ 
and we identify the group algebra as $R=\FF_2[x,x^{-1}] \cong \FF_2 [\ZZ]$.
We have shown that
a translation invariant additive code on this array of qubits is defined by
a stabilizer module $\Sigma$ over $R$.
The stabilizer module has finitely many generators as an $R$-module,
and if we express the generators in the columns of a $2q \times t$ matrix $\sigma$
then it satisfies
\begin{align}
 \sigma^\dagger \lambda_q \sigma = 0
\end{align}
by \ref{IsotropyCondition}.
The entries of $\sigma$ are Laurent polynomials with one variable $x$.
The ring $R$ happens to be a Euclidean domain,
and we can try to convert the matrix $\sigma$ into a simpler form
by the elementary symplectic transformations that we found in Section~\ref{sec:LatticeSympTrans}.
We will show that the Smith normal form of $\sigma$ can be obtained.
\begin{prop}
For any translation-invariant one-dimensional additive code,
there exists a locality-preserving and translation-invariant 
Clifford transformation such that
the stabilizer map is diagonal with zero matrix in the bottom half.
The diagonal elements completely determines the equivalence class
of the translation-invariant additive code up to Clifford operations.
\label{1Dpreclassification}
\end{prop}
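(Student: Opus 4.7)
The plan is to perform a Smith-normal-form-style diagonalization of $\sigma$, where column operations are free (they merely reselect the generators of the stabilizer module $\Sigma$) and row operations are restricted to the symplectic group over $R=\FF_2[x,x^{-1}]$ realized by translation-invariant, locality-preserving Clifford circuits. The ring $R$ is a Euclidean domain (the length being the difference between the top and bottom exponents), which is what allows such a reduction to terminate.

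First I would assemble the available symplectic row operations from Section~\ref{sec:LatticeSympTrans}. Since $GL_q(R)$ over a Euclidean domain is generated by elementary matrices, the CNOTs $C(m)$, qubit permutations, and per-qubit translations together realize the full action of $GL_q(R)$ on the X-block rows, with the induced transpose-inverse action on the Z-block rows. Hadamards $H_i$ swap the $i$-th X-row with the $i$-th Z-row, and the phase gate $S_h$ with $h=\bar h$ shears within a single qubit.

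The core ingredient is a single-column lemma: every self-isotropic vector $v \in R^{2q}$ (satisfying $\lambda(v,v)=0$) can be transformed by symplectic row operations to $d\,e_1$, where $d$ is the gcd of the entries of $v$ (up to a unit). Granting this, the first statement follows by induction on the rank $s$ of $\Sigma$. At each stage: apply the lemma to column 1 of $\sigma$; clean the first X-row by free column operations (if the pivot $d_1$ fails to divide some row-1 entry, import the offender into row 1 via a CNOT and restart, exactly as in the classical Smith algorithm); observe that $\sigma^\dagger \lambda \sigma=0$ at the $(1,k)$-entries forces the entire row $q+1$ of $\sigma$ to vanish; then delete row 1, row $q+1$, and column 1 and recurse on the remaining $(2q-2)\times(s-1)$ submatrix. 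The divisibility $d_1 \mid d_2 \mid \cdots$ is arranged by the same bookkeeping as in the classical algorithm. For the uniqueness statement, the determinantal ideals $I_r(\sigma)$ are invariant under left multiplication by any invertible matrix over $R$ (in particular a symplectic one) and under arbitrary column operations, so the elementary divisors coincide (up to units) with the diagonal produced by the algorithm and are Clifford invariants.

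The main obstacle is the single-column lemma. Reducing the X-block part of $v$ via CNOTs is painless because we have all of $GL_q(R)$ at our disposal. The Z-block entries $v_{q+i}$ at qubits $i\neq 1$ can be cleared by Hadamarding qubit $i$ to expose them in the X-block, applying a CNOT against the pivot, and possibly Hadamarding back. The subtle case is $v_{q+1}$, on the same qubit as the pivot $d=v_1$, since the only symplectic operation that modifies it using $d$ is $S_h$ with $h=\bar h$. Here the isotropy is essential: after the other entries have been cleared, $\lambda(v,v)=0$ reduces to $\bar d c = \bar c d$ with $c = v_{q+1}$, and a short ancillary fact (if $\bar a b = \bar b a$ in $R$ and $a\mid b$, then $b/a$ is self-conjugate) supplies the required $h$. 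When $d\nmid c$, one iterates an auxiliary Euclidean-style exchange on the pair $(d,c)$, alternating Hadamards on qubit 1, $S_h$'s with self-conjugate $h$, and translations, which strictly decreases the Euclidean length and so terminates with one of $d,c$ dividing the other.
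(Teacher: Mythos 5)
Your route is essentially the paper's --- Smith-normal-form diagonalization of $\sigma$ over the Euclidean ring $R=\FF_2[x,x^{-1}]$, with free column operations, symplectic locality-preserving row operations, and the observation that after clearing the pivot column the isotropy forces the pivot qubit's $Z$-row to vanish so one recurses on $q$ --- and you correctly isolate the crux: clearing the pair $(d,c)$ of entries on the pivot qubit, where the only available shear is $S_h$ with $h=\bar h$. But at exactly that point there is a gap. When $d \nmid c$ you assert that alternating Hadamards, $S_h$'s with self-conjugate $h$, and translations ``strictly decreases the Euclidean length and so terminates.'' That is the claim that needs proof, not a maneuver one may invoke: ordinary polynomial division uses an arbitrary quotient, while here the quotient must be self-conjugate, and it is not automatic that a self-conjugate shear can make strict progress.

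That it can is precisely what the isotropy relation $\bar d c = \bar c d$ buys, and the paper makes it explicit. Writing $d=\alpha x^{a}+\cdots+\beta x^{b}$ and $c=\gamma x^{a'}+\cdots+\delta x^{b'}$ with nonzero end-coefficients and $b-a\le b'-a'$, comparing the two extreme terms of $\bar d c=\bar c d$ forces $a+b=a'+b'$ and $\gamma/\alpha=\delta/\beta$. Then $h=(\gamma/\alpha)\bigl(x^{a'-a}+x^{b'-b}\bigr)$ is self-conjugate because $a'-a=-(b'-b)$, and $hd$ matches $c$ at both its highest and lowest exponents, so $c+hd$ is strictly shorter at both ends; when the exponent ranges coincide $h$ degenerates to the scalar $\gamma/\alpha$, which still cancels both endpoints. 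Supplying this $h$ is the missing content of your single-column lemma. Everything else in the plan --- realizing $GL_q(R)$ on the $X$-block via CNOTs, Hadamard exposure of the $Z$-entries on non-pivot qubits, the divisibility bookkeeping for $d_1\mid d_2\mid\cdots$, and the determinantal-ideal argument for uniqueness (which the paper's proof leaves implicit) --- is sound.
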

This has appeared in \cite{GrasslRoetteler2006Convolutional} and also in \cite{Haah2012PauliModule}.
\begin{proof}
We employ the technique used in the proof of \ref{IsotropicSubspaceComplement}.
Using the controlled-NOT, Hadamard, and column operations,
we can bring $\sigma$ into
\begin{align}
\sigma' =
 \begin{pmatrix}
  f & 0 \\
  0 & \star \\
  g & \star \\
  0 & \star
 \end{pmatrix}.
\end{align}
The equation $\sigma'^\dagger \lambda_q \sigma' = 0$ demands that $f \bar g = \bar f g = \overline{f \bar g}$.
Suppose $f = \alpha x^a + \cdots + \beta x^b$ and $g = \gamma x^c + \cdots + \delta x^d$ with the exponents increasing
and $\alpha,\beta,\gamma,\delta$ are all nonzero.
Assume harmlessly that the degree of $f$ is smaller than that of $g$: $b-a < d-c$.
(If not, apply Hadamard to interchange them).
Then the equation $f \bar g = \overline{f \bar g}$ 
implies that $\gamma/\alpha = \delta / \beta$ and $a+b = c+d$.
Setting $h = (\gamma/\alpha)(x^{c-a} + x^{d-b})$, we see that $h = \bar h$
and controlled-Z by $h$ reduces the degree at the position of $g$ by at least 2.
Since the degree cannot decrease forever, 
after finitely many iterations
we obtain a column with sole nonzero entry.
Then, the top row of the bottom half block must be zero, 
due to the equation $\sigma \lambda_q \sigma = 0$.
\begin{align}
\sigma'' =
 \begin{pmatrix}
  d_1 & 0 \\
  0 & \star \\
  0 & 0 \\
  0 & \star
 \end{pmatrix}.
\end{align}
If there is any entry that is not divisible by $d_1$, then we can bring it to the first row by Hadamard and controlled-NOT,
and repeat the above.
The degree cannot decrease forever, and we must be left with a stabilizer map in the form of $\sigma''$
where $d_1$ divides every entry.
We finish by induction in $q$.
\end{proof}

\subsubsection{Coarse-graining}

If we are lenient about the translation structure,
then stronger classification can be obtained.
The translation group $\ZZ$ has subgroups $b\ZZ$ for any positive integer $b$.
Any stabilizer module is a module over this smaller translation group,
and we can consider Clifford operations that conforms
with this smaller translation group.
This can be viewed as taking a larger unit cell in the lattice.
Instead of saying that the unit cell consists of $q$ qubits,
we now take the unit cell to consist of $bq$ qubits.

To be clear, by {\bf coarse-graining}
we mean taking a smaller base ring 
$R' = \FF_2[x^b,x^{-b}]$ of $R = \FF_2[x, x^{-1}]$,
and regarding all $R$-modules to $R'$-modules.

As a standalone ring, $R'$ is isomorphic to $R$,
but now $R$ is a module of rank $b$ over $R'$ with a basis $1,x,\ldots, x^{b-1}$.
The Pauli module $R^{2q}$ is now $R'^{2bq}$ as $R = R'^{b}$.
Since the stabilizer map is from $R^t$ to $R^{2q}$,
under the coarse translation group, 
the new stabilizer map is from $R'^{bt}$ to $R'^{2bq}$,
and the corresponding matrix gets bigger by a factor of $b$.
To figure out the bigger matrix,
observe that each entry in the stabilizer map can be regarded as a map $R \to R$.
Over the smaller ring $R'$, this map has to be represented by $R'^b \to R'^b$.
Since the multiplication in $R$ is compatible with the composition of maps $R \to R$
(That is what module is about),
it is enough to find the matrix representation of the generator $x$ of the ring over $\FF_2$.
The multiplication by $x$ sends the basis elements $1,x,\ldots, x^{b-1}$ to $x, x^2, \ldots, x^b = x' \cdot 1$.
Hence,
\begin{align}
 (x : R'^b \to R'^b) = 
 \begin{pmatrix}
  0 &   & & x' \\
  1 & 0 & &    \\
    & \ddots & \ddots & \\
   & & 1 & 0
 \end{pmatrix} .
\end{align}

\begin{exc}
Consider the {\bf Ising model}
whose stabilizer map is given by 
$\sigma_\text{Ising} = \begin{pmatrix} 1+x \\ 0 \end{pmatrix}$
on the one-dimensional lattice.
Write down the corresponding Hamiltonian according to \eqref{eq:Hamiltonian}.
Take a smaller translation group $2\ZZ$ instead of $\ZZ$,
and rewrite a corresponding stabilizer map, which should be $4 \times 2$.
Verify that the Hamiltonian is not changed.
\label{IsingModel}
\hfill $\diamond$ \end{exc}

Using this passive coarse-grain procedure,
the Clifford group actually becomes larger.
From \ref{1Dpreclassification}, the only potentially interesting stabilizer map is $(f)$
where we omit the lower half block and $f$ is a Laurent polynomial.
Since multiplying a monomial to $f$ does not change $\im (f)$ at all,
we may assume that all exponents of $f$ is nonnegative and $f(x=0) \neq 0$.
Upon coarse graining, this $1 \times 1$ matrix becomes $b \times b$ matrix,
and a controlled-NOT provides any row operation,
and redefinition of stabilizer generators provides any column operation.
Hence, this $b \times b$ matrix can be brought into the Smith normal form,
and we would hopefully simplify $f$ into a smaller degree polynomial in $x'$.
The following tells us how to choose $b$.
It is convenient to introduce the {\bf annihilator} of a module $M$ over a ring $S$:
\begin{align}
\ann_S M = \{ r \in S : rm = 0 ~\forall m \in M\}
\end{align}
An annihilator is an ideal of $S$.
\begin{prop}
Let $f(x) \in \FF[x]$ be a polynomial with $f(0) \neq 0$ over a field $\FF$.
Suppose $f(x)$ divides $x^n -1$.
Then, the annihilator of the module $M = \FF[x]/(f(x))$ over $R' = \FF[x^n]$
is precisely the ideal $(x^n - 1) \subseteq R'$.
\label{SubringAnnihilator}
\end{prop}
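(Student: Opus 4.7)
The plan is to prove equality of ideals by establishing the two inclusions. The containment $(x^n - 1) \subseteq \ann_{R'} M$ is immediate from the hypothesis: since $f(x)$ divides $x^n - 1$ in $\FF[x]$, the element $x^n - 1$ is zero in $M = \FF[x]/(f(x))$, so multiplication by $x^n - 1$ annihilates every element of $M$.

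For the reverse inclusion, the key observation I would exploit is that $x^n$ acts as the \emph{identity} on $M$. Indeed, $f \mid x^n - 1$ translates into the congruence $x^n \equiv 1 \pmod{f(x)}$, so on the quotient $M$ the operator ``multiplication by $x^n$'' is the identity. Consequently, the $R' = \FF[x^n]$-action on $M$ factors through the evaluation map $R' \to \FF$ sending $x^n \mapsto 1$: any $g(x^n) \in R'$ acts on $M$ as the scalar $g(1) \in \FF$.

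Now, assuming $\deg f \geq 1$ so that $M \neq 0$ (if $f$ were a unit the claim would be vacuous and should be excluded), an element $g(x^n) \in \ann_{R'} M$ must satisfy $g(1) = 0$. Identifying $R'$ with the polynomial ring $\FF[y]$ via $y = x^n$, the condition $g(1) = 0$ is equivalent to $(y - 1) \mid g(y)$, i.e.\ $g(x^n) \in (x^n - 1) R'$. This yields $\ann_{R'} M \subseteq (x^n - 1)$, completing the proof.

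There is no serious obstacle; the entire argument hinges on the single observation that $x^n$ acts trivially on $M$, after which the annihilator computation reduces to evaluation at $y = 1$ in the univariate polynomial ring $\FF[y]$. The only mildly delicate point is to note that one uses $M \neq 0$ to go from ``$g(1) \cdot \id_M = 0$'' to ``$g(1) = 0$,'' which is why the $\deg f \geq 1$ caveat is needed.
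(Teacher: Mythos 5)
Your proof is correct, and it is essentially the same argument as the paper's, just packaged a bit more concretely. The paper computes $\ann_{R'}M = R' \cap (f(x))$, observes that this is a proper ideal of $R'$ containing $(x^n-1)$, and then finishes by noting that $(x^n-1)$ is a \emph{maximal} ideal of $R'$ (since $R'/(x^n-1)\cong\FF$), so the inclusion is forced to be an equality. You unwind exactly what that maximality is buying: because $x^n$ acts as the identity on $M$, the $R'$-module structure factors through the evaluation map $R'\to\FF$, $x^n\mapsto 1$, whose kernel is $(x^n-1)$; hence an element of $R'$ annihilates $M$ iff its image in $\FF$ is zero (using $M\neq 0$). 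Both arguments hinge on the isomorphism $R'/(x^n-1)\cong\FF$. Yours is slightly more elementary in spelling out the computation rather than invoking maximality, and it has the small virtue of making explicit the $M\neq 0$ (equivalently $\deg f\geq 1$) caveat that the paper leaves implicit when it writes $\ann_{R'}M\subsetneq R'$.
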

\begin{proof}
Since $R'$ is a subring of $R = \FF[x]$,
we immediately have $\ann_{R'} M = R' \cap \ann_R M = R' \cap (f(x))$.
The latter includes $x^n -1$ by the supposition.
Hence, $(x^n -1) \subseteq \ann_{R'} M \subsetneq R'$,
but $(x^n -1)$ is maximal in $R'$ since $R'/(x^n -1) \cong \FF$ is a field.
\end{proof}
Let us see how this implies the simplification of $f$ 
into a smaller degree polynomial in $x' = x^n$.
By coarse-graining, we obtain 
the matrix representation $M$ of the map $f(x) : R'^n \to R'^n$.
The $R'$-module $R / (f(x))$ is equal to $R'$-module $\coker M$,
and hence $\ann_{R'} \coker M = \ann_{R'} R/(f(x)) = (x^n -1) = (x'-1)$.
By inspection of the Smith normal form,
every elementary divisors must divide the annihilator $x'-1$.
There are only two possible ways to divide $x'-1$:
either by $1$ or by $x'-1$. This implies that the elementary divisors
are either $1$ or $x'-1$.
Since the module $M$ has $\FF$-vector space dimension $\deg f$,
there are precisely $\deg f$ elementary divisors that are equal to $x'-1$
and $n-\deg f$ elementary divisors that are equal to $1$, up to scalars in $\FF$.

We work out an example explicitly.
Let $f(x) = 1 + x + x^2$, and then $x^3 -1 = (x-1)f(x)$,
so let $R' = \FF[x^3]$.
As a $R'$-linear map $R'^3 \to R'^3$, $f(x)$ becomes
\begin{align}
M =
\begin{pmatrix}
1 & x' & x' \\
1 & 1 & x' \\
1 & 1 & 1
\end{pmatrix}.
\end{align}
Applying row and column operations, we have
\begin{align}
M \mapsto 
\begin{pmatrix}
1 & x' & x' \\
0 & 1-x' & 0 \\
0 & 1-x' & 1-x'
\end{pmatrix} \mapsto
\begin{pmatrix}
1 & 0 & 0 \\
0 & 1-x' & 0 \\
0 & 1-x' & 1-x'
\end{pmatrix}\mapsto
\begin{pmatrix}
1 & 0 & 0 \\
0 & 1-x' & 0 \\
0 & 0 & 1-x'
\end{pmatrix}.
\end{align}
In fact, the supposition in \ref{SubringAnnihilator}
is always satisfied whenever the field $\FF$ is finite,
which is the case for the additive codes on lattices.
\begin{prop}
For any polynomial $f(x)$ with coefficients in a finite field $\FF$ such that $f(0) \neq 0$,
there exists a positive integer $n$ such that $f(x)$ divides $x^n -1$.
\label{FFPolynomialFactor}
\end{prop}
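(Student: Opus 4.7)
The plan is to work in the quotient ring $A = \FF[x]/(f(x))$ and show that the class of $x$ in $A$ is a unit of finite multiplicative order; then the order $n$ will satisfy $f(x) \mid x^n - 1$.

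First I would observe that $A$ is a finite ring: as an $\FF$-vector space it has dimension $\deg f$, and $\FF$ is finite by hypothesis, so $|A| = |\FF|^{\deg f} < \infty$. In particular, the group $A^\times$ of units is a finite group.

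Next I would show that the residue class of $x$ lies in $A^\times$. Writing $f(x) = c_0 + c_1 x + \cdots + c_d x^d$ with $c_0 = f(0) \neq 0$, the relation $f(x) \equiv 0$ in $A$ gives
\begin{align*}
x \cdot \bigl( c_1 + c_2 x + \cdots + c_d x^{d-1} \bigr) \equiv -c_0 \pmod{f(x)}.
\end{align*}
Since $c_0 \in \FF^\times$ is invertible, multiplying by $-c_0^{-1}$ exhibits an explicit inverse of $x$ in $A$. Hence $x \in A^\times$.

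Finally, since $A^\times$ is a finite group, the element $x$ has some finite order $n \geq 1$, meaning $x^n \equiv 1 \pmod{f(x)}$. This is exactly the statement that $f(x)$ divides $x^n - 1$ in $\FF[x]$, completing the proof. No step looks to be a serious obstacle; the only point requiring care is verifying that $x$ is a unit, and this is where the hypothesis $f(0) \neq 0$ is used in an essential way (without it, $x$ would be a zero divisor in $A$ and no such $n$ could exist).
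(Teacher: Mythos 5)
Your proof is correct, and it takes a genuinely different route from the paper's. The paper argues via the structure theory of finite fields: it invokes the identity $x^{p^m} = x$ satisfied by all elements of a finite field of characteristic $p$, concludes that the roots of $f$ lie among the roots of $x^{n'} - 1$ for some $n'$, and then disposes of any root multiplicities using the Frobenius identity $(x^{n'} - 1)^{p^{m'}} = x^{n' p^{m'}} - 1$. This gives fairly explicit control over the exponent $n$, but it requires working in a splitting field and tracking multiplicities. Your argument instead stays entirely inside $\FF[x]$: you note that $A = \FF[x]/(f)$ is a finite ring, that the hypothesis $f(0) \neq 0$ makes the residue class of $x$ a unit (with an explicit inverse read off from the coefficients of $f$), and that in a finite group every element has finite order. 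This is more elementary in that it needs no algebraic closure, no splitting field, and no characteristic-$p$ trick; the trade-off is that it gives less information about the size of $n$ (one only learns that $n$ divides $|A^\times|$, whereas the paper's proof hands you an explicit, if possibly non-optimal, exponent). Both are complete and rigorous; your observation that $f(0) \neq 0$ is exactly what makes $x$ a unit in the quotient is a clean way to see where the hypothesis enters.
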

\begin{proof}
We have to use some facts about finite fields;
namely, any finite field of characteristic%
\footnote{
The minimal positive integer $p$ such that
$p x = 0$ for any element $x$ of the field.
It is necessarily a prime number.
}
$p$ consists of solutions of $x^{p^m} -x = 0$.%
\footnote{
The proof using the result on finitely generated abelian groups can be found in
\ref{FieldEquation} below.
}
This means that the roots of the polynomial $f(x)$
are roots of $x^{p^m} -x$ for some $m$.
Since 0 is not a root of $f(x)$,
we see that the roots of $f(x)$ are among those of $x^{n'} -1$ for some $n'$.
If there is any multiplicity of the roots of $f(x)$,
take the smallest $m'$ such that $p^{m'}$
is at least the largest multiplicity.
Then, $(x^{n'} -1)^{p^{m'}} = x^{n' p^{m'}} -1$ contains
all factors of $f(x)$ and hence is a multiple of $f(x)$.
\end{proof}

Combining \ref{1Dpreclassification}, \ref{SubringAnnihilator}, \ref{FFPolynomialFactor},
we arrive at the classification:
\begin{thm}
Any one-dimensional translation-invariant additive code
can be converted into several copies of Ising models
and some trivial codes,
by Clifford operators that obey coarse translation invariance.
\label{OneDClassification}
\end{thm}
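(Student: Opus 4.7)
The plan is to chain together the preclassification result of Proposition~\ref{1Dpreclassification} with a coarse-graining step, using the Smith normal form over a polynomial subring and the annihilator computation of Proposition~\ref{SubringAnnihilator} to identify the only two possible building blocks as Ising and trivial. First I would apply Proposition~\ref{1Dpreclassification} to transform $\sigma$, via a locality-preserving translation-invariant Clifford, into a matrix with top half $\mathrm{diag}(d_1,\ldots,d_r)$ and zero bottom half, where $d_1 \mid d_2 \mid \cdots \mid d_r$ in $R=\FF_2[x^{\pm 1}]$. Multiplying any $d_i$ by a unit monomial $x^a$ is just a translation of the $i$-th generator, so I may normalize each $d_i$ so that $d_i(0)\neq 0$. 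Those $d_i$ that happen to be units in $R$ then represent single-site $X$-stabilizers, i.e., trivial product-state factors, and can be peeled off immediately.

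For each surviving $d_i$, Proposition~\ref{FFPolynomialFactor} provides an $n_i$ with $d_i \mid x^{n_i}-1$; I would take $n$ to be the least common multiple of the $n_i$'s, so that $d_i \mid x^n-1$ simultaneously for every $i$. Coarse-grain the translation group to $n\ZZ$, setting $R'=\FF_2[x^n,x^{-n}]$ with $x'=x^n$; each $1\times 1$ entry $(d_i)$ becomes an $n\times n$ matrix $M_i$ over $R'$ through the multiplication-by-$x$ matrix construction from the coarse-graining subsection. Since the bottom half of $\sigma$ is zero, the isotropy condition is vacuous, so I can freely apply $R'$-row operations (realized by translation-invariant controlled-NOTs on the coarse lattice, which are locality-preserving Cliffords by Section~\ref{sec:LatticeSympTrans}) and column operations (redefinitions of generators) to bring each $M_i$ into its Smith normal form over the principal ideal domain $R'$. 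Proposition~\ref{SubringAnnihilator} gives $\ann_{R'}\coker M_i = \ann_{R'}\FF_2[x]/(d_i) = (x^n-1) = (1+x')$, so every elementary divisor of $M_i$ must divide $1+x'$; since $1+x'$ is linear and hence irreducible in $\FF_2[x']$, each elementary divisor is either $1$ or $1+x'$. Dimension count over $\FF_2$, using $\dim_{\FF_2}\coker M_i = \deg d_i$, forces exactly $\deg d_i$ of them to equal $1+x'$ and the remaining $n-\deg d_i$ to be $1$.

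The final step is interpretation. A diagonal entry $1+x'$ in the top half with zero below is precisely the Ising stabilizer map $(1+x',\,0)^T$ of Exercise~\ref{IsingModel} over $R'$, while an entry $1$ stabilizes the corresponding qubit in $\ket +$ and thus contributes a trivial product-state factor. Reassembling over all $i$ yields the claimed decomposition of the original code, up to a coarse-translation-invariant locality-preserving Clifford, into several copies of the Ising model together with some trivial codes. The step I expect to require the most care is the bookkeeping around coarse-graining: verifying that the abstract $R'$-linear row and column operations used in the Smith reduction genuinely come from coarse-translation-invariant Clifford gates acting on the physical qubits, and confirming that using a single common coarsening factor $n$ (rather than handling each $d_i$ on its own) does not spoil the simultaneous locality or the symplectic character of the overall transformation.
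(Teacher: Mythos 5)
Your argument is correct and follows exactly the paper's intended route: \ref{1Dpreclassification} to diagonalize, then coarse-graining plus Smith normal form over $R'$ with \ref{SubringAnnihilator} and \ref{FFPolynomialFactor} to reduce every diagonal entry to $1$ or $1+x'$. The only thing you spell out that the paper leaves implicit is taking a common coarsening factor $n=\operatorname{lcm}(n_i)$ so that all diagonal entries divide $x^n-1$ simultaneously; that refinement is needed for the transformation to be a single coherent coarse-translation-invariant Clifford and is a legitimate, if minor, tightening of the paper's sketch.
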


Note the consistency with our result \ref{OneDCodeDistanceBound}.
The Ising model has code distance 1
independent of system size.
In \ref{OneDCodeDistanceBound}, we did not assume the translation-invariance,
and concluded that the code distance is bounded.
Here we assumed the translation-invariance and obtained a complete classification
under Clifford operations.
The Clifford operations in \ref{OneDClassification}
preserve locality,
so any logical operator in the one-dimensional translation-invariant code
is a conjugation of a logical operator of the Ising model,
which acts on a geometrically local set of qubits.

\begin{rem}
The set $\FF^\times$ of all nonzero elements of a finite field
is a (multiplicative) group of finite order.
In particular, it is finitely generated and finite.
Therefore, it is isomorphic to direct sum of finite cyclic groups.
Choose the largest period $N$, which is a factor of $|\FF^\times|$.
Then, every nonzero field element satisfies $x^N = 1$.
The polynomial $x^N-1$ has at most $N$ distinct roots,
but we know it has $|\FF^\times|$ distinct roots.
Hence, $|\FF^\times| = N$ and the $\FF^\times$ is a cyclic group of order $N$.
Finally, $\FF$ is a vector space over $\FF_p$ of some finite dimension $m$,
implying $N = p^m -1$.
Therefore, $x^{p^m} -x = 0$ holds for any element of the field $\FF$.
\label{FieldEquation}
\hfill $\diamond$ \end{rem}

\subsection{Translation-invariant two-dimensional CSS codes}

In one-dimensional classification,
it was crucially used in the initial stage 
that the base ring $\FF_2[x, x^{-1}]$ is a Euclidean domain.
In two-dimensions, the base ring is $R = \FF_2 [x^\pm, y^\pm]$,
and the problem becomes more complicated.

First, we need to distinguish infinite lattice $\ZZ^2$
versus finite lattice $\ZZ^2 / \Lambda$ obtained by
periodic boundary conditions
(factoring out a subgroup $\Lambda \le \ZZ^2$ of finite index).
This distinction was unnecessary in the one-dimensional case
because the ring $\FF_2[x^\pm]$ was so simple
that we didn't have to talk about logical operators.
Roughly speaking, we saw that
a one-dimensional translation-invariant code
is not going to be useful for error correcting purposes,
which makes the discussion of logical operators unimportant.
On the contrary, 
two-dimensional codes may have error correcting capability
with the code distance comparable with the system size,
and it is crucial to understand logical operators.
Unfortunately,
the logical operators in the infinite system are
out of our scope since we have only studied finite dimensional symplectic spaces.

The complication is
already lurking in the previous one-dimensional case.
Consider the Ising model (see \ref{IsingModel}) 
with excitation map $\epsilon = (0, ~1+x )$.
Applying \ref{LogicalOperatorExcitationMap},
we would say that the logical operators correspond to $\ker \epsilon$.
Since in $\FF_2[x^\pm]$ no two nonzero elements multiply to become zero,
the kernel has the zero second component.
However, if we had considered $\ker \epsilon$ over the factor ring
$\FF_2[x^\pm] / (x^L -1) = \FF_2[x]/(x^L-1)$,
then the kernel would have nonzero second component 
since $(1+x)\sum_{i=1}^L x^i = 0$.
Remark here that the ideal $(x^L-1)$ imposes the \emph{periodic boundary condition}
that translation by $L$ units is equivalent to no translation.
Thinking of ``infinite $L$'' to recover the infinite lattice,
we would say that an infinite series $\sum_{i \in \ZZ} x^i (0,1)^T$ lies in $\ker \epsilon$,
but the infinite series is \emph{not} a member of the Pauli module $(\FF_2[x^\pm])^2$.
It is not too difficult to extend the module 
to include the infinite series, but we are not going to do it.

The most important distinction for the two- or higher dimensional
cases from the one-dimensional ones
is that there exists a stabilizer map $\sigma$ such that
\begin{align}
\ker \epsilon &= \im \sigma 
   \quad \text{ over } R = \FF_2[x^\pm, y^\pm], \label{eq:Exactness}\\
\ker \epsilon &\supsetneq \im \sigma
  \quad \text{ over } \FF_2[x,y]/(x^L-1,y^L-1) \label{eq:NontrivialLogical}
\end{align}
for some $L$, where $\epsilon = \sigma^\dagger \lambda_q$
is the excitation map.
In \ref{LogicalOperatorExcitationMap}
one has shown that $\ker \sigma^\dagger / \im \sigma$ 
is the set of all independent logical operators.
\eqref{eq:NontrivialLogical} means that there are
nontrivial logical operators,
and they cannot be expressed by a Laurent polynomial vector
in the infinite system.
The logical operators have to be \emph{global}.

If \eqref{eq:Exactness} does not hold,
then there exists a nontrivial logical operator
supported on a finite region of the lattice.%
\footnote{
We had better be more cautious here.
We only have defined logical operators in the finite systems,
and here we are saying that $\ker \epsilon / \im \sigma \neq 0$
over $R$ implies that there is a nontrivial logical operator for finite systems.
We elaborate on this in \ref{NonvanishingModule} below.
}
The size of the finite region is independent of the system size,
and an error on that finite region will not be corrected.
This is a situation we want to avoid.
Thus, we assume \eqref{eq:Exactness} from now on.

Here is an example of \eqref{eq:NontrivialLogical},
which we refer to as {\bf toric code}~\cite{Kitaev2003Fault-tolerant}.
\begin{align}
\sigma =
\begin{pmatrix}
x-1 & 0 \\
y-1 & 0 \\
0 & \bar y-1 \\
0 & -\bar x+1
\end{pmatrix}.
\label{eq:ToricCode}
\end{align}
\begin{exc}
Verify that $\sigma^\dagger \lambda_2 \sigma = 0$,
and that $\ker \sigma^\dagger \lambda_2 = \im \sigma$.
\hfill $\diamond$ \end{exc}
One could directly compute $\ker \sigma^\dagger \lambda_2 / \im \sigma$
over $R/(x^L-1, y^L-1)$,
but we are going to make connection from this quotient module 
to the cellular homology in the next subsection.
In the rest of this section we show that 
this is essentially the only example of translation-invariant additive (CSS) codes
in two dimensions.

\subsubsection{Canonical form of stabilizer maps}

The following fact is our starting point of the further discussion.
Unfortunately the proof is beyond the scope of this lecture note.
The proof can be found in Ref.~\cite{Haah2012PauliModule}.
\begin{prop}
For any two-dimensional translation-invariant additive code,
if the stabilizer map $\sigma$ satisfies
$\ker \sigma^\dagger \lambda = \im \sigma$ over $R = \FF_2[x^\pm, y^\pm]$,
then there is a choice of another stabilizer map $\sigma'$ such that
\begin{align}
 \im \sigma = \im \sigma' = \ker \sigma'^\dagger \lambda_q,
 \quad \ker \sigma' = 0,\label{eq:InjectiveSigma}
\end{align}
Moreover, any such $\sigma'$ has size $2t \times t$ for some $t$,
and there exists a positive integer $b$ such that
\begin{align}
 \ann_{R'} \coker \sigma'^\dagger = (x^b -1, y^b -1) \label{eq:AnnihilatorCondition}
\end{align}
where $R' = \FF_2[x^{\pm b}, y^{\pm b}]$
is the coarser translation group algebra.
\label{Canonical2Dsigma}
\end{prop}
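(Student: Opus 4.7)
The proposition bundles three claims: (i) replace $\sigma$ by an injective generating matrix $\sigma'$ (same image, trivial kernel) while preserving the Lagrangian condition $\im\sigma'=\ker\sigma'^\dagger\lambda_q$; (ii) conclude that $\sigma'$ has exactly $q$ columns, so its size is $2t\times t$ with $t=q$; and (iii) coarse-grain to pin down the annihilator of $\coker\sigma'^\dagger$. The backbone of the argument is the commutative algebra of the base ring $R=\FF_2[x^\pm,y^\pm]$, a regular Noetherian domain of Krull dimension two.

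For (i), my plan is to show that the stabilizer module $\Sigma=\im\sigma$ is a free $R$-module. Because $\Sigma$ is assumed to equal the kernel of the matrix map $\epsilon$, it is a first syzygy and in particular reflexive. Over a two-dimensional regular Noetherian ring, reflexive finitely generated modules have depth equal to the Krull dimension, so by Auslander--Buchsbaum their projective dimension is zero, i.e.\ they are projective. The Quillen--Suslin theorem, applied to the Laurent polynomial ring $R$ over $\FF_2$, upgrades projective to free. Choosing a free basis of $\Sigma$ and writing the basis vectors as columns defines $\sigma'$; by construction $\ker\sigma'=0$ and $\im\sigma'=\im\sigma$. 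Since $\ker\sigma^\dagger\lambda_q=(\im\sigma)^{\perp_\lambda}$ depends only on $\im\sigma$, the Lagrangian condition transfers automatically to $\sigma'$.

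For (ii), tensor with the fraction field $K=\mathrm{Frac}(R)$. The $K$-subspace $\Sigma\otimes_R K\subseteq K^{2q}$ inherits the Lagrangian property (orthogonality is preserved by flat base change), so by \ref{SymplecticSpaceStructure} it has $K$-dimension $q$. Hence the free rank of $\Sigma$ is $q$, and $\sigma'$ is $2q\times q$, i.e.\ $2t\times t$ with $t=q$. For (iii), let $M=\coker\sigma'^\dagger$. After tensoring with $K$, the kernel of $\sigma'^\dagger\otimes K$ is $\lambda_q(\Sigma\otimes K)$, a $q$-dimensional Lagrangian complement, so $\sigma'^\dagger\otimes K$ is surjective and $M\otimes K=0$. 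Thus $M$ is torsion and $\ann_R M\neq 0$. The crucial intermediate step is to show that $R/\ann_R M$ is Artinian, i.e.\ that $\mathrm{Supp}(M)$ is zero-dimensional. I would attempt this by localizing at an arbitrary height-one prime $P$: over the DVR $R_P$, Smith normal form combined with the Lagrangian identity $\im\sigma'_P=\ker\sigma'^\dagger_P\lambda_q$ and $\ker\sigma'_P=0$ should force the elementary divisors of $\sigma'_P$ to all be units, giving $\coker\sigma'^\dagger_P=0$. Once $R/\ann_R M$ is Artinian it is a finitely generated Artinian $\FF_2$-algebra, hence finite-dimensional over $\FF_2$ by the Nullstellensatz. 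In this finite ring the units $x$ and $y$ have finite multiplicative orders; let $b$ be their least common multiple. Then $x^b-1,\,y^b-1\in\ann_R M$, and since $\ann_{R'}M=R'\cap\ann_R M$, this gives $\supseteq$. Conversely, the image of the composition $R'\hookrightarrow R\to R/\ann_R M$ is the $\FF_2$-subalgebra generated by $x^b=1$ and $y^b=1$, namely $\FF_2$ itself; so this composition factors as the augmentation $R'\twoheadrightarrow R'/(x^b-1,y^b-1)=\FF_2$ followed by an inclusion, pinning $\ann_{R'}M$ down to $(x^b-1,y^b-1)$.

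The main obstacle is the height-one localization step in (iii). The antipode $\bar{\phantom{x}}$ does not commute with localization at an arbitrary prime (it may swap $P$ with $\bar P$), so the clean use of Smith normal form over $R_P$ requires either restricting to bar-invariant primes and treating conjugate pairs separately, or bypassing local methods in favor of a global argument using the self-duality of the chain complex $G\xrightarrow{\sigma'}P\xrightarrow{\epsilon'}E$ and the vanishing of appropriate $\mathrm{Ext}$ groups. Navigating these subtleties is what pushes the full proof into the detailed treatment of \cite{Haah2012PauliModule}.
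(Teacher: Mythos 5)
The paper itself does not prove this proposition: it explicitly defers to Ref.~\cite{Haah2012PauliModule} (``the proof is beyond the scope of this lecture note''), so there is no in-text argument to compare against. Evaluated on its own merits, your outline has the right architecture. Step (i) is sound: $\Sigma=\ker\epsilon$ is a \emph{second} syzygy (not a first, as you wrote, but that is the hypothesis you actually need), hence reflexive, hence of depth two at every maximal ideal of the two-dimensional regular ring $R$; Auslander--Buchsbaum makes it locally free, and Swan's extension of Quillen--Suslin to Laurent polynomial rings makes it free. Step (ii) --- flat base change to the fraction field $K$, where $\ker(\epsilon'\otimes K)=\im(\sigma'\otimes K)$ together with $\mathrm{rank}_K\,\sigma'=\mathrm{rank}_K\,\epsilon'$ forces $\mathrm{rank}_K\,\sigma'=q$ --- correctly pins the number of columns to $q$. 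The Artinian/finite-order argument at the end of (iii) is also correct once one knows $R/\ann_R M$ is a finite-dimensional $\FF_2$-algebra.

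The genuine gap is the one you flag: you never establish that $\mathrm{Supp}(\coker\sigma'^\dagger)$ has codimension two, and the height-one localization you sketch does collapse because the antipode permutes primes. The missing ingredient is not the $\mathrm{Ext}$-duality detour you gesture at but the Buchsbaum--Eisenbud exactness criterion. Since $\ker\sigma'=0$ and $\ker\epsilon'=\im\sigma'$ with $\epsilon'=\sigma'^\dagger\lambda_q$, the complex $0\to R^q\xrightarrow{\sigma'}R^{2q}\xrightarrow{\epsilon'}R^q$ is an acyclic complex of finitely generated free modules over the Noetherian (and Cohen--Macaulay) ring $R$. Buchsbaum--Eisenbud then yields $\mathrm{grade}\,I_q(\sigma')\ge 2$, where $I_q(\sigma')$ is the ideal of $q\times q$ minors. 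Multiplying by the invertible matrix $\lambda_q$ and transposing leave determinantal ideals unchanged, so $I_q(\epsilon')=I_q(\bar{\sigma'})=\overline{I_q(\sigma')}$; since the antipode is a ring automorphism of $R$ it preserves grade, so $\mathrm{grade}\,I_q(\epsilon')\ge 2$ as well. Over a Cohen--Macaulay ring grade equals height, hence $V(I_q(\epsilon'))=\mathrm{Supp}(\coker\sigma'^\dagger)$ has codimension $\ge 2$ and is therefore zero-dimensional. This single global step replaces your attempted local analysis; the rest of your (iii) then goes through, modulo the boundary case $\coker\sigma'^\dagger=0$ (where $\ann_{R'}$ is the unit ideal and no positive $b$ exists), which the proposition tacitly excludes.
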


Due to this,
we may assume that our stabilizer map is given such that $b =1$ and $\sigma' = \sigma$.
The last condition \eqref{eq:AnnihilatorCondition} 
imposes stringent restrictions on $\sigma$.
To see this, let us recall the definition of the annihilator.
The cokernel is $R^t / \im \sigma^\dagger$.
If $e_1, \ldots, e_t$ are the unit basis vectors of $R^t$,
the annihilator condition says that $(x-1)e_i$ and $(y-1)e_i$ 
must be in the image of $\sigma^\dagger$ whenever $e_i$ is nonzero modulo $\im \sigma^\dagger$.
In other words, a linear combination of the columns of $\sigma^\dagger$
must yield $(x-1)e_i$, and another yield $(y-1)e_i$ if it cannot generate $e_i$.
In particular, the row $i$ of $\sigma^\dagger$ must generate either
the maximal ideal $(x-1,y-1)$ or the unit ideal (the ring $R$ itself).

The notion of torsion submodules is useful to characterize the $\coker \sigma^\dagger$.
A {\bf torsion submodule} $\mathcal T(M)$ of a module $M$ over $R$
is defined%
\footnote{This definition assumes $R$ has no two nonzero elements
that multiply to become zero. That is, we are using the fact that
$R$ is a {\bf domain}.
Note that over a field a torsion submodule is always zero.
}
as
\begin{align}
\mathcal T(M) = \{~ m \in M ~|~ \exists r \in R \setminus \{0\} \text{ such that } rm = 0 ~\} .
\end{align}
The condition \eqref{eq:AnnihilatorCondition} says
any nonzero element of $\coker \sigma'^\dagger$
is a torsion element; $\mathcal T \coker \sigma'^\dagger = \coker \sigma'^\dagger$
is a torsion module.
Depending on the choice of $\sigma$,
$\coker \sigma^\dagger$ may not be a torsion module.
However,
\begin{prop}
If $\im \sigma = \im \tau$, then 
$\mathcal T \coker \sigma^\dagger$ and $\mathcal T \coker \tau^\dagger$
are isomorphic as $R$-modules.
\label{IsomorphicTorsion}
\end{prop}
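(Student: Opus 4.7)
The plan is to show that the torsion submodule of $\coker \sigma^\dagger$ is unchanged when one enlarges the generating set of $\Sigma = \im \sigma$, and then to apply this symmetrically.

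Given $\sigma \colon R^s \to R^{2q}$ and $\tau \colon R^t \to R^{2q}$ with $\im \sigma = \im \tau$, I form the concatenation $\rho = [\sigma \mid \tau] \colon R^{s+t} \to R^{2q}$ sending $(a,b) \mapsto \sigma a + \tau b$. Its image is still $\Sigma$, and its adjoint $\rho^\dagger$ sends $w \in R^{2q}$ to $(\sigma^\dagger w, \tau^\dagger w) \in R^s \oplus R^t$. The key lemma is: \emph{if $\im \tau \subseteq \im \sigma$, then $\coker \rho^\dagger \cong \coker \sigma^\dagger \oplus R^t$.} To prove this, pick a matrix $A$ with $\tau = \sigma A$ (possible column by column, since each column of $\tau$ lies in $\im \sigma$); then $\tau^\dagger = A^\dagger \sigma^\dagger$, so $\im \rho^\dagger = \{(u, A^\dagger u) : u \in \im \sigma^\dagger\}$. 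The shear $R$-linear automorphism $(a,b) \mapsto (a, b - A^\dagger a)$ of $R^s \oplus R^t$ carries this submodule onto $\im \sigma^\dagger \oplus 0$, yielding the claimed decomposition.

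Now I apply this lemma twice. The inclusion $\im \tau \subseteq \im \sigma$ gives $\coker \rho^\dagger \cong \coker \sigma^\dagger \oplus R^t$, and the inclusion $\im \sigma \subseteq \im \tau$, applied to $\rho' = [\tau \mid \sigma]$, gives $\coker (\rho')^\dagger \cong \coker \tau^\dagger \oplus R^s$. Since $\rho$ and $\rho'$ differ only by a permutation of the coordinates on the domain, $\coker \rho^\dagger \cong \coker (\rho')^\dagger$. Taking torsion submodules on both sides, and using that $R$ is a domain so $R^s$ and $R^t$ are torsion-free, one gets
\[
\mathcal{T}(\coker \sigma^\dagger) \;\cong\; \mathcal{T}(\coker \rho^\dagger) \;\cong\; \mathcal{T}(\coker \tau^\dagger).
\]

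The argument has no serious obstacle: the one routine check is compatibility of $\dagger$ with composition, namely $(\sigma A)^\dagger = A^\dagger \sigma^\dagger$, which follows from $\overline{BC} = \bar B \bar C$ and $(BC)^T = C^T B^T$. The conceptual content is that adjoining redundant generators to a presentation of $\Sigma$ contributes only a free direct summand to $\coker \sigma^\dagger$, and this summand disappears upon passing to the torsion part.
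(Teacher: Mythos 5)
Your proof is correct and takes essentially the same route as the paper's: concatenate $\sigma$ and $\tau$ into one presentation, observe (via a column operation / shear automorphism) that the redundant block contributes only a free direct summand to the cokernel of the adjoint, and then use that the torsion functor kills free summands over a domain. The only cosmetic difference is that the paper passes through the intermediate module $\coker\,(\sigma\ \tau)^\dagger$ by applying two different column-operation matrices, whereas you prove the free-summand lemma directly and then link the two concatenation orders by a coordinate permutation; both are the same argument.
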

\begin{proof}
Regard $\tau$ and $\sigma$ as matrices.
We may combine two matrices as $\mu = (\sigma~\tau)$.
By assumption, $\im \mu = \im \sigma = \im \tau$.
Since every column of $\tau$ is in the span of $\sigma$,
we can find a column operation matrix $C$ such that
$\mu C = (\sigma ~ 0)$.
Similarly, there is a column operation matrix $C'$ such that
$\mu C' = (0~\tau)$.
Now, $\coker (\mu C)^\dagger = \coker (C^\dagger \mu^\dagger)$,
where the invertible $C^\dagger$ induces an isomorphism between
$\coker \mu^\dagger$ and $\coker (C^\dagger \mu^\dagger)$.
It follows that $\coker (\sigma~0)^\dagger \cong \coker (0~\tau)^\dagger$,
and $\mathcal T \coker (\sigma~0)^\dagger \cong \mathcal T \coker (0~\tau)^\dagger$.
On the other hand,
a torsion submodule is oblivious to a free summand:
For any module $M$, we see $\mathcal T(M \oplus R) = \mathcal T(M)$
since $R$ is a domain.
To finish the proof, observe that 
$\coker (\sigma~0)^\dagger = (\coker \sigma^\dagger) \oplus R^m$
where $m$ is the number of columns of $\tau$,
so $\mathcal T \coker (\sigma~0)^\dagger = \mathcal T \coker \sigma^\dagger$,
and likewise $\mathcal T \coker (0~\tau)^\dagger = \mathcal T \coker \tau^\dagger$.
We conclude that $\mathcal T \coker \tau^\dagger \cong \mathcal T \coker \sigma^\dagger$.
\end{proof}

\subsubsection{Structure theorem}

The strong constraint \eqref{eq:AnnihilatorCondition}
leads to a structure theorem, at least for CSS codes.
Recall that a {\bf CSS code}~\cite{CalderbankShor1996Good,Steane1996Multiple}
is a code where stabilizer generators
can be chosen to be either $X$- or $Z$-type.
Hence, a CSS code has a block diagonal stabilizer map $\sigma$
and excitation map $\epsilon$.
\begin{align}
\epsilon = \sigma^\dagger \lambda = \begin{pmatrix}
0 & \sigma_X^\dagger \\ -\sigma_Z^\dagger & 0
\end{pmatrix}, \quad
\sigma = \begin{pmatrix}
\sigma_X & 0 \\ 0 & \sigma_Z 
\end{pmatrix}.
\end{align}
\begin{thm}
For any two-dimensional translation-invariant CSS code,
if the stabilizer map $\sigma$ satisfies
$\ker \sigma^\dagger \lambda = \im \sigma$ over $R = \FF_2[x^\pm, y^\pm]$,
then the code becomes a tensor product of finitely many copies of the toric code
and a product state by (a finite number of layers of) Clifford operations.
The number of copies of the toric code in the CSS code is equal to 
$\frac 12 \dim_{\FF_2} \mathcal T \coker \sigma^\dagger$.
\label{CSS2dClassification}
\end{thm}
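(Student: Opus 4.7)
The plan is to induct on $n := \tfrac{1}{2} \dim_{\FF_2} \mathcal{T}\coker \sigma^\dagger$, which is an invariant of $\im \sigma$ by Proposition~\ref{IsomorphicTorsion} and hence unchanged under CSS-preserving Clifford circuits. After applying Proposition~\ref{Canonical2Dsigma} and coarse-graining to set $b=1$, I may assume $\sigma$ is $2t \times t$ with $\ker \sigma = 0$, $\im \sigma = \ker \sigma^\dagger \lambda_q$, and $\ann_R \coker \sigma^\dagger = (x-1, y-1)$. For a CSS code, $\sigma = \sigma_X \oplus \sigma_Z$ splits as a block diagonal, so $\coker \sigma^\dagger = \coker \sigma_X^\dagger \oplus \coker \sigma_Z^\dagger$; by maximality of the ideal $(x-1, y-1)$ in $R$, each summand is either zero or a finite-dimensional $\FF_2$-vector space. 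The CSS splitting of the exactness $\ker \sigma^\dagger \lambda_q = \im \sigma$ yields the two relations $\ker \sigma_Z^\dagger = \im \sigma_X$ and $\ker \sigma_X^\dagger = \im \sigma_Z$, both of which will be essential. The base case is $n=0$: here the code was already a product state and nothing needs to be done.

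For the inductive step ($n \geq 1$), I would extract a single toric-code factor. Picking a nonzero class $[e_i]$ in $\coker \sigma_X^\dagger$, the annihilator condition furnishes $(x-1)e_i$ and $(y-1)e_i$ inside $\im \sigma_X^\dagger$. Dualizing and applying column operations on $\sigma_X$ together with translation-invariant CNOTs (Section~\ref{sec:LatticeSympTrans}) that permute qubits within the unit cell, I would bring one column of $\sigma_X$ into the toric-code $X$-form $(x-1, y-1, 0, \ldots, 0)^T$ supported on a pair of newly designated qubits. The CSS exactness $\im \sigma_X = \ker \sigma_Z^\dagger$ then forces a column of $\sigma_Z$, possibly after further CNOT conjugation on the $Z$-layer, into the toric-code $Z$-form $(\bar y - 1, 1 - \bar x, 0, \ldots, 0)^T$ on the same pair of qubits, thereby disentangling an embedded toric code on that pair.

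The main obstacle will be to verify that the residual code on the remaining qubits is again a CSS code satisfying the hypotheses of Proposition~\ref{Canonical2Dsigma}, with torsion cokernel of $\FF_2$-dimension $2(n-1)$. This involves CNOT-cleaning any remaining support of the other stabilizers on the two extracted qubits --- possible precisely because the toric-code pair generates the full local commutant on those qubits by isotropy, so any leftover support can be removed without altering $\im \sigma$ elsewhere --- and then verifying that the remaining $\sigma_X$ and $\sigma_Z$ blocks still satisfy the exactness relations with annihilator still contained in $(x-1,y-1)$. Bookkeeping the effect of each CNOT on $\coker \sigma_X^\dagger$ and $\coker \sigma_Z^\dagger$ individually, one checks both drop by exactly one $\FF_2$-dimension, so $n$ decreases by one. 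Iterating this step $n$ times yields $n$ toric-code tensor factors and leaves a product state on the remaining qubits, giving both conclusions of the theorem.
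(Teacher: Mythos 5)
The base case is wrong, and this is a genuine gap. When $n = 0$ the code is \emph{not} automatically a product state: $\mathcal T\coker\sigma^\dagger = 0$ (together with the normalization from Proposition~\ref{Canonical2Dsigma}) only tells you that each row of $\sigma^\dagger$ generates the unit ideal. To actually bring $\sigma$ to the trivial stabilizer map by translation-invariant Clifford operations you still have to solve a unimodular completion problem over $\FF_2[x^\pm,y^\pm]$. This is exactly the hardest step in the paper's proof, which invokes the algorithmic Quillen--Suslin theory of Park and Park--Woodburn to convert each unimodular row into a unit vector by elementary column operations. There is no shortcut here: a CSS code with no topological charges can still encode a highly entangled state, and disentangling it is the nontrivial content of the $k=0$ case.

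The inductive step is in the right spirit but you underestimate the work it requires. Before any of your cleaning moves make sense, the paper must coarse-grain twice so that every entry of the excitation map has total exponent at most one; only then are the possible entries restricted enough to be handled by $\FF_2$-linear symplectic transformations, and only then does the CSS assumption let you erase the third generator $xy - 1$ by a CNOT. Your assertion that the toric-code pair ``generates the full local commutant on those qubits by isotropy, so any leftover support can be removed without altering $\im\sigma$ elsewhere'' is not a priori true and is precisely what steps (I)--(VII) in the paper carefully establish: step (I) needs the annihilator condition and a symplectic transformation built from preimages $p_x,p_y$ of $(x-1)e_i$, $(y-1)e_i$; step (III) uses the isotropy $\epsilon\lambda_q\epsilon^\dagger = 0$ to pin down the $Z$-side entries; step (V) is a nontrivial injectivity/exactness argument showing a suitable $Z$-row exists; steps (VI)--(VII) then do the cleaning. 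You also cannot simply say that $\im\sigma_X = \ker\sigma_Z^\dagger$ ``forces'' a column of $\sigma_Z$ into toric-code form --- that conclusion is derived, not immediate. Finally, you never verify that after extracting a toric-code block the residual stabilizer map again satisfies $\ker\sigma^\dagger\lambda = \im\sigma$ and the annihilator condition, which is needed for the induction to close; this too is part of what the explicit steps guarantee.
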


Note that by \ref{IsomorphicTorsion}, 
the number $\dim_{\FF_2} \mathcal T \coker \sigma^\dagger$
depends only on $\im \sigma$.
A similar result is in Ref.~\cite{Bombin2011Structure}.
See \ref{BombinResult} below for comparison.

\begin{proof}
\ref{Canonical2Dsigma} provides us with a stabilizer map that satisfies \eqref{eq:InjectiveSigma}.
Thus, we may assume that our stabilizer map $\sigma$ satisfies \eqref{eq:InjectiveSigma}
and \eqref{eq:AnnihilatorCondition} with $b=1$.

We wish to convert a row of the excitation map $\epsilon = \sigma^\dagger \lambda_q$
that generates the maximal ideal 
\begin{align}
\mm = (x -1, y-1)
\end{align}
into one that has only two components as in \eqref{eq:ToricCode}.
We know that there exists a vector $p$ of Laurent polynomials
such that $\epsilon p$ is a vector with sole nonzero entry $x-1$, $y-1$, or $1$.
We wish to turn $p$ into a unit vector.
The transformation has to be induced from Clifford operation,
and in particular should preserve the symplectic form.
We are not going to follow this line
as we do not understand the symplectic group over $R$
well enough.%
\footnote{
Even the following problem, which is presumably simpler,
is fairly complicated.
Let $v$ be a vector over $S = \FF[x_1,\ldots,x_n]$, a polynomial ring with the coefficients in a field.
Suppose the entries of $v$ generate the unit ideal.
Is there an invertible matrix $M$ with entries in $S$ where $v$ is a column of $M$?
The answer is affirmative, known as Quillen-Suslin theorem.
See XXI.3.5 of \cite{Lang}.
}
But, the complication can be reduced by going to a coarser lattice.

Since we can always multiply monomials on the rows of $\epsilon$,
we may assume every entry has positive exponent.
If $n$ is the maximum exponent of the terms in $\epsilon$,
take $R' = \FF_2[x^{\pm n}, y^{\pm n}]$ as our new base ring
(Coarse-graing I).
Then every entry of the new excitation map $\epsilon'$
has exponent at most 1. That is,
every entry is a $\FF_2$-linear combination of $1,x',y',x'y'$.
Let us say that such $\epsilon$ is \emph{of degree one}.
If a row $i$ of $\epsilon$ generates $\mm$,
then its entries are members of $\mm$,
and hence they are $\FF_2$-linear combinations of $x-1,y-1,xy-1$.
Using symplectic transformations that does not involve any variables $x,y$,
we can eliminate all but at most three entries in the row $i$.
The number of survived entries must be either three or two,
because they should generate two elements $x'-1$ and $y'-1$ over $R$.
If there are three entries survived,
they can be rearranged to be $x'-1$, $y'-1$, and $x'y'-1$.%
\footnote{If the base field is $\FF_p$ for some prime $p$,
then one should multiply some scalar of $\FF_p$ to get the coefficient 1.}
Now, we can \emph{use the CSS assumption}
to erase the entry $xy-1$ by some controlled-NOT operation.
(Note that a symplectic transformation on $\epsilon$ acts on its columns.)
This controlled-NOT operation may
destroy the fact that $\epsilon'$ has been of degree one,
in which case we take an even coarser base ring to reduce
the maximum exponent back to $1$ (Coarse-graining II).

Therefore, we may now assume without loss of generality 
that if $\mathcal T \coker \epsilon \neq 0$,
there exists a row of $\epsilon$ such that it consists of two entries of degree 1.
There are three possibilities up to $\FF_2$-symplectic transformations.
The two entries can be (i) $x-1,y-1$, (ii) $x-1, xy-1$, or (iii) $y-1,xy-1$.
In either of (ii) and (iii),
we can redefine the translation variable so that $xy \mapsto y$ or $xy \mapsto x$.
They corresponds to automorphisms of $R$.
To summarize,
\begin{itemize}
\item[($\bullet$)] If $\mathcal T \coker \epsilon \neq 0$, then 
by symplectic transformations and a redefinition of the base ring,
we can find an excitation map $\epsilon$ with a row $i$
that has only two nonzero entries $x-1$ and $y-1$.%
\footnote{
This does not involve any additional qubits;
however, a similar statement can be shown much more simply
if one allows to insert additional qubits in the trivial product state.}
\end{itemize}

Let us say that $x-1$ is at column $a$ and $y-1$ at $b$.
Assume that $a,b \le q$;
otherwise, interchange left-half and right-half blocks of $\epsilon$:
\begin{align}
\epsilon = \left(\begin{array}{ccc|c}
\star & \star  & \star &\\
0 & x-1  & y-1  &\\
\hline
&&& \star~\star~\star
\end{array}\right).
\end{align}
Now we examine what can be done to $\epsilon$
towards the form of \eqref{eq:ToricCode}.
It is going to be elementary and careful
examination of the conditions \eqref{eq:InjectiveSigma} and \eqref{eq:AnnihilatorCondition}.
We state intermediate goals and then give the proofs for them.
Let's roll our sleeves up!
\begin{itemize}
\item[(I)] After some $\FF_2$-symplectic transformation,
 we can obtain $\epsilon'$ such that
 all the entries on columns $a$ and $b$ of $\epsilon'$ belong to $\mm$:
 \begin{align}
\epsilon' = \left(\begin{array}{ccc|c}
\star & \circ  & \circ &\\
0 & x-1  & y-1 &\\
\hline
&&& \star~\star~\star
\end{array}\right) \text{ where any } \circ \in \mm.
\end{align}
\end{itemize}
By the condition \eqref{eq:AnnihilatorCondition}
there exists a vector $p_x$ and $p_y$
such that $\epsilon p_x = (x-1)e_i$ and $\epsilon p_y = (y-1) e_i$,
where $e_i$ is a unit vector with $i$-th component 1.
Setting $x=y=1$, $p_x, p_y$ become a vector over $\FF_2$,
and their symplectic product is zero due to the CSS condition.
Then, by \ref{IsotropicSubspaceComplement} 
there exists a symplectic transformation $S$ (controlled-NOT's) 
that do not involve variables $x,y$
such that $S p_x(x=y=1) = e_a$ and $S p_y(x=y=1) = e_b$ 
while the row $i $ of $\epsilon S^{-1}$ is still the same as the row $i$ of $\epsilon$.
That is, $S p_x$ has $a$-th component outside $\mm$, 
but everything else in $\mm$.
Similarly, $S p_y$ has $b$-th component outside $\mm$, 
but everything else in $\mm$.
Now, the new excitation map $\epsilon' = \epsilon S^{-1}$ has the claimed property:
The equation $\epsilon' (Sp_x) = (x-1)e_i$ demands that
\begin{align}
 \epsilon'_{k,a} (Sp_x)_{a} + (\text{terms in }\mm) = 0 \text{ for } k \neq i,
\end{align}
implying $\epsilon'_{k,a} \in \mm$ for all $k \neq i$.
Similarly, $\epsilon' (S p_y) = (y-1) e_i$ demands that $\epsilon'_{k,b} (S p_y) + (\text{terms in }\mm) = 0$
whenever $k \neq i$, implying $\epsilon'_{k,b} \in \mm$ for all $k \neq i$.
\begin{itemize}
 \item[(II)] Both the columns $a+q$ and $b+q$ of $\epsilon'$ are nonzero.
\end{itemize}
If either of them was zero,
then the condition $\ker \epsilon = \im \lambda_q \epsilon^\dagger$
would imply that there must be a row with sole nonzero element 1 
at column $a$ or $b$.
That is not possible as we have shown in (I) that $\epsilon'$ 
has entries in $\mm$ for the columns $a$ and $b$.
\begin{itemize}
\item[(III)] For any row $j$ 
each tuple of entries $(\epsilon'_{j,a+q},\epsilon'_{j,b+q})$ is a $\FF_2$-multiple of the tuple $(x-xy, xy-y)$.
\end{itemize}
Since the row $i$ of ($\bullet$) has only two nonzero components,
the condition $\epsilon' \lambda_q \epsilon'^\dagger = 0$
implies that $\epsilon'_{j,a+q}(\bar x - 1) 
+ \epsilon'_{j,b+q} ( \bar y -1) = 0$ for all $j$.
The solution of degree-one is a $\FF_2$-multiple of $x-xy, ~xy-y$.
\begin{itemize}
\item[(IV)] By Gauss elimination that does not involve any variables $x$ and $y$,
the excitation map $\epsilon'$ can be turned into $\epsilon''$ such that each of
the left and right blocks of 
$\epsilon''(x=y=1)$
is in the reduced row echelon form.
\end{itemize}
This is obvious.
\begin{itemize}
\item[(V)] $\epsilon''$ has a row $j$ such that $\epsilon''_{j,a+q}, \epsilon''_{j,b+q} \neq 0$ but $\epsilon''_{j,k}(x=y=1) = 0$ for all $k$:
\begin{align}
\epsilon'' = \left(\begin{array}{ccc|ccc}
\star & \circ &  \circ &&&\\
0 & x-1  & y-1 & &&\\
\hline
&&& \star&\circ& \circ\\
&&&  \circ& x-xy & xy-y \\
\end{array}\right).
\end{align}
\end{itemize}
Let $J_1$ be the collection of all row indices $j$
such that the row $j$ is nonzero upon setting $x=y=1$.
The reduced row echelon form of $\epsilon''$ tells us that
$j \in J_1$ if and only if $e_j \in \im \epsilon''$,
which can be verified by setting $x=y=1$.
For $j \in J_1$, choose vectors $f^{(j)}$ such that $e_j = \epsilon'' f^{(j)}$.
($f^{(j)}$ are determined up to $\im \lambda_q \epsilon''^\dagger$.)
On the contrary to the claim (V), suppose that 
$\epsilon''_{j,a+q}, \epsilon''_{j,b+q} \neq 0$ happens only for $j \in J_1$.
Then, for the unit vector $e_{a+q}$ we have 
$\epsilon'' e_{a+q} = \sum_{j \in J_1} c_j e_j$
for some $c_j \in R$.
This means that $e_{a+q} - \sum_{j \in J_1} c_j f^{(j)} \in \ker \epsilon''$.
On the other hand, setting $x=y=1$, we see that $\epsilon''(x=y=1)e_{a+q} = 0$ by (III).
This implies that $c_j(x=y=1) = 0$ for all $j \in J_1$.
Since $\ker \epsilon'' = \im \lambda_q \epsilon''^\dagger$,
it follows that $\lambda_q e_{a+q} - \sum_{j\in J_1} c_j \lambda_q f^{(j)} \in \im \epsilon''^\dagger$,
and therefore $\lambda_q e_{a+q} = e_a \in \im \epsilon''^\dagger(x=y=1)$.
But, the column $a$ of $\epsilon''(x=y=1)$ is zero by (I).
This is a contradiction. Since (II) implies that the column $a+q$ and $b+q$ are nonzero,
(V) is proved.
\begin{itemize}
\item [(VI)] By symplectic transformations and $\FF_2$-row operations on $\epsilon''$,
we can obtain $\epsilon'''$ such that $\epsilon'''_{j,a+q}$ and $\epsilon'''_{j,b+q}$
are the only nonzero entries in the row $j$ ($\neq i$) and columns $a+q$, $b+q$:
\begin{align}
\epsilon''' = \left(\begin{array}{ccc|ccc}
\star & \star &  \star &&&\\
0 & x-1  & y-1 & &&\\
\hline
&&& \star&0& 0\\
&&&  0& x-xy & xy-y \\
\end{array}\right).
\end{align}
\end{itemize}
By (III), row operations on $\epsilon''$ can eliminate all the entries but $j$-th row 
in the columns $a+q$ and $b+q$.
The row $i$ is left intact as it has zero entries in the columns $a+q$ and $b+q$ anyway.
After this, since $x-xy, xy-y$ generate the ideal $\mm$ and all the entries of row $j$
are in $\mm$ by (V), appropriate controlled-NOTs clean the row $j$
by adding some multiple of $\epsilon''_{j,a+q}$ and $\epsilon''_{j,b+q}$ to other places in the row $j$
in the right-half block of $\epsilon''$.
These controlled-NOTs do not change the row $i$ because on the left-half block of $\epsilon''$
the controlled-NOT adds multiples of 0 to $\epsilon''_{i,a}$ and $\epsilon''_{i,b}$.
\begin{itemize}
 \item [(VII)] For any row $r$ the tuple $(\epsilon'''_{r,a},\epsilon'''_{r,b})$ is a $R$-multiple
 of the tuple $(x-1,y-1)$. Therefore, some row operation on $\epsilon'''$ clears up the column $a,b$,
 leaving $\epsilon'''_{i,a} = x-1$ and $\epsilon'''_{i,b} = y-1$ intact:
 \begin{align}
\epsilon'''' = \left(\begin{array}{ccc|ccc}
\star & 0 &  0 &&&\\
0 & x-1  & y-1 & &&\\
\hline
&&& \star&0& 0\\
&&&  0& x-xy & xy-y \\
\end{array}\right).
\end{align}
\end{itemize}
This follows from the proof of (III) using the cleaned row $j$ of $\epsilon'''$.

Now the excitation map $\epsilon''''$ obtained from (VII) has a distinguished submatrix
consisting of rows $i$ and $j$, columns $a,b,a+q,b+q$, which is the same the excitation map of the toric code.
We have shown that $\mathcal T \coker \epsilon \neq 0$ implies $k = \dim_{\FF_2} \mathcal T \coker \epsilon \ge 2$.
By induction on $k$,
we see that $k$ has to be even,
and that there is a Clifford operation obeying a coarse translation-invariance
which factors out $k/2$ copies of the toric code.
The proof so far is algorithmic.

It remains to turn the code state into a product state when $k = 0$.
This part appears to be no easier than the Quillen-Suslin theorem
on the unimodular completion problem.
Since the excitation map of a CSS code is block diagonal,
and any elementary column operation on the left-block can be compensated
by another column operation in the right-block to become a symplectic transformation,
we only need to show that the left-half block of $\epsilon$ can be transformed into 
the identity matrix, representing the trivial state.
The condition \eqref{eq:AnnihilatorCondition} says that
each row generates the unit ideal (each row is unimodular),
and therefore our problem is
to convert each row that generates the unit ideal into the unit vector
using elementary column operations (unimodular completion)
over the Laurent polynomial ring.
This is known to be possible with explicit algorithms.
Park~\cite{Park2004Symbolic} gives an algorithm that reduces the problem into that over a polynomial ring,
so that the algorithm of Park and Woodburn~\cite{ParkWoodburn1995Algorithmic} can then be applied.
(Amidou and Yengui~\cite{AmidouYengui2008algorithm} find a more efficient algorithm
when the coefficient field is infinite, which is not applicable for our purpose.)
This completes our proof of the classification theorem.
\end{proof}

\subsubsection{Remarks}

\begin{rem}\label{BombinResult}

Bomb\'in~\cite{Bombin2011Structure} reports a classification theorem,
which is similar to \ref{CSS2dClassification}.
Here we sketch his approach and clarify the difference from \ref{CSS2dClassification}.
In Ref.~\cite{Bombin2011Structure},
the content of \ref{Canonical2Dsigma} is stated,
albeit in different terminology.
Namely, it is argued that there exists a translation-invariant set 
of local generators of the stabilizer group such that there is no ``local relation''
among the generators.
This amounts to $\ker \sigma = 0$ in our language.
In addition, 
there are finitely many topological charges which form an abelian group,
and one can coarse-grain the lattice
such that each new unit cell can support any topological charge.
The former amounts to $\coker \epsilon$ being a finite additive group,
and the latter amounts to \eqref{eq:AnnihilatorCondition}.

Given this intermediate conclusion,
Bomb\'in studies the topological spins and mutual statistics
using the commutation relations among string operators
associated with the topological charges,
using an idea by Levin and Wen~\cite{LevinWen2003Fermions}.
His finding can be summarized as follows.
The mutual statistics can be captured by
a bilinear symmetric form $\kappa$
with the diagonal elements being zero.
Over the binary field $\FF_2$,
the symmetric form $\kappa$ happens to be symplectic,
and therefore there exists a canonical basis as in \ref{SymplecticSpaceStructure}
for the set of topological charges.
Under this canonical basis,
there are two complementary subsets $C, D$ of topological charges
on which $\kappa$ vanishes.
The topological spin $e^{i \pi \theta(a)}$
obeys 
$\theta( a - b ) = \kappa(a,b) + \theta(a) + \theta(b)$
for any charges $a$ and $b$~\cite[Eq.~(3.1.2)]{BakalovKirillov2001},~\cite[Eq.~(223)]{Kitaev2006Anyons}.
On a subspace $C$ or $D$ where $\kappa$ vanishes,
the angle $\theta$
becomes a linear functional $\theta_C$ or $\theta_D$.
(We used the property that $b = -b$.)
Then, there is a basis change on the space of topological charges
to simplify the linear functionals $\theta_C$ and $\theta_D$.
If these linear functionals cannot be transformed to zero on $C$ and $D$,
then Bomb\'in calls the code ``chiral.''

The main conclusion is that any non-chiral translation-invariant code is equivalent
to a finitely many copies of the toric code.
Here, the equivalence is defined through locality-preserving unitary transformations
that map any Pauli operator to a Pauli operator.
There is a difference between this equivalence and the equivalence
under controlled-NOT, Hadamard, and phase gates.
The latter equivalence (under local quantum circuits) trivially implies the former notion,
while the former does not imply the latter unless one includes auxiliary qubits.

Ref.~\cite{ArrighiNesmeWerner2007} shows that 
for any locality-preserving unitary $U$
there exists another locality-preserving unitary $V$
such that
$U \otimes V$ is a local quantum circuit.
The proof is very simple:
Let $\mathcal S$ denote the swap operation
between two identical many-qubit systems.
This is a local quantum circuit as the swap can be implemented by swapping every pair of qubits,
i.e., $\mathcal S = \bigotimes_i \mathcal S_i$.
Then, 
$(U^\dagger \otimes I) \mathcal S_i (U \otimes I)$ is a local unitary,
and therefore $(U^\dagger \otimes I) \mathcal S (U \otimes I)$ is a local quantum circuit,
and
$U \otimes U^\dagger = \mathcal S (U^\dagger \otimes I) \mathcal S (U \otimes I)$
is a local quantum circuit, too.
However, this result together with Bomb\'in's
does not imply that one can turn a 2D translation-invariant code state
with no topological charge
into a product state by a local quantum circuit,
since the auxiliary qubits may be transformed into an entangled state.
Our treatment under CSS assumption
does not need the auxiliary qubits.

Note that \ref{CSS2dClassification} 
remains true for qudit stabilizer codes with prime dimensions
where $X$-type stabilizer group generators are separated from $Z$-type ones (CSS).
On the other hand, Bomb\'in's result does not assume the CSS condition,
but appears to rely on the binary field $\FF_2$
over which a skew-symmetric matrix is symmetric.
This make it obscure to apply his approach to qudit prime-dimensional codes.
\hfill $\diamond$ \end{rem}

\begin{rem}\label{NonvanishingModule}
Here we elaborate on the relation between $\ker \epsilon / \im \sigma$ being nonzero
and logical operators on finite systems.
Suppose $v \in \ker \epsilon$ is a vector such that $v \notin \im \sigma$,
so when passed to $\ker \epsilon / \im \sigma$ it represents a nonzero element.
The equation $\epsilon v = 0$ is still valid 
even if we mod out $\mathfrak b_L = (x^L-1,y^L-1)$,
meaning that $v$ is a logical operator.
Here, we need to prove that $v$ is nontrivial for some $L$.
That is, $v \notin \im \sigma + \mathfrak b_L R^{2q}$ for some $L$
if $v \notin \im \sigma$.
Phrased differently, we have to show that 
$(R/\mathfrak b_L) \otimes_R K \neq 0$
where $K := (Rv + \im \sigma) / \im \sigma \neq 0$.

Here is a proof for readers who are familiar with localization.
(See Atiyah-MacDonald~\cite{AtiyahMacDonald} for concise treatment.)
Let us lift the coefficient field to the algebraic closure $\FF^a$ of $\FF_2$.
($\FF^a$ is a flat module over $\FF_2$.)
$K \neq 0$ if and only if $K_{\mm} \neq 0$ for some maximal ideal $\mm$.
By Nullstellensatz, $\mm = (x - a, y - b)$ for some $a,b \in \FF^a$,
and hence there exists a positive odd integer $L$ such that $a^L = 1 = b^L$.
(There are infinitely many such $L$.)
Since $K$ is finitely generated $R$-module, it admits a finite presentation
i.e., $K \cong \coker \phi$ for some matrix $\phi$ over $R$.
Upon localization at $\mm$,
we can think of a minimal $\phi$ where every entry is a member of $\mm$.
Then, this minimal $\phi$ becomes a zero matrix
upon factoring $(\mathfrak b_L)_{\mm}$
because $(\mathfrak b_L)_{\mm} = \mm _{\mm}$.
Since the minimal $\phi$ had nonzero number of rows,
it follows that $(R/\mathfrak b_L \otimes K)_{\mm}$ is nonzero.
Therefore, $(R/\mathfrak b_L) \otimes K$ is nonzero
for infinitely many $L$.
\hfill $\diamond$ \end{rem}

\subsection{Connection to cellular homology}

We have learned that any translation-invariant CSS code
($X$ stabilizers and $Z$ stabilizers are separated)
in two-dimensions with large code distance has to be a finitely many copies of
the toric code.
The stabilizer map of the toric code consists of two blocks
which are related by exchanging two rows and $x \leftrightarrow \bar x,~
y \leftrightarrow \bar y$.
The excitation map inherits the duality as well,
and thus we can focus on one block.
The equation $\ker \epsilon = \im \sigma$ decomposes
into two equivalent equations, one of which reads
\begin{align}
R^1 \xleftarrow{\partial_1 = \begin{pmatrix} x-1 & y-1\end{pmatrix}}
R^2 \xleftarrow{\partial_2 = \begin{pmatrix} y-1\\ -x+1\end{pmatrix}}
R^1.
\end{align}
It is readily verified that $\partial_1 \partial_2 = 0$ as matrices.
We promised earlier that we will calculate $\ker \epsilon / \im \sigma$
subject to the periodic boundary condition that is 
expressed by an ideal $\mathfrak b_L = (x^L-1,y^L-1)$.
Here we make a connection to the cellular homology of the two-dimensional torus.

Consider a torus obtained by gluing small squares.
One can imagine a big square with vertical sides identified with each other,
and the horizontal sides identified with each other,
and then slice the big square into smaller ones.
Suppose the number of small squares are $L$ in each direction
so there are $L^2$ squares in total.

Introduce an abelian group $C_2 = \FF_2^n$ 
where $n = L^2$ is equal to the number of squares.
We may visualize an element of $C_2$ as a configuration of bits written 
on the centers of the small squares.
Similarly, introduce another abelian group $C_1 = \FF_2^{2n}$
where $2n$ is the number of edges.
Each vector in $C_1$ is identified with an array of bits
written on the edges.
Now, define a group homomorphism $\nabla_2 : C_2 \to C_1$
by the rule that the bit 1 on a square is mapped to
four bits written on the four edges that surrounds the square.
If a collection of squares is given as an input to $\nabla_2$,
then the output from $\nabla_2$
is the boundary of the union of the squares.
Thus, it is legitimate to call $\nabla_2$ a boundary map.

Let us introduce a coordinate system for the squares and the edges.
Pick a square and declare it as the origin.
The coordinates for the small squares are defined modulo $L$.
To give the coordinates for the edges, 
we associate the vertical edge to the square on the right of the edge,
and the horizontal edge to the square above the edge.
For example, 
the boundary edges of the square at $(0,0)$ are $(0,0)_v$, $(0,1)_h$,
$(1,0)_v$, and $(0,0)_h$,
where the subscripts distinguishes the horizontal or vertical edges.
Following the trick to write the collection of tuples as a polynomial,
we can write the collection of edges as $1_v + x_v$ and $1_h + y_h$.
Even more compactly, we have $\begin{pmatrix} 1+y \\ 1+x \end{pmatrix}$.
We see that this matrix is equal to $\partial_2$ over $\FF_2$.

\begin{exc}
Verify that $\partial_1$ represents the boundary map from the group of edges to the group of vertices.
To which square do we have to associate a vertex to have a consistent coordinate system?
\hfill $\diamond$ \end{exc}

The kernel of the boundary map modulo the group of boundaries of one-higher dimensional cells
is called a {\bf homology} group.
There is one homology group at each ``dimension'':
Formally we set $\partial_3 = 0$ and $\partial_0 = 0$,
and define $H_i = \ker \partial_i / \im \partial_{i+1}$ for $i=0,1,2$.

We could define these homology groups without ever involving the Laurent polynomial ring,
which was rather a convenient extra feature allowed by the Cartesian array of small squares.
Although it is a another problem how one can identify the homology group,
there is no problem of defining the homology group given any tessellation of the space
into small polygons.
To little (or big) surprise,
it is very important that the resulting homology group is independent of the tessellation 
(cell decomposition).
More general homology theory~\cite{Hatcher2002book}
is beyond the scope of this note,
but one can think of a toy version where one square from our tessellation is divided into two triangles.
In this case one can show that the homology groups remain the same.
On top of this, one can show that any two tessellations can be refined by a process of such subdivisions
to become the same tessellation, which leads the independence of the homology group on tessellations.
The simplest tessellation would be to have only a single square to cover the entire space (the two-dimensional torus),
with vertices and edges properly identified.
This is to set $L=1$ in our earlier discussion,
in which case the homology groups are clearly seen as $H_0 = \FF_2$, $H_1 = \FF_2^2$, and $H_2 = \FF_2$.

\begin{exc}
What is the code space dimension of the toric code under periodic boundary conditions?
What is the code distance?
\hfill $\diamond$ \end{exc}

\subsection{Dimension three and beyond}

As of this writing, there is no known
structure theorem for stabilizer modules in three or higher dimensions.
In terms of pure commutative algebra this problem appears 
not too well guided, since under a fairly mild condition
any chain complex obtained by continued calculation of kernels
(free resolution) gives rise to stabilizer module with large code distances.
General one-dimensional and CSS two-dimensional cases 
were fortunately simple enough that we could handle directly.

From a physical viewpoint,
one can focus on the behavior of excitations.
We briefly defined the excitations as a flipped term of a Hamiltonian,
but we did not give any further attributes to the excitations.
The most important notion to begin with is that of topological excitations ---
an excitation that cannot be created alone from a ground state by any local operator.
The topological excitation is insensitive to local basis change (Clifford transformations),
relabeling of terms of the Hamiltonian, or the translation structure.
The set of all point-like topological excitations 
is identified with 
the torsion submodule of the cokernel of the excitation map~\cite{Haah2012PauliModule},
and it is no coincidence that we introduced the torsion submodule
in the structure theorem for 2D CSS codes.
It is a characteristic of a code that is invariant under various basis changes.

There is conceptual progress from our consideration of additive codes
and corresponding Hamiltonians,
which is surprising from a conventional perspective on topological order.
It has been for long taken granted that a point-like topological excitations
has well-defined ``statistics,''
a consistent number that the quantum mechanical state vector acquires
upon exchanging two such excitations.
Following our formalism,
it is easy to show that this is not always the case.
In three or higher dimensions, there are many systems
of which point-like topological excitations cannot have well-defined ``statistics''
since their motion
is ill-defined~\cite{Chamon2005Quantum,BravyiLeemhuisTerhal2011Topological,
Haah2011Local,VijayHaahFu2015New}.

To illustrate the point more clearly,
consider the toric code excitation map: $\epsilon = ( x-1,~ y-1)$,
the cokernel of which is $\FF_2[x^\pm, y^\pm]/(x-1,y-1) \cong \FF_2$.
An element of the cokernel is an equivalence class of flipped terms
in the Hamiltonian modulo those that can be flipped by local Pauli operators.
Thus, the element of $\coker \epsilon$ is precisely what we call a topological excitation.
The action from the translation group on $\coker \epsilon \cong \FF_2$
is trivial. This means that a topological excitation at one location
is connected to that at other location by some local operator,
implying that the ``motion'' is allowed by some other interaction.
The algebraic origin to this conclusion is that the excitation map have binomial
elements $x-1$ and $y-1$ in the image $\im \epsilon$.
In higher dimensions, there is no reason for the excitation map to have
a binomial term in the image,
in which case the notion of motion for topological excitations becomes ambiguous.
An example in three dimensions is given by 
$\epsilon_\text{cubic} =( 1+x+y+z,~1+xy+yz+zx )$, known as the cubic code~\cite{Haah2011Local}.
The reader is encouraged to show that $\epsilon_\text{cubic}$
cannot generate a binomial term
in the image. 
In fact, $\coker \epsilon$ does not have any binomial zero-divisor.

The behavior of topological excitations can be even richer.
In four or higher dimensions, there is a system
in which a single Hamiltonian term cannot be flipped
alone~\cite{DennisKitaevLandahlEtAl2002Topological},
but they have to appear as a line or in some other more complicated shape.
There is a way of defining line-like topological excitations if the Hamiltonian
consists of commuting terms~\cite{Haah2014invariant},
but a concise algebraic characterization for additive code systems is immature.

\begin{acknowledgments}
The author thanks 
Cesar Galindo-Martinez and Julia Plavnik
for their hospitality during the workshop in Bogot\'a, Colombia,
and H\'ector Bomb\'in for guiding along his paper~\cite{Bombin2011Structure}.
The author is supported by Pappalardo Fellowship in Physics while at MIT.
\end{acknowledgments}

%


\begin{thebibliography}{42}%
\makeatletter
\providecommand \@ifxundefined [1]{%
 \@ifx{#1\undefined}
}%
\providecommand \@ifnum [1]{%
 \ifnum #1\expandafter \@firstoftwo
 \else \expandafter \@secondoftwo
 \fi
}%
\providecommand \@ifx [1]{%
 \ifx #1\expandafter \@firstoftwo
 \else \expandafter \@secondoftwo
 \fi
}%
\providecommand \natexlab [1]{#1}%
\providecommand \enquote  [1]{``#1''}%
\providecommand \bibnamefont  [1]{#1}%
\providecommand \bibfnamefont [1]{#1}%
\providecommand \citenamefont [1]{#1}%
\providecommand \href@noop [0]{\@secondoftwo}%
\providecommand \href [0]{\begingroup \@sanitize@url \@href}%
\providecommand \@href[1]{\@@startlink{#1}\@@href}%
\providecommand \@@href[1]{\endgroup#1\@@endlink}%
\providecommand \@sanitize@url [0]{\catcode `\\12\catcode `\$12\catcode
  `\&12\catcode `\#12\catcode `\^12\catcode `\_12\catcode `\%12\relax}%
\providecommand \@@startlink[1]{}%
\providecommand \@@endlink[0]{}%
\providecommand \url  [0]{\begingroup\@sanitize@url \@url }%
\providecommand \@url [1]{\endgroup\@href {#1}{\urlprefix }}%
\providecommand \urlprefix  [0]{URL }%
\providecommand \Eprint [0]{\href }%
\providecommand \doibase [0]{http://dx.doi.org/}%
\providecommand \selectlanguage [0]{\@gobble}%
\providecommand \bibinfo  [0]{\@secondoftwo}%
\providecommand \bibfield  [0]{\@secondoftwo}%
\providecommand \translation [1]{[#1]}%
\providecommand \BibitemOpen [0]{}%
\providecommand \bibitemStop [0]{}%
\providecommand \bibitemNoStop [0]{.\EOS\space}%
\providecommand \EOS [0]{\spacefactor3000\relax}%
\providecommand \BibitemShut  [1]{\csname bibitem#1\endcsname}%
\let\auto@bib@innerbib\@empty
\bibitem [{\citenamefont {Shor}(1995)}]{Shor1995nine}%
  \BibitemOpen
  \bibfield  {author} {\bibinfo {author} {\bibfnamefont {Peter~W.}\
  \bibnamefont {Shor}},\ }\bibfield  {title} {\enquote {\bibinfo {title}
  {Scheme for reducing decoherence in quantum computer memory},}\ }\href
  {\doibase 10.1103/PhysRevA.52.R2493} {\bibfield  {journal} {\bibinfo
  {journal} {Phys. Rev. A}\ }\textbf {\bibinfo {volume} {52}},\ \bibinfo
  {pages} {R2493--R2496} (\bibinfo {year} {1995})}\BibitemShut {NoStop}%
\bibitem [{\citenamefont {Shor}(1996)}]{Shor1996Fault-tolerant}%
  \BibitemOpen
  \bibfield  {author} {\bibinfo {author} {\bibfnamefont {Peter~W.}\
  \bibnamefont {Shor}},\ }\bibfield  {title} {\enquote {\bibinfo {title}
  {Fault-tolerant quantum computation},}\ }in\ \href {\doibase
  10.1109/SFCS.1996.548464} {\emph {\bibinfo {booktitle} {Foundations of
  Computer Science, 1996. Proceedings., 37th Annual Symposium on}}}\ (\bibinfo
  {year} {1996})\ pp.\ \bibinfo {pages} {56--65},\ \Eprint
  {http://arxiv.org/abs/quant-ph/9605011} {quant-ph/9605011} \BibitemShut
  {NoStop}%
\bibitem [{\citenamefont {Calderbank}\ and\ \citenamefont
  {Shor}(1996)}]{CalderbankShor1996Good}%
  \BibitemOpen
  \bibfield  {author} {\bibinfo {author} {\bibfnamefont {A.~R.}\ \bibnamefont
  {Calderbank}}\ and\ \bibinfo {author} {\bibfnamefont {Peter~W.}\ \bibnamefont
  {Shor}},\ }\bibfield  {title} {\enquote {\bibinfo {title} {Good quantum
  error-correcting codes exist},}\ }\href {\doibase 10.1103/PhysRevA.54.1098}
  {\bibfield  {journal} {\bibinfo  {journal} {Phys. Rev. A}\ }\textbf {\bibinfo
  {volume} {54}},\ \bibinfo {pages} {1098--1105} (\bibinfo {year} {1996})},\
  \Eprint {http://arxiv.org/abs/quant-ph/9512032} {quant-ph/9512032}
  \BibitemShut {NoStop}%
\bibitem [{\citenamefont {Steane}(1996)}]{Steane1996Multiple}%
  \BibitemOpen
  \bibfield  {author} {\bibinfo {author} {\bibfnamefont {Andrew}\ \bibnamefont
  {Steane}},\ }\bibfield  {title} {\enquote {\bibinfo {title} {Multiple
  particle interference and quantum error correction},}\ }\href {\doibase
  10.1098/rspa.1996.0136} {\bibfield  {journal} {\bibinfo  {journal} {Proc.
  Roy. Soc. Lond. A}\ }\textbf {\bibinfo {volume} {452}},\ \bibinfo {pages}
  {2551} (\bibinfo {year} {1996})},\ \Eprint
  {http://arxiv.org/abs/quant-ph/9601029} {quant-ph/9601029} \BibitemShut
  {NoStop}%
\bibitem [{\citenamefont {Gottesman}(1996)}]{Gottesman1996Saturating}%
  \BibitemOpen
  \bibfield  {author} {\bibinfo {author} {\bibfnamefont {Daniel}\ \bibnamefont
  {Gottesman}},\ }\bibfield  {title} {\enquote {\bibinfo {title} {A class of
  quantum error-correcting codes saturating the quantum hamming bound},}\
  }\href {\doibase 10.1103/PhysRevA.54.1862} {\bibfield  {journal} {\bibinfo
  {journal} {Phys. Rev. A}\ }\textbf {\bibinfo {volume} {54}},\ \bibinfo
  {pages} {1862} (\bibinfo {year} {1996})},\ \Eprint
  {http://arxiv.org/abs/quant-ph/9604038} {quant-ph/9604038} \BibitemShut
  {NoStop}%
\bibitem [{\citenamefont {Calderbank}\ \emph {et~al.}(1997)\citenamefont
  {Calderbank}, \citenamefont {Rains}, \citenamefont {Shor},\ and\
  \citenamefont {Sloane}}]{CalderbankRainsShorEtAl1997Quantum}%
  \BibitemOpen
  \bibfield  {author} {\bibinfo {author} {\bibfnamefont {A.~R.}\ \bibnamefont
  {Calderbank}}, \bibinfo {author} {\bibfnamefont {E.~M}\ \bibnamefont
  {Rains}}, \bibinfo {author} {\bibfnamefont {P.~W.}\ \bibnamefont {Shor}}, \
  and\ \bibinfo {author} {\bibfnamefont {N.~J.~A.}\ \bibnamefont {Sloane}},\
  }\bibfield  {title} {\enquote {\bibinfo {title} {Quantum error correction and
  orthogonal geometry},}\ }\href {\doibase 10.1103/PhysRevLett.78.405}
  {\bibfield  {journal} {\bibinfo  {journal} {Phys. Rev. Lett.}\ }\textbf
  {\bibinfo {volume} {78}},\ \bibinfo {pages} {405--408} (\bibinfo {year}
  {1997})},\ \Eprint {http://arxiv.org/abs/quant-ph/9605005} {quant-ph/9605005}
  \BibitemShut {NoStop}%
\bibitem [{\citenamefont {Calderbank}\ \emph {et~al.}(1998)\citenamefont
  {Calderbank}, \citenamefont {Rains}, \citenamefont {Shor},\ and\
  \citenamefont {Sloane}}]{CalderbankRainsShorSloane1998GF4}%
  \BibitemOpen
  \bibfield  {author} {\bibinfo {author} {\bibfnamefont {A.R.}\ \bibnamefont
  {Calderbank}}, \bibinfo {author} {\bibfnamefont {E.M.}\ \bibnamefont
  {Rains}}, \bibinfo {author} {\bibfnamefont {P.W.}\ \bibnamefont {Shor}}, \
  and\ \bibinfo {author} {\bibfnamefont {N.~J~A}\ \bibnamefont {Sloane}},\
  }\bibfield  {title} {\enquote {\bibinfo {title} {Quantum error correction via
  codes over gf(4)},}\ }\href {\doibase 10.1109/18.681315} {\bibfield
  {journal} {\bibinfo  {journal} {Information Theory, IEEE Transactions on}\
  }\textbf {\bibinfo {volume} {44}},\ \bibinfo {pages} {1369--1387} (\bibinfo
  {year} {1998})}\BibitemShut {NoStop}%
\bibitem [{\citenamefont {Kitaev}(2003)}]{Kitaev2003Fault-tolerant}%
  \BibitemOpen
  \bibfield  {author} {\bibinfo {author} {\bibfnamefont {A.~Yu.}\ \bibnamefont
  {Kitaev}},\ }\bibfield  {title} {\enquote {\bibinfo {title} {Fault-tolerant
  quantum computation by anyons},}\ }\href {\doibase
  10.1016/S0003-4916(02)00018-0} {\bibfield  {journal} {\bibinfo  {journal}
  {Annals of Physics}\ }\textbf {\bibinfo {volume} {303}},\ \bibinfo {pages}
  {2--30} (\bibinfo {year} {2003})},\ \Eprint
  {http://arxiv.org/abs/quant-ph/9707021} {quant-ph/9707021} \BibitemShut
  {NoStop}%
\bibitem [{\citenamefont {Read}\ and\ \citenamefont
  {Sachdev}(1991)}]{ReadSachdev1991LargeN}%
  \BibitemOpen
  \bibfield  {author} {\bibinfo {author} {\bibfnamefont {N.}~\bibnamefont
  {Read}}\ and\ \bibinfo {author} {\bibfnamefont {Subir}\ \bibnamefont
  {Sachdev}},\ }\bibfield  {title} {\enquote {\bibinfo {title} {Large-$n$
  expansion for frustrated quantum antiferromagnets},}\ }\href {\doibase
  10.1103/PhysRevLett.66.1773} {\bibfield  {journal} {\bibinfo  {journal}
  {Phys. Rev. Lett.}\ }\textbf {\bibinfo {volume} {66}},\ \bibinfo {pages}
  {1773--1776} (\bibinfo {year} {1991})}\BibitemShut {NoStop}%
\bibitem [{\citenamefont {Haah}(2011)}]{Haah2011Local}%
  \BibitemOpen
  \bibfield  {author} {\bibinfo {author} {\bibfnamefont {Jeongwan}\
  \bibnamefont {Haah}},\ }\bibfield  {title} {\enquote {\bibinfo {title} {Local
  stabilizer codes in three dimensions without string logical operators},}\
  }\href {\doibase 10.1103/PhysRevB.89.075119} {\bibfield  {journal} {\bibinfo
  {journal} {Phys. Rev. A}\ }\textbf {\bibinfo {volume} {83}},\ \bibinfo
  {pages} {042330} (\bibinfo {year} {2011})},\ \Eprint
  {http://arxiv.org/abs/1310.4507} {1310.4507} \BibitemShut {NoStop}%
\bibitem [{\citenamefont {Chamon}(2005)}]{Chamon2005Quantum}%
  \BibitemOpen
  \bibfield  {author} {\bibinfo {author} {\bibfnamefont {Claudio}\ \bibnamefont
  {Chamon}},\ }\bibfield  {title} {\enquote {\bibinfo {title} {Quantum
  glassiness},}\ }\href {\doibase 10.1103/PhysRevLett.94.040402} {\bibfield
  {journal} {\bibinfo  {journal} {Phys. Rev. Lett.}\ }\textbf {\bibinfo
  {volume} {94}},\ \bibinfo {pages} {040402} (\bibinfo {year} {2005})},\
  \Eprint {http://arxiv.org/abs/cond-mat/0404182} {cond-mat/0404182}
  \BibitemShut {NoStop}%
\bibitem [{\citenamefont {Bravyi}\ \emph {et~al.}(2011)\citenamefont {Bravyi},
  \citenamefont {Leemhuis},\ and\ \citenamefont
  {Terhal}}]{BravyiLeemhuisTerhal2011Topological}%
  \BibitemOpen
  \bibfield  {author} {\bibinfo {author} {\bibfnamefont {Sergey}\ \bibnamefont
  {Bravyi}}, \bibinfo {author} {\bibfnamefont {Bernhard}\ \bibnamefont
  {Leemhuis}}, \ and\ \bibinfo {author} {\bibfnamefont {Barbara~M.}\
  \bibnamefont {Terhal}},\ }\bibfield  {title} {\enquote {\bibinfo {title}
  {Topological order in an exactly solvable 3{D} spin model},}\ }\href
  {\doibase 10.1016/j.aop.2010.11.002} {\bibfield  {journal} {\bibinfo
  {journal} {Annals of Physics}\ }\textbf {\bibinfo {volume} {326}},\ \bibinfo
  {pages} {839--866} (\bibinfo {year} {2011})},\ \Eprint
  {http://arxiv.org/abs/1006.4871} {1006.4871} \BibitemShut {NoStop}%
\bibitem [{\citenamefont {Kim}(2012)}]{Kim2012qupit}%
  \BibitemOpen
  \bibfield  {author} {\bibinfo {author} {\bibfnamefont {Isaac~H.}\
  \bibnamefont {Kim}},\ }\bibfield  {title} {\enquote {\bibinfo {title} {3d
  local qupit quantum code without string logical operator},}\ }\href@noop {}
  {\  (\bibinfo {year} {2012})},\ \Eprint {http://arxiv.org/abs/1202.0052}
  {1202.0052} \BibitemShut {NoStop}%
\bibitem [{\citenamefont {Yoshida}(2013)}]{Yoshida2013Fractal}%
  \BibitemOpen
  \bibfield  {author} {\bibinfo {author} {\bibfnamefont {Beni}\ \bibnamefont
  {Yoshida}},\ }\bibfield  {title} {\enquote {\bibinfo {title} {Exotic
  topological order in fractal spin liquids},}\ }\href {\doibase
  10.1103/PhysRevB.88.125122} {\bibfield  {journal} {\bibinfo  {journal} {Phys.
  Rev. B}\ }\textbf {\bibinfo {volume} {88}},\ \bibinfo {pages} {125122}
  (\bibinfo {year} {2013})}\BibitemShut {NoStop}%
\bibitem [{\citenamefont {Vijay}\ \emph {et~al.}(2015)\citenamefont {Vijay},
  \citenamefont {Haah},\ and\ \citenamefont {Fu}}]{VijayHaahFu2015New}%
  \BibitemOpen
  \bibfield  {author} {\bibinfo {author} {\bibfnamefont {Sagar}\ \bibnamefont
  {Vijay}}, \bibinfo {author} {\bibfnamefont {Jeongwan}\ \bibnamefont {Haah}},
  \ and\ \bibinfo {author} {\bibfnamefont {Liang}\ \bibnamefont {Fu}},\
  }\bibfield  {title} {\enquote {\bibinfo {title} {A new kind of topological
  quantum order: A dimensional hierarchy of quasiparticles built from
  stationary excitations},}\ }\href {\doibase 10.1103/PhysRevB.92.235136}
  {\bibfield  {journal} {\bibinfo  {journal} {Phys. Rev. B}\ }\textbf {\bibinfo
  {volume} {92}},\ \bibinfo {pages} {235136} (\bibinfo {year} {2015})},\
  \Eprint {http://arxiv.org/abs/1505.02576} {1505.02576} \BibitemShut {NoStop}%
\bibitem [{\citenamefont {Vijay}\ \emph {et~al.}(2016)\citenamefont {Vijay},
  \citenamefont {Haah},\ and\ \citenamefont {Fu}}]{VijayHaahFu2016Fracton}%
  \BibitemOpen
  \bibfield  {author} {\bibinfo {author} {\bibfnamefont {Sagar}\ \bibnamefont
  {Vijay}}, \bibinfo {author} {\bibfnamefont {Jeongwan}\ \bibnamefont {Haah}},
  \ and\ \bibinfo {author} {\bibfnamefont {Liang}\ \bibnamefont {Fu}},\
  }\bibfield  {title} {\enquote {\bibinfo {title} {Fracton topological order,
  generalized lattice gauge theory and duality},}\ }\href@noop {} {\  (\bibinfo
  {year} {2016})},\ \Eprint {http://arxiv.org/abs/1603.04442} {1603.04442}
  \BibitemShut {NoStop}%
\bibitem [{\citenamefont {Bomb\'in}(2014)}]{Bombin2011Structure}%
  \BibitemOpen
  \bibfield  {author} {\bibinfo {author} {\bibfnamefont {H\'ector}\
  \bibnamefont {Bomb\'in}},\ }\bibfield  {title} {\enquote {\bibinfo {title}
  {Structure of 2{D} topological stabilizer codes},}\ }\href {\doibase
  10.1007/s00220-014-1893-4} {\bibfield  {journal} {\bibinfo  {journal}
  {Commun. Math. Phys.}\ }\textbf {\bibinfo {volume} {327}},\ \bibinfo {pages}
  {387--432} (\bibinfo {year} {2014})},\ \Eprint
  {http://arxiv.org/abs/1107.2707} {1107.2707} \BibitemShut {NoStop}%
\bibitem [{\citenamefont {Gottesman}(1999)}]{Gottesman1999qudit}%
  \BibitemOpen
  \bibfield  {author} {\bibinfo {author} {\bibfnamefont {Daniel}\ \bibnamefont
  {Gottesman}},\ }\bibfield  {title} {\enquote {\bibinfo {title}
  {Fault-tolerant quantum computation with higher-dimensional systems},}\ }in\
  \href {\doibase 10.1007/3-540-49208-9_27} {\emph {\bibinfo {booktitle}
  {Quantum Computing and Quantum Communications}}},\ \bibinfo {series} {Lecture
  Notes in Computer Science}, Vol.\ \bibinfo {volume} {1509},\ \bibinfo
  {editor} {edited by\ \bibinfo {editor} {\bibfnamefont {ColinP.}\ \bibnamefont
  {Williams}}}\ (\bibinfo  {publisher} {Springer Berlin Heidelberg},\ \bibinfo
  {year} {1999})\ pp.\ \bibinfo {pages} {302--313}\BibitemShut {NoStop}%
\bibitem [{\citenamefont {Rains}(1999)}]{Rains1999Nonbinary}%
  \BibitemOpen
  \bibfield  {author} {\bibinfo {author} {\bibfnamefont {E.M.}\ \bibnamefont
  {Rains}},\ }\bibfield  {title} {\enquote {\bibinfo {title} {Nonbinary quantum
  codes},}\ }\href {\doibase 10.1109/18.782103} {\bibfield  {journal} {\bibinfo
   {journal} {Information Theory, IEEE Transactions on}\ }\textbf {\bibinfo
  {volume} {45}},\ \bibinfo {pages} {1827--1832} (\bibinfo {year} {1999})},\
  \Eprint {http://arxiv.org/abs/quant-ph/9703048} {quant-ph/9703048}
  \BibitemShut {NoStop}%
\bibitem [{\citenamefont {Farinholt}(2014)}]{Farinholt2013}%
  \BibitemOpen
  \bibfield  {author} {\bibinfo {author} {\bibfnamefont {J.~M.}\ \bibnamefont
  {Farinholt}},\ }\bibfield  {title} {\enquote {\bibinfo {title} {An ideal
  characterization of the clifford operators},}\ }\href {\doibase
  10.1088/1751-8113/47/30/305303} {\bibfield  {journal} {\bibinfo  {journal}
  {J. Phys. A: Math. Theor.}\ }\textbf {\bibinfo {volume} {47}},\ \bibinfo
  {pages} {305303} (\bibinfo {year} {2014})},\ \Eprint
  {http://arxiv.org/abs/1307.5087v4} {1307.5087v4} \BibitemShut {NoStop}%
\bibitem [{\citenamefont {Bravyi}\ and\ \citenamefont
  {Terhal}(2009)}]{BravyiTerhal2009no-go}%
  \BibitemOpen
  \bibfield  {author} {\bibinfo {author} {\bibfnamefont {Sergey}\ \bibnamefont
  {Bravyi}}\ and\ \bibinfo {author} {\bibfnamefont {Barbara}\ \bibnamefont
  {Terhal}},\ }\bibfield  {title} {\enquote {\bibinfo {title} {A no-go theorem
  for a two-dimensional self-correcting quantum memory based on stabilizer
  codes},}\ }\href {\doibase 10.1088/1367-2630/11/4/043029} {\bibfield
  {journal} {\bibinfo  {journal} {New J. Phys.}\ }\textbf {\bibinfo {volume}
  {11}},\ \bibinfo {pages} {043029} (\bibinfo {year} {2009})},\ \Eprint
  {http://arxiv.org/abs/0810.1983} {arXiv:0810.1983} \BibitemShut {NoStop}%
\bibitem [{\citenamefont {Yoshida}\ and\ \citenamefont
  {Chuang}(2010)}]{YoshidaChuang2010Framework}%
  \BibitemOpen
  \bibfield  {author} {\bibinfo {author} {\bibfnamefont {Beni}\ \bibnamefont
  {Yoshida}}\ and\ \bibinfo {author} {\bibfnamefont {Isaac~L.}\ \bibnamefont
  {Chuang}},\ }\bibfield  {title} {\enquote {\bibinfo {title} {Framework for
  classifying logical operators in stabilizer codes},}\ }\href {\doibase
  10.1103/PhysRevA.81.052302} {\bibfield  {journal} {\bibinfo  {journal} {Phys.
  Rev. A}\ }\textbf {\bibinfo {volume} {81}},\ \bibinfo {pages} {052302}
  (\bibinfo {year} {2010})},\ \Eprint {http://arxiv.org/abs/1002.0085}
  {arXiv:1002.0085} \BibitemShut {NoStop}%
\bibitem [{\citenamefont {Haah}\ and\ \citenamefont
  {Preskill}(2012)}]{HaahPreskill2012tradeoff}%
  \BibitemOpen
  \bibfield  {author} {\bibinfo {author} {\bibfnamefont {Jeongwan}\
  \bibnamefont {Haah}}\ and\ \bibinfo {author} {\bibfnamefont {John}\
  \bibnamefont {Preskill}},\ }\bibfield  {title} {\enquote {\bibinfo {title}
  {Logical operator tradeoff for local quantum codes},}\ }\href {\doibase
  10.1103/PhysRevA.86.032308} {\bibfield  {journal} {\bibinfo  {journal} {Phys.
  Rev. A}\ }\textbf {\bibinfo {volume} {86}},\ \bibinfo {pages} {032308}
  (\bibinfo {year} {2012})},\ \Eprint {http://arxiv.org/abs/1011.3529}
  {1011.3529} \BibitemShut {NoStop}%
\bibitem [{\citenamefont {Knill}\ and\ \citenamefont
  {Laflamme}(1997)}]{KnillLaflamme1997Theory}%
  \BibitemOpen
  \bibfield  {author} {\bibinfo {author} {\bibfnamefont {Emanuel}\ \bibnamefont
  {Knill}}\ and\ \bibinfo {author} {\bibfnamefont {Raymond}\ \bibnamefont
  {Laflamme}},\ }\bibfield  {title} {\enquote {\bibinfo {title} {Theory of
  quantum error-correcting codes},}\ }\href {\doibase 10.1103/PhysRevA.55.900}
  {\bibfield  {journal} {\bibinfo  {journal} {Phys. Rev. A}\ }\textbf {\bibinfo
  {volume} {55}},\ \bibinfo {pages} {900--911} (\bibinfo {year}
  {1997})}\BibitemShut {NoStop}%
\bibitem [{\citenamefont {Fattal}\ \emph {et~al.}(2004)\citenamefont {Fattal},
  \citenamefont {Cubitt}, \citenamefont {Yamamoto}, \citenamefont {Bravyi},\
  and\ \citenamefont {Chuang}}]{FattalCubittYamamotoEtAl2004Entanglement}%
  \BibitemOpen
  \bibfield  {author} {\bibinfo {author} {\bibfnamefont {David}\ \bibnamefont
  {Fattal}}, \bibinfo {author} {\bibfnamefont {Toby~S.}\ \bibnamefont
  {Cubitt}}, \bibinfo {author} {\bibfnamefont {Yoshihisa}\ \bibnamefont
  {Yamamoto}}, \bibinfo {author} {\bibfnamefont {Sergey}\ \bibnamefont
  {Bravyi}}, \ and\ \bibinfo {author} {\bibfnamefont {Isaac~L.}\ \bibnamefont
  {Chuang}},\ }\bibfield  {title} {\enquote {\bibinfo {title} {Entanglement in
  the stabilizer formalism},}\ }\href@noop {} {\  (\bibinfo {year} {2004})},\
  \Eprint {http://arxiv.org/abs/quant-ph/0406168} {quant-ph/0406168}
  \BibitemShut {NoStop}%
\bibitem [{\citenamefont {Hamma}\ \emph {et~al.}(2005)\citenamefont {Hamma},
  \citenamefont {Ionicioiu},\ and\ \citenamefont
  {Zanardi}}]{HammaIonicioiuZanardi2005}%
  \BibitemOpen
  \bibfield  {author} {\bibinfo {author} {\bibfnamefont {Alioscia}\
  \bibnamefont {Hamma}}, \bibinfo {author} {\bibfnamefont {Radu}\ \bibnamefont
  {Ionicioiu}}, \ and\ \bibinfo {author} {\bibfnamefont {Paolo}\ \bibnamefont
  {Zanardi}},\ }\bibfield  {title} {\enquote {\bibinfo {title} {Bipartite
  entanglement and entropic boundary law in lattice spin systems},}\ }\href
  {\doibase 10.1103/PhysRevA.71.022315} {\bibfield  {journal} {\bibinfo
  {journal} {Phys. Rev. A}\ }\textbf {\bibinfo {volume} {71}},\ \bibinfo
  {pages} {022315} (\bibinfo {year} {2005})}\BibitemShut {NoStop}%
\bibitem [{\citenamefont {Klappenecker}\ and\ \citenamefont
  {Roetteler}(2002)}]{KlappeneckerRoetteler2002stabilizer}%
  \BibitemOpen
  \bibfield  {author} {\bibinfo {author} {\bibfnamefont {A.}~\bibnamefont
  {Klappenecker}}\ and\ \bibinfo {author} {\bibfnamefont {M.}~\bibnamefont
  {Roetteler}},\ }\bibfield  {title} {\enquote {\bibinfo {title} {Beyond
  stabilizer codes ii: Clifford codes},}\ }\href {\doibase
  10.1109/TIT/2002.800473} {\bibfield  {journal} {\bibinfo  {journal} {IEEE
  Transactions on Information Theory}\ }\textbf {\bibinfo {volume} {48}},\
  \bibinfo {pages} {2396--2399} (\bibinfo {year} {2002})}\BibitemShut {NoStop}%
\bibitem [{\citenamefont {Linden}\ \emph {et~al.}(2013)\citenamefont {Linden},
  \citenamefont {Matus}, \citenamefont {Ruskai},\ and\ \citenamefont
  {Winter}}]{LindenMatusRuskaiEtAl2013Quantum}%
  \BibitemOpen
  \bibfield  {author} {\bibinfo {author} {\bibfnamefont {Noah}\ \bibnamefont
  {Linden}}, \bibinfo {author} {\bibfnamefont {Frantisek}\ \bibnamefont
  {Matus}}, \bibinfo {author} {\bibfnamefont {Mary~Beth}\ \bibnamefont
  {Ruskai}}, \ and\ \bibinfo {author} {\bibfnamefont {Andreas}\ \bibnamefont
  {Winter}},\ }\bibfield  {title} {\enquote {\bibinfo {title} {{The Quantum
  Entropy Cone of Stabiliser States}},}\ }in\ \href {\doibase
  10.4230/LIPIcs.TQC.2013.270} {\emph {\bibinfo {booktitle} {8th Conference on
  the Theory of Quantum Computation, Communication and Cryptography (TQC
  2013)}}},\ \bibinfo {series} {Leibniz International Proceedings in
  Informatics (LIPIcs)}, Vol.~\bibinfo {volume} {22},\ \bibinfo {editor}
  {edited by\ \bibinfo {editor} {\bibfnamefont {Simone}\ \bibnamefont
  {Severini}}\ and\ \bibinfo {editor} {\bibfnamefont {Fernando}\ \bibnamefont
  {Brandao}}}\ (\bibinfo  {publisher} {Schloss Dagstuhl--Leibniz-Zentrum fuer
  Informatik},\ \bibinfo {address} {Dagstuhl, Germany},\ \bibinfo {year}
  {2013})\ pp.\ \bibinfo {pages} {270--284},\ \Eprint
  {http://arxiv.org/abs/1302.5453} {1302.5453} \BibitemShut {NoStop}%
\bibitem [{\citenamefont {Haah}(2013)}]{Haah2012PauliModule}%
  \BibitemOpen
  \bibfield  {author} {\bibinfo {author} {\bibfnamefont {Jeongwan}\
  \bibnamefont {Haah}},\ }\bibfield  {title} {\enquote {\bibinfo {title}
  {Commuting {P}auli {H}amiltonians as maps between free modules},}\ }\href
  {\doibase 10.1007/s00220-013-1810-2} {\bibfield  {journal} {\bibinfo
  {journal} {Commun. Math. Phys.}\ }\textbf {\bibinfo {volume} {324}},\
  \bibinfo {pages} {351--399} (\bibinfo {year} {2013})},\ \Eprint
  {http://arxiv.org/abs/1204.1063} {1204.1063} \BibitemShut {NoStop}%
\bibitem [{\citenamefont {Atiyah}\ and\ \citenamefont
  {MacDonald}(1969)}]{AtiyahMacDonald}%
  \BibitemOpen
  \bibfield  {author} {\bibinfo {author} {\bibfnamefont {M.~F.}\ \bibnamefont
  {Atiyah}}\ and\ \bibinfo {author} {\bibfnamefont {I.~G.}\ \bibnamefont
  {MacDonald}},\ }\href@noop {} {\emph {\bibinfo {title} {Introduction to
  Commutative Algebra}}}\ (\bibinfo  {publisher} {Westview},\ \bibinfo {year}
  {1969})\BibitemShut {NoStop}%
\bibitem [{\citenamefont {Grassl}\ and\ \citenamefont
  {Roetteler}(2006)}]{GrasslRoetteler2006Convolutional}%
  \BibitemOpen
  \bibfield  {author} {\bibinfo {author} {\bibfnamefont {M.}~\bibnamefont
  {Grassl}}\ and\ \bibinfo {author} {\bibfnamefont {M.}~\bibnamefont
  {Roetteler}},\ }\bibfield  {title} {\enquote {\bibinfo {title}
  {Non-catastrophic encoders and encoder inverses for quantum convolutional
  codes},}\ }in\ \href {\doibase 10.1109/ISIT.2006.261956} {\emph {\bibinfo
  {booktitle} {Information Theory, 2006 IEEE International Symposium on}}}\
  (\bibinfo {year} {2006})\ pp.\ \bibinfo {pages} {1109--1113}\BibitemShut
  {NoStop}%
\bibitem [{\citenamefont {Lang}(2002)}]{Lang}%
  \BibitemOpen
  \bibfield  {author} {\bibinfo {author} {\bibfnamefont {Serge}\ \bibnamefont
  {Lang}},\ }\href@noop {} {\emph {\bibinfo {title} {Algebra}}},\ \bibinfo
  {edition} {revised 3rd}\ ed.\ (\bibinfo  {publisher} {Springer},\ \bibinfo
  {year} {2002})\BibitemShut {NoStop}%
\bibitem [{\citenamefont {Park}(2004)}]{Park2004Symbolic}%
  \BibitemOpen
  \bibfield  {author} {\bibinfo {author} {\bibfnamefont {Hyungju}\ \bibnamefont
  {Park}},\ }\bibfield  {title} {\enquote {\bibinfo {title} {Symbolic
  computation and signal processing},}\ }\href {\doibase
  10.1016/j.jsc.2002.06.003} {\bibfield  {journal} {\bibinfo  {journal}
  {Journal of Symbolic Computation}\ }\textbf {\bibinfo {volume} {37}},\
  \bibinfo {pages} {209 -- 226} (\bibinfo {year} {2004})}\BibitemShut {NoStop}%
\bibitem [{\citenamefont {Park}\ and\ \citenamefont
  {Woodburn}(1995)}]{ParkWoodburn1995Algorithmic}%
  \BibitemOpen
  \bibfield  {author} {\bibinfo {author} {\bibfnamefont {H.}~\bibnamefont
  {Park}}\ and\ \bibinfo {author} {\bibfnamefont {C.}~\bibnamefont
  {Woodburn}},\ }\bibfield  {title} {\enquote {\bibinfo {title} {An algorithmic
  proof of {S}uslin's stability theorem over polynomial rings},}\ }\href
  {\doibase 10.1006/jabr.1995.1349} {\bibfield  {journal} {\bibinfo  {journal}
  {Journal of Algebra}\ }\textbf {\bibinfo {volume} {178}},\ \bibinfo {pages}
  {277--298} (\bibinfo {year} {1995})},\ \Eprint
  {http://arxiv.org/abs/alg-geom/9405003} {alg-geom/9405003} \BibitemShut
  {NoStop}%
\bibitem [{\citenamefont {Amidou}\ and\ \citenamefont
  {Yengui}(2008)}]{AmidouYengui2008algorithm}%
  \BibitemOpen
  \bibfield  {author} {\bibinfo {author} {\bibfnamefont {Morou}\ \bibnamefont
  {Amidou}}\ and\ \bibinfo {author} {\bibfnamefont {Ihsen}\ \bibnamefont
  {Yengui}},\ }\bibfield  {title} {\enquote {\bibinfo {title} {An algorithm for
  unimodular completion over laurent polynomial rings},}\ }\href {\doibase
  10.1016/j.laa.2008.05.002} {\bibfield  {journal} {\bibinfo  {journal} {Linear
  Algebra and its Applications}\ }\textbf {\bibinfo {volume} {429}},\ \bibinfo
  {pages} {1687 -- 1698} (\bibinfo {year} {2008})}\BibitemShut {NoStop}%
\bibitem [{\citenamefont {Levin}\ and\ \citenamefont
  {Wen}(2003)}]{LevinWen2003Fermions}%
  \BibitemOpen
  \bibfield  {author} {\bibinfo {author} {\bibfnamefont {Michael}\ \bibnamefont
  {Levin}}\ and\ \bibinfo {author} {\bibfnamefont {Xiao-Gang}\ \bibnamefont
  {Wen}},\ }\bibfield  {title} {\enquote {\bibinfo {title} {Fermions, strings,
  and gauge fields in lattice spin models},}\ }\href {\doibase
  10.1103/PhysRevB.67.245316} {\bibfield  {journal} {\bibinfo  {journal} {Phys.
  Rev. B}\ }\textbf {\bibinfo {volume} {67}},\ \bibinfo {pages} {245316}
  (\bibinfo {year} {2003})},\ \Eprint {http://arxiv.org/abs/cond-mat/0302460}
  {cond-mat/0302460} \BibitemShut {NoStop}%
\bibitem [{\citenamefont {Bakalov}\ and\ \citenamefont
  {Jr.}(2001)}]{BakalovKirillov2001}%
  \BibitemOpen
  \bibfield  {author} {\bibinfo {author} {\bibfnamefont {Bojko}\ \bibnamefont
  {Bakalov}}\ and\ \bibinfo {author} {\bibfnamefont {Alexander~Kirillov}\
  \bibnamefont {Jr.}},\ }\href@noop {} {\emph {\bibinfo {title} {Lectures on
  Tensor Categories and Modular Functors}}},\ \bibinfo {series} {University
  Lecture Series}, Vol.~\bibinfo {volume} {21}\ (\bibinfo  {publisher}
  {American Mathematical Society},\ \bibinfo {year} {2001})\BibitemShut
  {NoStop}%
\bibitem [{\citenamefont {Kitaev}(2006)}]{Kitaev2006Anyons}%
  \BibitemOpen
  \bibfield  {author} {\bibinfo {author} {\bibfnamefont {Alexei}\ \bibnamefont
  {Kitaev}},\ }\bibfield  {title} {\enquote {\bibinfo {title} {Anyons in an
  exactly solved model and beyond},}\ }\href {\doibase
  10.1016/j.aop.2005.10.005} {\bibfield  {journal} {\bibinfo  {journal} {Annals
  of Physics}\ }\textbf {\bibinfo {volume} {321}},\ \bibinfo {pages} {2--111}
  (\bibinfo {year} {2006})},\ \Eprint {http://arxiv.org/abs/cond-mat/0506438}
  {cond-mat/0506438} \BibitemShut {NoStop}%
\bibitem [{\citenamefont {Arrighi}\ \emph {et~al.}(2007)\citenamefont
  {Arrighi}, \citenamefont {Nesme},\ and\ \citenamefont
  {Werner}}]{ArrighiNesmeWerner2007}%
  \BibitemOpen
  \bibfield  {author} {\bibinfo {author} {\bibfnamefont {Pablo}\ \bibnamefont
  {Arrighi}}, \bibinfo {author} {\bibfnamefont {Vincent}\ \bibnamefont
  {Nesme}}, \ and\ \bibinfo {author} {\bibfnamefont {Reinhard}\ \bibnamefont
  {Werner}},\ }\bibfield  {title} {\enquote {\bibinfo {title} {Unitarity plus
  causality implies localizability},}\ }\href@noop {} {\  (\bibinfo {year}
  {2007})},\ \Eprint {http://arxiv.org/abs/0711.3975} {0711.3975} \BibitemShut
  {NoStop}%
\bibitem [{\citenamefont {Hatcher}(2002)}]{Hatcher2002book}%
  \BibitemOpen
  \bibfield  {author} {\bibinfo {author} {\bibfnamefont {Allen}\ \bibnamefont
  {Hatcher}},\ }\href@noop {} {\emph {\bibinfo {title} {Algebraic Topology}}}\
  (\bibinfo  {publisher} {Cambridge University Press},\ \bibinfo {year}
  {2002})\BibitemShut {NoStop}%
\bibitem [{\citenamefont {Dennis}\ \emph {et~al.}(2002)\citenamefont {Dennis},
  \citenamefont {Kitaev}, \citenamefont {Landahl},\ and\ \citenamefont
  {Preskill}}]{DennisKitaevLandahlEtAl2002Topological}%
  \BibitemOpen
  \bibfield  {author} {\bibinfo {author} {\bibfnamefont {Eric}\ \bibnamefont
  {Dennis}}, \bibinfo {author} {\bibfnamefont {Alexei}\ \bibnamefont {Kitaev}},
  \bibinfo {author} {\bibfnamefont {Andrew}\ \bibnamefont {Landahl}}, \ and\
  \bibinfo {author} {\bibfnamefont {John}\ \bibnamefont {Preskill}},\
  }\bibfield  {title} {\enquote {\bibinfo {title} {Topological quantum
  memory},}\ }\href {\doibase 10.1063/1.1499754} {\bibfield  {journal}
  {\bibinfo  {journal} {J. Math. Phys.}\ }\textbf {\bibinfo {volume} {43}},\
  \bibinfo {pages} {4452--4505} (\bibinfo {year} {2002})},\ \Eprint
  {http://arxiv.org/abs/quant-ph/0110143} {quant-ph/0110143} \BibitemShut
  {NoStop}%
\bibitem [{\citenamefont {Haah}(2016)}]{Haah2014invariant}%
  \BibitemOpen
  \bibfield  {author} {\bibinfo {author} {\bibfnamefont {Jeongwan}\
  \bibnamefont {Haah}},\ }\bibfield  {title} {\enquote {\bibinfo {title} {An
  invariant of topologically ordered states under local unitary
  transformations},}\ }\href {\doibase 10.1007/s00220-016-2594-y} {\bibfield
  {journal} {\bibinfo  {journal} {Communications in Mathematical Physics}\
  }\textbf {\bibinfo {volume} {342}},\ \bibinfo {pages} {771--801} (\bibinfo
  {year} {2016})},\ \Eprint {http://arxiv.org/abs/1407.2926} {1407.2926}
  \BibitemShut {NoStop}%
\end{thebibliography}
\end{document}